\tikzset{every state/.style={inner sep=2pt, minimum size=0pt}}
\tikzset{every loop/.append style={looseness=7, out=135,in=45}}
\spnewtheorem{assumption}[theorem]{Assumption}{\bfseries}{\rmfamily}
\spnewtheorem{fact}[theorem]{Fact}{\bfseries}{\rmfamily}
\spnewtheorem{notation}[theorem]{Notation}{\bfseries}{\rmfamily}
\spnewtheorem{observation}[theorem]{Observation}{\bfseries}{\rmfamily}
\spnewtheorem{defn}[theorem]{Definition}{\bfseries}{\rmfamily}
\spnewtheorem{prop}[theorem]{Proposition}{\bfseries}{\itshape}
\spnewtheorem{lem}[theorem]{Lemma}{\bfseries}{\itshape}
\spnewtheorem{expl}[theorem]{Example}{\bfseries}{\rmfamily}
\spnewtheorem{algo}[theorem]{Algorithm}{\bfseries}{\rmfamily}
\spnewtheorem{rem}[theorem]{Remark}{\bfseries}{\rmfamily}
\spnewtheorem{construction}[theorem]{Construction}{\bfseries}{\rmfamily}
\spnewtheorem{examples}[theorem]{Examples}{\bfseries}{\rmfamily}
\spnewtheorem{example_}[theorem]{Example}{\bfseries}{\rmfamily}
\spnewtheorem*{proofsketch}{Proof Sketch}{\itshape}{\rmfamily}
\NewDocumentCommand{\makecycle}{om}{
	\ensuremath{ \left(~\guest_print_list:nn { #2 } {~{\ }~}~\right)\IfValueT{#1}{^{#1}}}
}
\NewDocumentCommand{\maketuple}{om}{%
\ensuremath{\left(~\guest_print_list:nn { #2 } {~{,\:}~}~\right)\IfValueT{#1}{^{#1}}}
}
\NewDocumentCommand{\makemonad}{om}{
	\ensuremath{ \left\langle~\guest_print_list:nn { #2 } {~{,\ }~}~\right\rangle\IfValueT{#1}{^{#1}}}
}
\NewDocumentCommand{\makegrammar}{om}{
	\ensuremath{ ~\guest_print_list:nn { #2 } {~{\ \,\vert\ \,}~}~\IfValueT{#1}{^{#1}}}
}
\newcommand{\defaultshowkeysformat}[1]{%
\StrSubstitute{#1}{ }{\textvisiblespace}[\TEMP]%
\parbox[t]{\marginparwidth}{\raggedright\normalfont\small\ttfamily\(\{\){\color{red!50!black}\expandafter\seqsplit\expandafter{\TEMP}}\(\}\)}%
}
\renewcommand*\showkeyslabelformat[1]{%
\noexpandarg%
\defaultshowkeysformat{#1}%
}
\newcommand\restr[2]{{
  \left.\kern-\nulldelimiterspace 
  #1 
  \littletaller 
  \right|_{#2} 
  }}
\DeclareMathOperator{\dom}{\textsf{dom}}
\DeclareMathOperator{\supp}{\mathsf{supp}}
\DeclareMathOperator{\alphaequiv}{\equiv_{\alpha}}
\newcommand{\littletaller}{\mathchoice{\vphantom{\big|}}{}{}{}}
\newcommand{\seq}{\subseteq}
\newcommand{\qes}{\supseteq}
\newcommand{\xra}[1]{\xrightarrow{~#1~}}
\newcommand{\xto}[1]{\xra{#1}}
\newcommand{\deriv}[2]{{#1}^{-1}#2}
\newcommand{\Der}{\mathsf{Der}}
\DeclareMathOperator{\GF}{\mathsf{GF}}
\DeclareMathOperator{\LF}{\mathsf{LF}}
\newcommand{\At}{\mathds{A}}
\newcommand{\Ats}{\At^{\!\raisebox{1pt}{\scriptsize$\star$}}}
\newcommand{\Atw}{\At^{\omega}}
\newcommand{\barAs}{\barNames^{\star}}
\newcommand{\barAw}{\barNames^{\omega}}
\newcommand{\op}[1]{\operatorname{\mathsf{#1}}}
\newcommand{\cl}{\op{cl}}
\newcommand{\card}[1]{\left\vert#1\right\vert}
\let\abs\card
\newcommand{\nf}{\ensuremath{\mathsf{nf}}}
\newcommand{\A}{\mathcal{A}}
\renewcommand{\H}{\mathcal{H}}
\newcommand{\tree}{\mathcal{T}}
\newcommand{\teach}{\textsf{T}\xspace}
\newcommand{\learn}{\textsf{L}\xspace}
\newcommand{\tass}{\textsf{TA}\xspace}
\newcommand{\bartree}[1][\barNamess]{\mathcal{T}_{#1}}
\newcommand{\cat}[1]{\mathscr{#1}}
\def\A{\cat A}
\renewcommand{\epsilon}{\varepsilon}
\newcommand{\Perm}{\mathsf{Perm}}
\newcommand{\names}{\At}
\newcommand{\Abstr}[2][\names]{[#1]#2}
\newcommand{\parnom}[2][]{\names_{#1}^{\$\mathbf{#2}}}
\newcommand{\braket}[1]{\langle #1 \rangle}
\newcommand{\fs}{\mathsf{fs}}
\newcommand{\pow}{\mathcal{P}}
\newcommand{\powfs}{\pow_{\fs}}
\newcommand{\Nat}{\mathds{N}}
\renewcommand{\phi}{\varphi}
\newcommand{\runtree}{\textsf{run}}
\newcommand{\set}[1]{\{#1\}}
\newcommand{\setw}[2]{\{#1\,:\,#2\}}
\newcommand{\Lstar}{\ensuremath{\mathsf{L}^{\ast}}\xspace}
\newcommand{\Lsharp}{\ensuremath{\mathsf{L}^{\#}}}
\newcommand{\NLstar}{\ensuremath{\mathsf{NL}^{\ast}}}
\newcommand{\nomNLstar}{\ensuremath{\nu\mathsf{NL}^{\ast}}}
\newcommand{\Jir}{\mathsf{JI}}
\newcommand{\takeout}[1]{\empty}
\tikzset{shiftarr/.style={
        rounded corners,%
        to path={--([#1]\tikztostart.center)
                     -- ([#1]\tikztotarget.center) \tikztonodes
                     -- (\tikztotarget)},
}}
\tikzset{shiftarrr/.style={
        rounded corners,%
        to path={-- ([#1]\tikztostart.center)
                    |- (\tikztotarget)  \tikztonodes},
}}
\tikzset{roundcornerarr/.style={
        rounded corners,%
        to path={--([#1]\tikztostart.south)
                     |- (\tikztotarget) \tikztonodes},
}}
\tikzset{roundcornerarrr/.style={
        rounded corners,%
        to path={ -| (\tikztotarget) \tikztonodes},
}}
\tikzset{roundcornerarrrr/.style={
        rounded corners,%
        to path={ |- (\tikztotarget) \tikztonodes},
}}
\NewDocumentCommand{\thickrarrow}{O{-1.25mm} O{1.2cm}}{%
    \begin{tikzpicture}[scale=0.5, baseline=#1, thick]
        \protect\draw[-latex, ultra thick] (0,0) -- (#2,0);
    \end{tikzpicture}%
}
\NewDocumentCommand{\dashedrarrow}{O{-1.25mm} O{1.2cm}}{%
    \begin{tikzpicture}[scale=0.5, baseline=#1, thick]
        \protect\draw[-latex, very thick, dashed] (0,0) -- (#2,0);
    \end{tikzpicture}%
}
\newcommand{\overbar}[1]{\mkern 1.5mu\overline{\mkern-1.5mu#1\mkern-1.5mu}\mkern 1.5mu}
\newcommand{\mybar}[3]{%
  \mathrlap{\hspace{#2}\overline{\scalebox{#1}[1]{\phantom{\ensuremath{#3}}}}}\ensuremath{#3}
}
\newcommand{\FN}{\mathsf{FN}}
\newcommand{\barA}{{\mybar{0.6}{2.5pt}{A}}}
\newcommand{\barAut}{{\mybar{0.4}{6pt}{\A}}} 
\newcommand{\barNames}{{\mybar{0.7}{1.25pt}{\names}}}
\newcommand{\barNamess}{{\overbar{\names}}}
\newcommand{\ub}{{\mathsf{ub}}}
\newcommand{\N}{\mathds{N}}
\newcommand{\midmid}{\hspace{0.2ex}{\rule[-0.1ex]{0.6pt}{1.65ex}}\hspace{0.2ex}}
\newcommand{\scriptmidmid}{\hspace{0.2ex}{\rule[-0.1ex]{0.6pt}{1.1ex}}\hspace{0.2ex}}
\newcommand{\newletter}[1]{{\midmid}#1}
\newcommand{\scriptnew}[1]{{\scriptmidmid}#1}
\newcommand{\newtreeletter}[1]{{\nu}#1}
\newcommand{\quotient}[2]{{#1}/{#2}} 
\let\originalleft\left
\let\originalright\right
\renewcommand{\left}{\mathopen{}\mathclose\bgroup\originalleft}
\renewcommand{\right}{\aftergroup\egroup\originalright}
\crefname{rem}{Remark}{Remarks}
\crefname{prop}{Proposition}{Propositions}
\crefname{defn}{Definition}{Definitions}
\crefname{expl}{Example}{Examples}
\crefname{lem}{Lemma}{Lemmas}
\renewcommand\section{\@startsection{section}{1}{\z@}%
  {-18\p@ \@plus -4\p@ \@minus -4\p@}%
  {12\p@ \@plus 4\p@ \@minus 4\p@}%
  {\normalfont\large\bfseries\boldmath
    \rightskip=\z@ \@plus 8em\pretolerance=10000 }}
\renewcommand\subsection{\@startsection{subsection}{2}{\z@}%
 {-6\p@ \@plus -2\p@ \@minus -2\p@}%
 {-0.5em \@plus -0.22em \@minus -0.1em}%
 {\normalfont\normalsize\bfseries\boldmath}}                      
\newcommand\mypar{\@startsection{paragraph}{4}{\z@}%
  {-6\p@ \@plus -4\p@ \@minus -4\p@}%
  {-0.5em \@plus -0.22em \@minus -0.1em}%
  {\normalfont\normalsize\bfseries}}
\setlist[enumerate,1]{label=(\arabic*),font=\normalfont,align=left,leftmargin=0pt,labelindent=0pt,listparindent=\parindent,labelwidth=0pt,itemindent=!,topsep=3pt,parsep=0pt,itemsep=3pt,start=1}
\setlist[enumerate,2]{label=(\alph*),font=\normalfont,labelindent=*,leftmargin=*,start=1}
\setlist[itemize]{labelindent=*,leftmargin=*,topsep=5pt,itemsep=3pt}
\setlist[description]{labelindent=*,leftmargin=*,itemindent=-1 em,topsep=2pt}
\begin{document}
    \title{Learning Automata with Name Allocation}
\author{Florian Frank\inst{1}\fnmsep\thanks{Supported as part of the DFG Research and Training Group 2475 \enquote{Cybercrime and Forensic Computing} (grant number 393541319/GRK2475/2-2024).}, Stefan Milius\inst{1}\fnmsep\thanks{Supported by Deutsche Forschungsgemeinschaft (DFG, German Research Foundation) -- project number 517924115.}, Jurriaan Rot\inst{2}\fnmsep\thanks{Partially supported by the NWO grant No.~VI.Vidi.223.096.}, and Henning Urbat\inst{1}\fnmsep\thanks{Supported by Deutsche Forschungsgemeinschaft (DFG, German Research Foundation) -- project number 569130867.}} 

\authorrunning{F.~Frank, S.~Milius, J.~Rot, H.~Urbat}
\institute{Friedrich-Alexander-Universität Erlangen-Nürnberg
   \email{}
  \and Radboud Universiteit Nijmegen
  \email{}}
    \maketitle
    \begin{abstract}
      Automata over infinite alphabets have emerged as a convenient computational model for processing structures involving data, such as nonces in cryptographic protocols or data values in XML documents. We introduce active learning methods for \emph{bar automata}, a species of automata that process finite data words represented as bar strings, which are words with explicit name binding letters. Bar automata have pleasant algorithmic properties. We develop a framework in which \emph{every} learning algorithm for standard deterministic or non-deterministic finite automata over finite alphabets can be used to learn bar automata, with a query complexity determined by that of the chosen learner. The technical key to our approach is the algorithmic handling of $\alpha$-equivalence of bar strings, which allows bridging the gap between finite and infinite alphabets. The principles underlying our framework are generic and also apply to bar Büchi automata and bar tree automata, leading to the first active learning methods for data languages of infinite words and finite trees.
    \end{abstract}
    \section{Introduction}\label{sec:intro}
    Active automata learning is a family of techniques for inferring an automaton from a black-box system, by interacting with this system and making observations about its behaviour. 
    Originally introduced by Angluin~\cite{angluin87}, the celebrated~\Lstar algorithm allows to effectively learn deterministic finite automata in this way. Since her work, automata learning has been combined with model checking and conformance testing~\cite{DBLP:journals/jalc/PeledVY02}, turning it into an effective tool for bug finding. Indeed, automata learning algorithms have been used to analyze and learn models of network protocols (e.g.~\cite{DBLP:conf/ndss/Fiterau-Brostean23,FiterauEtAl17,FJV16}), legacy code~\cite{AslamCSB20,SHV16}, embedded software~\cite{SmeenkMVJ15} and interfaces of software components~\cite{HowarISBJ12}; see~\cite{DBLP:journals/cacm/Vaandrager17,HowarS16} for further references.
	
    Angluin's original \Lstar algorithm has been improved in various ways. State-of-the-art algorithms such as $\mathsf{TTT}$~\cite{ihs14} and \Lsharp~\cite{vgrw22} may substantially reduce the number of
    queries needed during learning. Orthogonally there have been numerous extensions of \Lstar-type
    algorithms to models beyond classical deterministic finite automata, including non-deterministic finite automata~\cite{bhkl09}, Mealy machines~\cite{MargariaNRS04}, quantitative automata~\cite{bm15,hkrs20},
    tree automata~\cite{dh03,k13}, automata for languages of infinite
    words~\cite{af16,MalerP95,fcctw08,lczl21,blls25}, and, most relevant for this paper, automata over infinite alphabets, namely register automata~\cite{DBLP:conf/tacas/DierlFHJST24,CasselHJS16,CEGAR12,BolligHLM13}, data automata~\cite{dhlt14}, and nominal automata~\cite{mssks17}. Infinite alphabets represent data, for
    example, nonces in cryptographic protocols~\cite{KurtzEA07}, data
    values in XML documents~\cite{NevenEA04}, object
    identities~\cite{GrigoreEA13}, or parameters of method
    calls~\cite{HowarEA19}. The principal challenge underlying all learning algorithms for data languages is to cleverly represent infinite data by finite means.

    All the above learning algorithms for data languages apply to \emph{deterministic} automata models over \emph{finite} data words. Going beyond this setting turns out to be substantially more challenging than in the case of finite alphabets. The only non-trivial learning algorithm so far for non-deterministic automata over infinite alphabets, due to Moerman and Sammartino~\cite{ms22}, applies to \emph{residual} non-deterministic register automata, an (undecidable) semantic subclass of non-deterministic register automata over finite words. In terms of expressivity, this class lies strictly between deterministic and general non-deterministic register automata. No non-trivial learning algorithm for data languages of trees or infinite words is known. For the case of infinite words, the difficulty lies in the fact that data languages accepted by non-deterministic (Büchi) register automata are not uniquely representable by their ultimately periodic words. In the setting of finite alphabets, such a representation is crucial and constitutes the basis for all existing learning algorithms for regular $\omega$-languages~\cite{af16,MalerP95,fcctw08,lczl21,blls25}.

\mypar{Contribution}  We approach the problem of actively learning non-deterministic register automata by studying \emph{bar automata}, a species of automata over infinite alphabets introduced in recent years in versions for finite data words (\emph{bar word automata})~\cite{skmw17}, infinite data words (\emph{bar Büchi automata})~\cite{uhms21} and finite data trees (\emph{bar tree automata})~\cite{ps24}. Bar automata yield a finite representation of corresponding nominal models (coalgebras over nominal sets) with explicit name allocation, namely regular non-deterministic nominal automata (RNNAs)~\cite{skmw17}, Büchi RNNAs~\cite{uhms21} and regular non-deterministic nominal tree automata (RNTAs)~\cite{uhms21}.

The key difference between bar automata and other models for data languages, such as non-deterministic register automata~\cite{KaminskiFrancez94,KaminskiZeitlin10} or the equivalent non-deterministic nominal orbit-finite automata~\cite{BojanczykEA14}, is the use of \emph{binding transitions} while at the same time restricting to {finite} rather than orbit-finite branching. In this way, bar automata retain a reasonable level of expressivity; they correspond to a natural \emph{syntactic} subclass of non-deterministic register automata (unlike residual automata) with certain lossiness conditions, and are incomparable to deterministic ones. The central feature of bar automata are their pleasant algorithmic properties: most notably, language inclusion is decidable in space
polynomial in the size of the automata and exponential in the \emph{degree} (number of registers). This is in sharp contrast to standard non-deterministic register automata where inclusion is undecidable for automata with more than two registers~\cite{KaminskiFrancez94} (or one register in terms of the definition by Demri und Lazi\'c~\cite{DemriLazic09}). 

The standard semantics of bar automata is at the level of \emph{bar languages}. For
instance, bar word automata consume \emph{bar strings}, which are finite words formed from
\emph{plain letters} $a$ and \emph{bar letters} $\newletter a$ with $a$ taken from an
infinite domain $\names$ of names (representing data values). Intuitively, a bar string is a
pattern that determines how letters are read from the input and stored in memory for future
comparison, where an occurrence of $\newletter a$ is interpreted as reading an input letter
and binding this letter to the name $a$. Accordingly, bar word languages are sets of bar
strings modulo an obvious notion of $\alpha$-equivalence. Similar principles apply to bar
languages of infinite bar strings and bar trees. Bar languages have direct uses as models of
terms in the $\lambda$/$\pi$/$\mu$-calculus; for instance, in~\cite[Ex.~3.7]{ps24} a
property of $\pi$-terms is modelled by suitable bar tree automata. More importantly, bar
languages represent data languages (i.e.~sets of words/trees over $\names$ without bar
letters) in two ways corresponding to two slightly different disciplines of
$\alpha$-renaming of bound names: \emph{global freshness} means that binding transitions
read names that have not occurred before (as in session automata~\cite{BolligEA14}), 
\emph{local freshness} means that names are not currently stored in memory (as in register
automata).

Our main contribution is an active learning method for bar automata that applies {uniformly} to bar word, bar Büchi and bar tree automata.
More specifically, we demonstrate that \emph{any} learning algorithm for classical automata over
(in)finite words or trees can be extended to a learning algorithm for the associated type of
bar automata; our extension is agnostic to the
choice of the underlying learning algorithm. The key technical idea to our approach is to reduce the problem of
learning an unknown bar language to learning a canonical representation of it over a restricted, finite
subalphabet. This representation, whose language is closed under
$\alpha$-equivalence w.r.t.~to the finite subalphabet, is itself a regular language over a
finite alphabet, making it amenable to classical learning.
In the case of
infinite words, this reduction process allows us to work with ultimately periodic strings only
at the level of finite alphabets, bypassing the above problems.

However, the reduction is far from immediate: there is a mismatch in the types of counterexamples that the learner receives. To see this, let us recall Angluin's \emph{minimally adequate teacher} (MAT)
framework, which allows the learner to pose membership queries (`is this input in the unknown language?') and equivalence queries (`is this hypothesis automaton~$\H$ correct?') to an oracle. For equivalence queries,
the oracle returns a counterexample whenever~$\H$ is incorrect. This is precisely where the difficulty lies: this counterexample may not be useful to learn
the canonical representation of the language over the finite subalphabet.

The core ingredient to processing counterexamples and resolving this mismatch is to effectively
find $\alpha$-equivalent words over that subalphabet. We therefore introduce new techniques for
checking $\alpha$-equivalence of bar strings and trees.

Our
methods for deciding $\alpha$-equivalence are non-trivial and of independent
interest, as they do not just arise in automata learning but potentially also in other
algorithms handling bar automata, such as minimization, reduction, or model checking.  By combining these
techniques for $\alpha$-equivalence with our reductions to known learning algorithms for
word/tree automata or automata over infinite words, we obtain learning algorithms for bar
languages over each of these variants. By learning bar automata and interpreting them under local or global freshness, our algorithms can be understood as learning data languages. The classes of data languages learnable in this way are incomparable to existing learning algorithms for register or nominal automata; see `Related Work' and \Cref{sec:bar-automata}. 

In summary, the main contributions of this paper are twofold:
(1) We give a reduction of the problem of learning bar automata to that of learning classical automata for regular languages of finite words, infinite words, or finite trees; and (2) we develop an approach for effectively
checking $\alpha$-equivalence of finite and infinite bar strings as well as bar trees.

\mypar{Related Work} Learning automata models for data languages of finite words without explicit binding is an active area of research.
Bollig et al.~\cite{BolligHLM13} introduced
a learning algorithm for \emph{session automata}, which are register automata requiring some
data values to be \emph{fresh} for the whole string. They are equivalent to a subclass of bar
word automata under global freshness~\cite{skmw17}.  Their learning
algorithm
uses a somewhat similar idea to ours in that it reduces the learning problem to the \Lstar algorithm for deterministic
finite automata. Our algorithm can learn a non-deterministic, hence more succinct representation of session automata.

Cassel et al.~\cite{CasselHJS16} and Dierl et al.~\cite{DBLP:conf/tacas/DierlFHJST24} learn \enquote{determinate} register automata on finite words, which are semantically equivalent to deterministic ones and are incomparable to bar automata under local freshness.
Similarly, Aarts et al.~\cite{afkv15} learn deterministic register automata with outputs which are again incomparable to bar automata under local freshness.
Moerman et al.~\cite{mssks17,ms22} present learning algorithms for deterministic nominal automata as well as for \emph{residual} non-deterministic nominal automata on finite
words. Residual nominal automata are incomparable to bar automata under local freshness (\Cref{ex:nonresidual}).

Sakamoto~\cite{saka97} proposes a learning algorithm for a subclass of deterministic register automata on finite
words, parametric in a learning algorithm for DFAs. These automata are incomparable to bar automata under local freshness.

To our knowledge, there is no prior work on learning Büchi or tree data
languages. Our paper yields learning algorithms for subclasses of both kinds.

 \section{Active Automata Learning}\label{sec:active-learning} Automata learning is about systematically inferring an automaton for an
    \emph{unknown} language~$L_\teach$. Most approaches are based on Angluin's framework~\cite{angluin87} of a \emph{minimally adequate teacher} (MAT), a game between a \emph{learner} that aims to infer~$L_\teach$, and a \emph{teacher} that the learner can ask for pieces of information about $L_\teach$. To this end, the learner runs an algorithm generating a sequence $\H_1,\H_2,\H_3,\ldots$ of automata (\emph{hypotheses}) yielding increasingly better approximations of $L_\teach$. To improve the current hypothesis $\H_i$, the learner can direct two types of questions to the teacher: \emph{membership queries} (`is a given input in $L_\teach$?') and \emph{equivalence queries} (`is the hypothesis $\H_i$ correct, i.e.~accepting $L_\teach$?'). The teacher's reply to an equivalence query is either `Yes', in which case the learner has successfully inferred~$L_\teach$, or a \emph{counterexample}, that is, an input on which $\H_i$ and $L_\teach$ differ. Note that inputs of membership queries are chosen by the learner, making this setting \emph{active}. In contrast, in \emph{passive} learning only a fixed set of  examples of elements and non-elements of $L_\teach$ is given. Active learning has been studied for numerous automata models. We consider three cases:

\mypar{Finite Automata} A \emph{(non-)deterministic finite automaton} (\emph{DFA}/\emph{NFA}) $\A = (Q, A, \to, q_0, F)$ is given by a finite set $Q$ of states, a finite input alphabet $A$, an initial state $q_0\in Q$, a set $F\seq Q$ of final states, and a transition relation $\to\,\seq Q\times A\times Q$, which in the deterministic case forms a function from $Q\times A$ to $Q$. The \emph{language $L(\A)\seq A^*$ accepted by $\A$} is the set of all finite words $w=a_1\cdots a_n$ over $A$ for which there exists an accepting run, that is, a sequence of transitions $q_0\xto{a_1} q_1\xto{a_2}\cdots \xto{a_n} q_n$ ending in a final state. Languages accepted by finite automata are called \emph{regular}. In MAT-based learning of finite automata, one assumes an unknown regular language $L_\teach\seq A^*$ (with known alphabet $A$) and admits the learner to put two types of queries to the teacher:
    \begin{description}
        \item[Membership Queries $(\textsf{MQ})$:] Given $w \in A^*$, is $w\in L_\teach$?

        \item[Equivalence Queries $(\textsf{EQ})$:] Given a hypothesis (i.e.~a finite automaton) $\H$,
            is $L(\H)=L_\teach$? If not, then the teacher returns a word in $L(\H)\oplus L_\teach$. 
    \end{description}
(Here $X\oplus Y=X\setminus Y\cup Y\setminus X$ is the \emph{symmetric difference} of sets $X$ and $Y$.)
   The classical learning algorithm for DFAs is Angluin's \Lstar~\cite{angluin87}, which learns the minimal DFA for $L_\teach$ with a number of queries polynomial in the number of states of that DFA and the maximum length of the counterexamples provided by the teacher. Several improvements of \Lstar have been proposed~\cite{rs93,kv94,ihs14,vgrw22}, based on clever processing of counterexamples and/or enhanced data structures for representing the information gained so far. Additionally, \Lstar has been adapted by Bollig et al.~\cite{bhkl09} to a learning algorithm \NLstar for (residual) NFAs.

\mypar{Büchi Automata} A \emph{Büchi automaton} is an NFA $\A = \maketuple{Q, A, \to, q_0, F}$ interpreted over infinite words. The
                \emph{language} $L(\A)\seq A^\omega$ \emph{accepted by $\A$} is given by those infinite words $w=a_1a_2a_3\cdots$ with an accepting run, that is, an infinite sequence of transitions $q_0\xto{a_1} q_1\xto{a_2} q_2\xto{a_3} \cdots$ where some final state occurs infinitely often. Languages accepted by Büchi automata are called \emph{regular $\omega$-languages}. Any such language is uniquely determined by its \emph{ultimately periodic words}, i.e.~words of the form $uv^\omega$ with $u \in A^*$ and $v \in A^+$~\cite{mcn66}. An ultimately periodic word $uv^\omega$ can be represented by the pair $(u,v)$. This representation is not unique since different pairs $(u_1,v_1)$ and $(u_2,v_2)$ may satisfy $u_1v_1^\omega=u_2v_2^\omega$. 

MAT-based learning of an unknown regular $\omega$-language $L_\teach\seq A^\omega$ involves
\begin{description}
        \item[Membership Queries $(\textsf{MQ})$:] Given $\maketuple{u, v} \in
            A^* \times A^+$, is $uv^\omega\in L_\teach$?
        \item[Equivalence Queries $(\textsf{EQ})$:] Given a hypothesis (i.e.~a Büchi automaton) $\H$,
            is $L(\H)=L_\teach$? If not, then the teacher returns a pair $\maketuple{u, v} \in A^* \times A^+$ such that $uv^\omega \in L(\H) \oplus L_\teach$.
    \end{description}
    There are active learning algorithms for regular $\omega$-languages that infer Büchi automata~\cite{fcctw08}
    or their representation via families of DFAs~\cite{lczl21,blls25}. These build on the standard \Lstar{} algorithm to learn the set of
    ultimately periodic words of $L_\teach$, from which a Büchi automaton can be derived.

\mypar{Tree Automata}  A \emph{signature}
    $\Sigma$ is a set of symbols $f,g,h,\cdots$ each with a finite \emph{arity} $n \in \Nat$. We denote an $n$-ary symbol $f$ by $\nicefrac{f}{n}$. A \emph{$\Sigma$-tree} is a finite ordered tree where every node is labeled with some symbol $f\in \Sigma$ and its number of successors is the arity of $f$. Thus, $\Sigma$-trees correspond to syntax trees, or equivalently closed terms over the signature $\Sigma$. We write $\mathcal{T}_\Sigma$ for the set of {$\Sigma$-trees}. A \emph{(non-)deterministic bottom-up finite tree automaton} (\emph{DFTA/NFTA})
    $\A = \maketuple{Q, \Sigma, \Delta, F}$ is given by a finite set $Q$ of states, a finite signature $\Sigma$, a set $F\seq Q$ of final states, and a transition relation $\Delta \seq (\coprod_{f/n\in \Sigma} Q^n)\times Q$, which in the deterministic case forms a function from the disjoint union $\coprod_{ f/n\in \Sigma} Q^n$ to~$Q$. The \emph{language}
                $L(\A)\seq \mathcal{T}_\Sigma$ \emph{accepted by $\A$} is given by those $\Sigma$-trees $t$ with an accepting run, i.e.~where the nodes of $t$ can be labeled with states from $Q$ in such a way that the labels respect transitions (if the node $f(t_1,\ldots,t_n)$ has label $q$ and $t_1,\ldots,t_n$ have labels $q_1,\ldots,q_n$ then $((f,q_1,\ldots,q_n),q)\in \Delta$) and the root is labeled with a final state. Languages accepted by finite tree automata are called \emph{regular tree languages}. Learning an unknown regular tree language $L_\teach\seq \mathcal{T}_\Sigma$ in the MAT framework involves
    \begin{description}
        \item[Membership Queries $(\textsf{MQ})$:] Given $t\in \mathcal{T}_\Sigma$, is $t\in L_\teach$?
        \item[Equivalence Queries $(\textsf{EQ})$:] Given a hypothesis (i.e.~a \emph{tree automaton}) $\H$,
            is $L(\H)=L_\teach$? If not, then the teacher returns a tree $t\in L(\H)\oplus L_\teach$.
    \end{description}
    Active learning algorithms for regular tree languages have been studied for DFTAs~\cite{dh03} and 
    (residual) NFTAs~\cite{k13}.

\section{Bar Languages and Automata}\label{sec:prelim}
We aim to extend the scope of the above learning algorithms to languages and automata over infinite alphabets, specifically to \emph{bar languages} and \emph{bar automata}. We next review several incarnations of bar languages, and their associated automata models, introduced in earlier work~\cite{skmw17,uhms21,ps24}. 

\subsection{Nominal Sets}
Bar languages are most conveniently presented within the framework of nominal sets~\cite{Pitts2013}, which offers an abstract approach to dealing with notions of name binding, $\alpha$-equivalence, and freshness. Let us recall some basic terminology from the theory of nominal sets.

    For the rest of the paper, we fix a countably infinite set $\names$ of \emph{names}, which for our purposes play the role of \emph{data values}. A \emph{finite permutation of $\names$} is a bijective map $\pi\colon \names\to\names$ such that $\pi(a) = a$ for all but finitely many $a\in \names$.
    We denote by $\Perm(\names)$ the group of all {finite permutations}, with multiplication given by composition.
    The group $\Perm(\names)$ is generated by the \emph{transpositions} $\makecycle{a, b}$ for $a \neq b \in \names$,
    where $\makecycle{a, b}$ swaps $a$ and $b$ while fixing all $c\in \names\setminus \set{a, b}$.
    A \emph{$\Perm(\names)$-set} is a set~$X$ equipped with a group action $\cdot\colon \Perm(\names) \times X \to X$,
    denoted by $(\pi,x)\mapsto \pi\cdot x$. A subset $S\seq \names$ \emph{supports} the element $x\in X$ if $\pi \cdot x = x$ for every
    $\pi\in\Perm(\names)$ such that $\pi(a) = a$ for all $a \in S$. A \emph{nominal set} is a $\Perm(\names)$-set $X$ such that every element $x\in X$ has a finite support.  
    This implies that $x$ has a least finite support, denoted by $\supp(x) \seq \names$. A name $a \in \names$ is \emph{fresh} for~$x$ if $a \notin \supp(x)$. Intuitively, we think of an element $x$ of a nominal set as some sort of syntactic object (e.g.~a string, tree, term), of $\supp(x)\seq \names$ as the (finite) set of names occurring freely in $x$, and of $\pi\cdot x$ as the result of renaming all free names in $x$ according to the permutation $\pi$ (see \Cref{ex:nominal} below). 

A subset $Y$ of a nominal set $X$ is \emph{equivariant} if $y\in Y$ implies $\pi\cdot y\in Y$ for all $\pi\in\Perm(\names)$. We write $X\times Y$ for the cartesian product of a pair $X,Y$ of nominal sets with coordinatewise action. Given a nominal set $X$ equipped with an equivariant equivalence relation $\approx\,\seq X\times X$, we write $\quotient{X}{\approx}$ for the nominal quotient set with the group action $\pi \cdot [x]_\approx = [\pi\cdot x]_\approx$.

   Finally, we need the concept of \emph{abstraction sets}, which play a vital role in the theory of nominal sets and provide semantics for binding mechanisms~\cite{gp99}.
    Given a nominal set $X$, we define the equivariant equivalence relation $\approx$ on $\names \times X$ by $\maketuple{a,x} \approx \maketuple{b,y}$ iff $\makecycle{a,c} \cdot x=\makecycle{b,c} \cdot y$ for some, or equivalently all, names $c$ that are fresh for $a$, $b$, $x$, $y$.
    The abstraction set $\Abstr{X}$ is the nominal quotient set $\quotient{(\names\times X)}{\approx}$. The $\approx$-equivalence
    class of $\maketuple{a,x} \in \names \times X$ is denoted by $\braket{a}x$.
    We may think of $\approx$ as an abstract notion of $\alpha$-equivalence and of $\braket{a}x$ as binding the name $a$ in $x$.
    Indeed, we have $\supp(\braket{a}x) = \supp(x) \setminus \set{a}$ (while $\supp\maketuple{a,x} = \set{a} \cup \supp (x)$), as expected in binding constructs.

    \begin{expl}\label{ex:nominal}
        The set $\names$ with the $\Perm(\names)$-action $\pi\cdot a = \pi(a)$ is a nominal set, as is the set
        $\Ats$ of finite words over $\names$ with the letterwise action $\pi\cdot (a_1\cdots a_n) =\pi(a_1)\cdots\pi(a_n)$.
        The least support of $a \in \names$ is the singleton set $\set{a}$, while the least support of $a_1\cdots a_n\in \Ats$ is the set $\{a_1,\ldots,a_n\}$ of its letters. 
        The abstraction set $\Abstr{\Ats}$ identifies two distinct pairs $\maketuple{a, w}$ and $\maketuple{b, v}$ iff (i) $a\neq b$, (ii) $a$ does not occur in $v$, (iii) $b$ does not occur in $w$, and (iv) $v$ emerges from $w$ by replacing $a$ with $b$ (i.e.~ $v=\makecycle{a,b}\cdot w$). For instance, $\braket{a}aa=\braket{b}{bb}$ in $\Abstr{\Ats}$, but $\braket{a}ab \neq \braket{b}bb$
        if $a \neq b$. Similarly, $\braket{b}bcb = \braket{a}aca \neq \braket{b}bbc$.
    \end{expl}

\subsection{Bar Languages}
We will work with languages of finite words, infinite words, and finite trees with binding constructs, called \emph{bar word languages}~\cite{skmw17}, \emph{bar $\omega$-languages}~\cite{uhms21}, and \emph{bar tree languages}~\cite{ps24}. Bar languages represent \emph{data languages}, i.e.~languages over the infinite alphabet $\names$ of data values.

    \mypar{Data and Bar Word Languages} A \emph{bar string} is a finite word over $\At$ (the data domain) where a bar symbol (\enquote{$\midmid$}) might precede names to indicate that the next letter is bound until the end of the word.
    Intuitively, a bar string can be seen as a pattern that determines the way letters are read from the input: an occurrence of $\newletter{a}$ corresponds to reading a letter from the input and binding this letter to the name $a$, while a free (i.e.~unbound) occurrence of $a$ means that $a$ occurs literally in the input.
    Bound names can be \emph{renamed}, giving rise to a notion of $\alpha$-equivalence of bar strings.
    The new name must be \emph{fresh}, i.e.~cannot occur freely in the scope of the binding.
    For instance, in $ba\newletter{b}ab$ the $\midmid$ binds the letter $b$ in $\newletter{b}ab$.
    The string $ba\newletter{b}ab$ therefore is $\alpha$-equivalent to $ba\newletter{c}ac$, but not to $ba\newletter{a}aa$, since $a$ occurs freely in $\newletter{b}ab$.

   These intuitions are formalized as follows. We put $\barNames = \names \cup \setw{\newletter{a}}{a \in \names}$ and refer to elements $\newletter{a}$ of $\barNames$ as \emph{bar names}, and to elements $a \in \names$ as \emph{plain names}.
        A \emph{bar string} is a finite word $w = \alpha_1 \cdots \alpha_n \in \barAs$, with \emph{length} $\abs{w} = n$.
        We turn $\barNames$ into a nominal set with the group action given by $\pi \cdot a = \pi(a)$ and $\pi \cdot \newletter{a} = \newletter{\pi(a)}$; then also $\barNames^*$ is a nominal set with group action $\pi\cdot (\alpha_1\cdots \alpha_n)=(\pi\cdot \alpha_1)\cdots (\pi\cdot \alpha_n)$.
        We define \emph{$\alpha$-equivalence} on bar strings to be the equivalence relation generated by $w\newletter{a}v \alphaequiv w\newletter{b}u$ if $\braket{a}v = \braket{b}u$ in $\Abstr{\barAs}$, and
        write $[w]_\alpha$ for the $\alpha$-equivalence class of $w$. 
        A name $a$ is \emph{free} in a bar string $w$ if there is an occurrence of the plain name $a$ in $w$ that is to the left of the first occurrence (if any) of $\newletter{a}$.
        We write $\FN(w)$ for the set of free names in $w$.
        A bar string $w$ is \emph{closed} if $\FN(w) = \emptyset$.
        It is \emph{clean} if all bar names $\newletter{a}$ in $w$ are pairwise
        distinct, and for all bar names $\newletter{a}$ in $w$ one has $a \notin \FN(w)$. 
    \begin{expl} We have $\FN(ba\newletter{b}ab)=\{a,b\}$, and $ba\newletter{b}ab\alphaequiv ba\newletter{c}ac$ for all $c\neq a$.
    \end{expl}
  Finite bar strings give rise to three different types of languages:
    \begin{defn}
        A \emph{data word language} or \emph{literal word language} is, respectively, a subset of $\Ats$ or $\barAs$.
        A \emph{bar word language} is a literal language $L\seq \barAs$ closed under $\alpha$-equivalence, that is, if $w\in L$ and $w\alphaequiv w'$ then $w'\in L$.
    \end{defn}
    \begin{rem}
     Schröder et al.~\cite{skmw17} defined bar word languages as subsets of the quotient $\quotient{\barAs\!}{\alphaequiv}$. This definition is equivalent to ours: A bar word language $L\seq \barAs$ can be identified with the subset $L'\seq \quotient{\barAs\!}{\alphaequiv}$ given by $L'= \set{[w]_\alpha \mid w\in L}$, and conversely every $L'\seq \quotient{\barAs\!}{\alphaequiv}$ yields the bar word language $L=\set{w\in \barAs \mid [w]_\alpha\in L'}$. These constructions are mutually inverse. An analogous remark also applies to bar $\omega$-languages and bar tree languages introduced below. We prefer the present definition, as it allows for simpler notation.
    \end{rem}
    Every bar word language $L \seq \barAs$ can be converted into a data word language by interpreting name binding as reading either \emph{globally fresh} letters (letters that have not been read before) or \emph{locally fresh} letters (letters not currently stored in memory).
    These two interpretations arise from two disciplines of $\alpha$-renaming as known from $\lambda$-calculus~\cite{barendregt85}, with \emph{global freshness} corresponding to a discipline of \emph{clean} renaming where bound names are never shadowed and \emph{local freshness} corresponding to an unrestricted naming discipline that allows shadowing. Formally, let $\ub(w)\in \Ats$ emerge from $w\in \barAs$ by erasing all bars; e.g.\ $\ub(ba\newletter{b}ab)=babab$. We define the data languages $\GF(L),\LF(L)\seq \Ats$ by
    \begin{equation}\label{eq:ND}
    \GF(L) = \setw{\ub(w)}{\text{$w \in L$, $w$ is clean}}\quad \text{and}\quad \LF(L) = \setw{\ub(w)}{w \in L}.
    \end{equation}
    Thus $\GF(L)$ and $\LF(L)$ yield a global and local freshness interpretation of $L$.\footnote{In the original paper~\cite{skmw17}, the operators $\GF$ and $\LF$ were called $N$ and $D$, resp.}

    The operator $\GF$ is injective (in fact, $L \seq L'$ iff $\GF(L) \seq \GF(L')$) on bar word
    languages containing only closed bar strings~\cite[Lemma~A.3]{skmw17}. Hence, such bar languages can be identified with their corresponding data languages under global freshness semantics.

    \mypar{Data and Bar $\mathbf{\omega}$-Languages}
    In addition to finite bar strings, we also consider \emph{infinite bar strings}, which are infinite words
    over $\barNames$. We let $\barAw$ denote the set of infinite bar strings; unlike $\barAs$ this is not a nominal set under the letterwise group action since infinite bar strings need not be finitely supported.  
  Free names are defined analogously to finite bar strings. Moreover, the notion of $\alpha$-equivalence extends to
     infinite bar strings as follows~\cite{uhms21}:

    \begin{defn}
        Two infinite bar strings $v, w \in \barAw$ are \emph{$\alpha$-equivalent}, denoted $v \alphaequiv w$, iff
        $v_n \alphaequiv w_n$ holds for all $n \in \Nat$, where $v_n$ and $w_n$ are the prefixes of length $n$ of $v$
        and $w$.
    \end{defn}

    Like in the case of finite bar strings, we obtain three types of languages:

    \begin{defn}
        A \emph{data $\omega$-language} or \emph{literal $\omega$-language} is, respectively, a subset of $\Atw$ or~$\barAw$.
        A \emph{bar $\omega$-language} is a literal $\omega$-language closed under $\alpha$-equivalence.
    \end{defn}

Conversions of bar $\omega$-languages into data $\omega$-languages under global and local freshness are analogous to the case of
    bar word languages; see \cite{uhms21} for details.
    \mypar{Data and Bar Tree Languages} Lastly, we consider \emph{bar tree
    languages}, recently introduced by Prucker and Schröder~\cite{ps24} (under the name \emph{alphatic tree languages}) as
    a common generalization of bar word languages and classical tree languages. We fix a finite {signature}~$\Sigma$. A \emph{bar $\Sigma$-tree} is a tree over the (infinite) signature $\barNames\times \Sigma$ that contains one symbol $\nicefrac{\alpha.f}{n}$ for every $\alpha\in\barNames$ and every $\nicefrac{f}{n}\in \Sigma$.%
\footnote{Prucker and Schröder
    write $\newtreeletter{a}.f$ instead of $\newletter{a}.f$; we prefer the latter for consistency.} We write $\bartree(\Sigma)$ for the set of {bar} {$\Sigma$-trees}, and $\bartree[S](\Sigma)\seq \bartree(\Sigma)$ for the subset of bar trees using only letters from $S\seq \barNames$. Then $\bartree(\Sigma)$ forms a nominal set with the group action defined recursively by
    $\pi \cdot (\alpha.f(t_1, \dots, t_n)) = (\pi \cdot \alpha).f(\pi\cdot t_1, \dots, \pi\cdot t_n)$.
    We obtain a notion of \emph{$\alpha$-equivalence} of bar $\Sigma$-trees by taking $\alphaequiv$ to be
    the least congruence on~$\bartree(\Sigma)$ generated by $\newletter{a}.f(t_1, \dots, t_n) \alphaequiv \newletter{b}.f(t_1',
    \dots, t_n')$ whenever $\braket{a}t_i = \braket{b}t_i'$ for $1 \leqslant i \leqslant n$.
    The set $\FN(t)$ of \emph{free names} of a bar $\Sigma$-tree $t$ is defined recursively in the expected way,
    namely by $\FN(a.f(t_1, \dots, t_n)) = \set{a} \cup \bigcup_{i = 1}^n \FN(t_i)$ and
    $\FN(\newletter{a}.f(t_1, \dots, t_n)) = \left(\bigcup_{i = 1}^n \FN(t_i)\right) \setminus \set{a}$. As in the case of finite and infinite words,  we work with three different types of languages:

    \begin{defn}
        A \emph{data tree language} or \emph{literal tree language} is, respectively, a subset of~$\bartree[\names](\Sigma)$ or~$\bartree(\Sigma)$.        
        A \emph{bar tree language} is a literal tree language closed under $\alpha$-equivalence.
    \end{defn}

   Conversions of bar tree languages into data tree languages under global and local freshness are analogous to the case of
    bar word languages; see \cite{ps24} for details.

\begin{rem}
In the following, we let the term \emph{bar language} refer to either type  of language (bar word/\mbox{$\omega$-/tree} language) introduced above. Given a bar language~$L$, we write $L\restriction \barNames_0$ for its restriction to the subalphabet $\barNames_0\seq \barNames$: \[L\restriction \barNames_0 = L\cap S \qquad\text{where}\qquad 
  S \in \set{\barNames_0^*, \barNames_0^\omega, \bartree[{\barNamess}_0](\Sigma)}.\]
\end{rem}

\subsection{Bar Automata}\label{sec:bar-automata} While data languages are commonly represented by register automata~\cite{KaminskiFrancez94}, or equivalent models over nominal sets~\cite{BojanczykEA14}, their representation by bar languages allows for an alternative and conceptually much simpler approach: every standard automata model for languages over finite alphabets can be used as a model for bar languages by restricting to finite alphabets $\barNames_0\seq \barNames$, i.e.~finite sets of bar names and plain names, and reinterpreting the usual accepted language up to $\alpha$-equivalence.
This principle has been implemented for 
classical finite automata, Büchi automata, and tree automata (\Cref{sec:active-learning}) and leads to:

    \begin{defn}\label{defn:barautomata}
        \begin{enumerate}
            \item A \emph{bar DFA}/\emph{bar NFA}~\cite{skmw17} is a DFA/NFA $\A$ with finite input alphabet $\barNames_0\seq \barNames$. The \emph{literal language} $L(\A) \seq \barNames_0^*$ \emph{accepted by $\A$} is the usual accepted language. The \emph{bar language} 
                $L_\alpha(\A) \seq \barAs$ \emph{accepted by $\A$} is the closure of $L(\A)$ under $\alpha$-equivalence, that is, $L_\alpha(\A)=\set{w\in \barAs \mid \exists w'\in L(\A).\, w\equiv_\alpha w'}$.
            \item A \emph{bar Büchi automaton}~\cite{uhms21} is a Büchi automaton $\A$ with finite input alphabet $\barNames_0\seq \barNames$. The
                \emph{literal language} $L(\A)\seq \barNames_0^\omega$ \emph{accepted by $\A$} is the usual accepted language. The \emph{bar language} $L_\alpha(\A) \seq \barAw$ \emph{accepted by $\A$} is its closure under $\alpha$-equivalence. 
            \item A \emph{bar DFTA/NFTA}~\cite{ps24} over the finite signature $\Sigma$ is a DFTA/NFTA $\A$ over the signature $\barNames_0\times \Sigma$ for some finite subset $\barNames_0\seq \barNames$. The \emph{literal language}
                $L(\A)\seq \bartree[\barNamess_0](\Sigma)$ \emph{accepted by $\A$} is the usual accepted language. The \emph{bar language} $L_\alpha(\A)\seq \bartree(\Sigma)$ \emph{accepted by $\A$} is its closure under $\alpha$-equivalence.
        \end{enumerate}
    \end{defn}

\begin{rem} 
We let the term \emph{bar automaton} refer to either automaton type above. A \emph{bar word/tree automaton} is either a bar DFA/DFTA or a bar NFA/NFTA.
\end{rem} 

    \begin{expl} \label{ex:barnfa}
      Consider the following bar NFA over  $\barNames_0 := \{\newletter{a},\newletter{b},\newletter{c},\ \newletter{d},a,b,c\}$:
      \begin{center}
          \begin{tikzpicture}[shorten >=1pt,node distance=1.5cm, auto, initial text={}]
              \node[state, initial] (q0) {$q_0$};
              \node[state] (q1) [right of=q0] {$q_1$};
              \node[state] (q2) [right of=q1] {$q_2$};
              \node[state] (q3) [right of=q2] {$q_3$};
              \node[state] (q4) [right of=q3] {$q_4$};
              \node[state] (q5) [right of=q4] {$q_5$};
              \node[state, accepting] (qf) [right of=q5] {$q_f$};
              \path[->] (q0) edge[loop above] node[midway, above] {$\newletter{a}$} ();
              \path[->] (q2) edge[loop above] node[midway, above] {$\newletter{c},\, a,\, b$} ();
              \path[->] (q4) edge[loop above] node[midway, above] {$\newletter{d},\, b,\, c$} ();
              \draw[->] (q0) -- (q1) node[midway, above] {$\newletter{a}$};
              \draw[->] (q1) -- (q2) node[midway, above] {$\newletter{b}$};
              \draw[->] (q2) -- (q3) node[midway, above] {$a$};
              \draw[->] (q3) -- (q4) node[midway, above] {$\newletter{c}$};
              \draw[->] (q4) -- (q5) node[midway, above] {$b$};
              \draw[->] (q5) -- (qf) node[midway, above] {$c$};
          \end{tikzpicture}
      \end{center}
      Its literal language $L$ is given by the regular expression 
      \[(\newletter{a})^*\newletter a \newletter b (\newletter c + a + b)^* a\newletter c(\newletter d + b + c)^*bc.\]
      Let $L_\alpha \seq \barAs$ denote its bar language. Then 
      the induced data language under local freshness is
      $\LF(L_\alpha) = \setw{{u}ab{v}ac{w}bc}{{u}, {v}, {w} \in \Ats,\ 
          a \neq b, b \neq c \in \names}$.
   There is no requirement of $a \neq c$ in the description of $\LF(L_\alpha)$ since in the regular expression
      the letter $a$ does not occur after the last $\newletter{c}$; for instance, the bar string $|a|babb|ca|cbccb|dbc \in L$
      is $\alpha$-equivalent to $|a|babb|ca|abaab|dba$, so $ababbcaabaabdba\in \LF(L_\alpha)$.
    \end{expl}
    \begin{rem}\label{rem:bar-aut-vs-nominal-aut}
      \begin{enumerate}
      \item\label{rem:bar-aut-vs-nominal-aut-1} Under bar language semantics, all of the above types of bar automata admit an
        equi-expressive model of non-deterministic nominal automata with explicit name
        allocation, namely RNNAs~\cite{skmw17}, Büchi RNNAs~\cite{uhms21} and
        RNTAs~\cite{ps24}.  Under data language semantics, bar NFAs relate to standard
        models of nominal automata and register automata (without name allocation) for data
        word languages; specifically, a bar NFA with alphabet $\barNames_0$ can be turned
        into a register automaton with the same number of states and
        $k=\card{\barNames_0\cap \names}$ registers. Under local freshness, this translation
        identifies bar NFAs with the class of \emph{name-dropping} non-deterministic
        register automata with non-deterministic reassignment~\cite{KaminskiZeitlin10}, or
        equivalently, \emph{non-guessing} and \emph{name-dropping} non-deterministic nominal
        automata~\cite{BojanczykEA14}. Under global freshness, bar NFAs are equivalent to
        \emph{session automata}~\cite{BolligHLM13}. For more details,
        see~\cite[Sec.~6]{skmw17}.
        
      \item All of the above types of bar automata correspond to coalgebras for set
        functors~\cite{rutten00}; for example, bar DFAs are coalgebras
        $Q\to \{0,1\}\times Q^{\bar{\names}_0}$ for the endofunctor
        $FQ=\{0,1\}\times Q^{\bar{\names}_0}$. The nominal models (RNNAs, Büchi
        RNNAs, RNTAs) correspond to coalgebras for endofunctors on the category of nominal
        sets~\cite{skmw17,uhms21,ps24}. Moreover, their bar language semantics is captured
        by the general framework of Kleisli-based coalgebraic trace
        semantics~\cite{sbbr13,ush16}, as shown for RNNA by Frank et al.~\cite{fmu22} and
        for Büchi RNNA by Frank~\cite{f22}. This highlights that bar language semantics is
        very natural from a coalgebraic perspective.
      \end{enumerate}
    \end{rem}

    \begin{rem}\label{rem:EquivalentModels}
      Bar DFAs and bar NFAs are equivalent under both bar language and (local or global
      freshness) data language semantics, since bar NFAs can be determinized via the power set
      construction without changing the literal language. The same applies to (bottom-up) bar
      DFTAs and bar NFTAs. This is not a contradiction to the fact that non-deterministic
      register automata are more expressive than deterministic ones~\cite{KaminskiFrancez94},
      since the translations between bar automata and register automata do not preserve
      determinism.
    \end{rem}

As a consequence of \Cref{rem:bar-aut-vs-nominal-aut}\ref{rem:bar-aut-vs-nominal-aut-1}, while our learning algorithms developed below infer bar automata, they can also be used to learn the corresponding nominal/register automata. With regard to the classes of data languages learnable in this way, our learning algorithms are incomparable to existing algorithms for deterministic and non-deterministic nominal/register automata:

    \begin{rem} \label{ex:nonresidual}
     The data language $\LF(L_\alpha) = \setw{{u}ab{v}ac{w}bc}{{u}, {v}, {w} \in \Ats,\ a \neq b, b \neq c \in \names}$ accepted by the bar NFA of~\Cref{ex:barnfa} is not \emph{residual}~\cite{ms22}, so that the algorithm $\nu$\NLstar from \emph{op.~cit.} for learning residual non-deterministic nominal automata does not apply; see appendix. In particular, this language is not accepted by any deterministic register automaton (a subclass of residual automata), hence existing learning algorithms for such automata~\cite{DBLP:conf/tacas/DierlFHJST24,CasselHJS16,CEGAR12,mssks17} also do not apply.
      On the other hand, there exist data languages accepted by deterministic nominal automata that do not allow \emph{name-dropping}~\cite{skmw17}, or residual non-deterministic nominal automata that require \emph{guessing}~\cite[Sec.~3]{ms22}, and hence are not expressible by bar NFAs. This means that our learning algorithm applies to classes of data languages orthogonal to classes captured by previous algorithms.
    \end{rem}

We conclude this section with the important observation that $\alpha$-renaming is computable at the level of bar automata.
    A bar automaton $\A$ is \emph{closed} if its literal language is closed under $\alpha$-equivalence with respect to its
    finite alphabet~$\barNames_0$, that is, 
$L(\A) = L_\alpha(\A) \restriction \barNames_0$. 

    \begin{theorem}[Closure of Bar Automata]\label{prop:barautclosure}
      For every non-deterministic bar automaton over~$\barNames_0$ with $n$ states, there
      exists a closed bar automaton over~$\barNames_0$ with $\mathcal{O}(n\cdot 2^{|\barNamess_0|\cdot (\log{|\barNamess_0|} + 1)})$ states accepting the same bar language.
    \end{theorem}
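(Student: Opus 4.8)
The plan is to realise the $\alpha$-closure by a product of $\A$ with a finite ``renaming tracker'' that simulates on-the-fly $\alpha$-renaming within the fixed alphabet $\barNames_0$. Writing $k=\abs{\barNames_0\cap\names}$ for the number of plain names, the tracker component of a state is a partial injection $\sigma\colon (\barNames_0\cap\names)\rightharpoonup(\barNames_0\cap\names)$; intuitively, as we scan an input bar string $w'$ over $\barNames_0$ from left to right, $\sigma$ records, for each name $b$ currently bound in $w'$, the name $\sigma(b)$ that a witnessing $\alpha$-equivalent word $w$ consumed by $\A$ uses for the same binder. The new automaton $\A'$ has state set $Q\times\Pi_k$, where $\Pi_k$ is the set of partial injections on $\barNames_0\cap\names$; since such a map is determined by a domain together with an injection of it into the codomain, $\abs{\Pi_k}\le\sum_{i=0}^{k}\binom{k}{i}\frac{k!}{(k-i)!}\le 2^{k}\cdot k!\le 2^{k(\log k+1)}$, and as $k\le\abs{\barNamess_0}$ this yields the claimed bound of $\mathcal{O}(n\cdot 2^{\abs{\barNamess_0}(\log\abs{\barNamess_0}+1)})$ states.

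Next I would define the transitions so that each move of $\A'$ on a letter of $w'$ mimics a move of $\A$ on the corresponding letter of the witness $w$. From state $(q,\sigma)$: on a bar letter $\newletter{b}$, pick any transition $q\xto{\newletter{a}}q'$ of $\A$ and move to $(q',\sigma[b\mapsto a])$, provided $\sigma[b\mapsto a]$ is still injective (this side condition prevents name capture and reflects that the binder of $w$ may use any fresh name); on a plain letter $c$ with $c\in\dom\sigma$ (a bound occurrence in $w'$) require $q\xto{\sigma(c)}q'$ and move to $(q',\sigma)$; on a plain letter $c$ with $c\notin\dom\sigma\cup\im\sigma$ (a free occurrence) require $q\xto{c}q'$ and move to $(q',\sigma)$, leaving $\sigma$ unchanged. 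The condition $c\notin\im\sigma$ in the last case is essential: a free $c$ in $w'$ must correspond to a genuinely free $c$ in $w$, so $c$ may not currently be serving as the renamed image of an active binder. The initial state is $(q_0,\varnothing)$, and $(q,\sigma)$ is final iff $q$ is final in $\A$.

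I would then prove $L(\A')=L_\alpha(\A)\restriction\barNames_0$ by the two inclusions. For soundness, an accepting run of $\A'$ on $w'$ determines, via the recorded bar names and the values of $\sigma$, a word $w\in L(\A)$ together with a position-wise correspondence witnessing $w\alphaequiv w'$; injectivity of each $\sigma$ guarantees this correspondence is a consistent, scope-respecting bijection of bound names. For completeness, given $w'\alphaequiv w$ with $w\in L(\A)\seq\barNames_0^*$, an accepting run of $\A$ on $w$ lifts to an accepting run of $\A'$ on $w'$ by reading off $\sigma$ from the finitely many active binders at each prefix. Here I would first record the elementary incremental characterisation of $\alpha$-equivalence obtained by unfolding the definition---equal skeletons, agreeing free occurrences, and bound occurrences matched by a running scope-respecting bijection. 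Closedness of $\A'$ is then immediate: from $L(\A')=L_\alpha(\A)\restriction\barNames_0$ one gets $L_\alpha(\A')=L_\alpha(\A)$ (every bar string $\alpha$-equivalent to some $v\in L(\A)\seq\barNames_0^*$ is $\alpha$-equivalent to that same $v\in L(\A')$), whence $L(\A')=L_\alpha(\A')\restriction\barNames_0$ as required.

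The Büchi case is entirely analogous: the tracker is threaded along the infinite run and, since $\alpha$-equivalence of $\omega$-bar strings is defined prefix-wise, the acceptance condition is simply inherited on the product. The genuinely delicate case, and what I expect to be the main obstacle, is the (bottom-up) bar tree automaton: the binding scope of $\newletter{a}.f$ runs \emph{downwards} into the subtrees, whereas a bottom-up automaton computes \emph{upwards}. To reconcile the two I would let the tracker at a subtree be a partial injection on its free names, and in the transition at a node $\gamma.f$ demand that the injections returned by the children be compatible (agreeing on names free in several subtrees) and, when $\gamma=\newletter{a}$, that they all bind $a$ to a common target $a'$ matching a transition of $\A$ on $\newletter{a'}.f$; the binder is then projected out of the domain, mirroring $\FN(\newletter{a}.f(t_1,\dots,t_n))=(\bigcup_i\FN(t_i))\setminus\set{a}$. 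Verifying that this sibling-reconciliation faithfully captures $\alpha$-equivalence of trees, while keeping the tracker a partial injection on $\barNames_0\cap\names$ so that the state bound persists, is the crux of the argument.
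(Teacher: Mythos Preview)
Your product-with-a-renaming-tracker idea is exactly the paper's approach, and your state-count estimate is correct. However, the transition rules you give for the tracker are subtly wrong in both directions; this is a genuine gap.

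\emph{Too restrictive (soundness-side miss).} Take $\A$ over $\barNames_0=\{a,b,\newletter a,\newletter b\}$ with the single run $q_0\xto{\newletter a}q_1\xto{\newletter a}q_2\xto{a}q_3$. Then $\newletter b\,\newletter a\,a$ is $\alpha$-equivalent to $\newletter a\,\newletter a\,a\in L(\A)$ and hence lies in $L_\alpha(\A)\restriction\barNames_0$. In your automaton the first step forces $\sigma=\{b\mapsto a\}$ (the only bar-transition of $\A$ is $\newletter a$), and then the second step would need $\sigma[a\mapsto a]$, which you block as non-injective. The problem is that the witness $w$ may \emph{shadow} a binder: when $\A$ takes a $\newletter a$-transition while some $b'$ already maps to $a$, the old binder for $a$ in $w$ is shadowed, so the entry $b'\mapsto a$ must be \emph{dropped}, not used to block.

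\emph{Too permissive (completeness-side miss).} Take $\A$ with $q_0\xto{\newletter a}q_1\xto{\newletter b}q_2\xto{a}q_3$. The input $\newletter b\,\newletter b\,a$ has De Bruijn normal form $1\,2\,a$, whereas $L_\alpha(\A)$ consists of words with normal form $1\,2\,1$; hence $\newletter b\,\newletter b\,a\notin L_\alpha(\A)$. In your automaton: step~1 gives $\sigma=\{b\mapsto a\}$; step~2 overwrites to $\sigma=\{b\mapsto b\}$, and now $a\notin\dom\sigma\cup\im\sigma$; step~3 therefore treats the plain $a$ as free and matches the $a$-transition of $\A$, wrongly accepting. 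The issue is that once the \emph{input} re-binds $b$, the old target $a$ silently leaves $\im\sigma$ and becomes ``free'' for you, although in every candidate witness $w$ the name $a$ is still bound.

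Both failures have the same root cause: your tracker records only which names of $w'$ are currently bound and to what, but it does not keep track of which names of $w$ are currently bound. The paper avoids this by orienting the map the other way---a partial injection $r\colon\mathbf k\rightharpoonup\names_0$ from register indices (standing for $\A$'s names $a_1,\dots,a_k$) to input names---starting from the \emph{identity} rather than the empty map, and, crucially, allowing \emph{non-deterministic register dropping} at every step. The dropping is what absorbs shadowing on either side; without it (or an equivalent mechanism) your construction cannot be repaired by a local tweak of the update rule.

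For the tree case: the paper does not attempt the bottom-up sibling-reconciliation you sketch. It simply passes to the expressively equivalent top-down NFTA model and runs the word construction branch-wise, which makes the proof a straightforward reuse of the word argument. Your bottom-up plan may be workable, but it is considerably harder than necessary and, given the issues above, the tracker update at a node would need the same dropping mechanism.
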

The construction of the closed automaton amounts to translating a given bar automaton to an equivalent RNNA/Büchi RNNA/RNTA and back, using the so-called \emph{name-dropping modification}~\cite{skmw17,uhms21,ps24}. Since the translation of bar automata to nominal automata is left somewhat implicit in \emph{op.~cit.} and in order to keep the presentation simple and self-contained, we give a direct construction of closures purely at the level of bar automata.

\begin{proof}[Sketch]
Let $\A = \maketuple{Q, \barNames_0,\to, q_0, F}$ be a bar NFA. Put $\names_0 := \barNames_0 \cap \names = \set{a_1, \dots, a_k}$.
      We construct the bar NFA $\barAut = \maketuple{\overline{Q}, \barNames_0, \to', \overline{q}_0, \overline{F}}$ given by
      \[\overline{Q} := Q\times \parnom[0]{k},\qquad
        \overline{q}_0 = \maketuple{q_0, (i \mapsto a_i)},\qquad
        \overline{F} := F\times \parnom[0]{k},\]
where $\parnom[0]{k}$ is the set of partial injective maps from $k$ to $\names_0$ (corresponding to partial assignments of data values from $\names_0$ to $k$ registers). The transitions are:
        \begin{itemize}
          \item $\maketuple{q, r} \xra{a}\!'{}\, \maketuple{q', r'}$ iff there is an $i \leqslant k$ such that (i)
            $r(i) = a$; (ii) $r'$ is a restriction of $r$ (i.e.~the transition can delete the contents of arbitrary registers); and (iii) $q \xra{a_i} q'$.
          \item $\maketuple{q, r} \xra{\scriptnew{a}}\!'{}\, \maketuple{q', r'}$ iff there is an
            $\alpha \in \barNames_0\setminus\names$ such that for all $i \leqslant k$, (i) if $r'(i) = a$, then $\alpha = \newletter{a_i}$; (ii) if $\alpha = \newletter{a_i}$ and $i \in \dom(r')$ (the domain of the partial map $r'$), then $r'(i) = a$; (iii) for all $i \in \dom(r')$, $r'(i) \in \set{a,r(i)}$; and (iv) $q \xra{\alpha} q'$.
        \end{itemize}
One can show that $\barAut$ literally accepts the language $L_\alpha(\A)\restriction \barNames_0$. The construction for bar Büchi automata is identical and that for bar tree automata is very similar.
\end{proof}

\begin{rem}\label{rem:tightness}
  The bound on the number of states is essentially tight: Taking $k,n\in \Nat$,
  $\barNames_0 = \setw{a_i, \newletter{a_i}}{1 \leqslant i \leqslant k}$ and a bar NFA
  accepting only the bar string $\newletter{a_1}\cdots\newletter{a_k}a_1^n\cdots a_k^n$, one can show that its closure requires a number of states
  exponential in $k$ and linear in $n$.
\end{rem}
\begin{corollary}\label{cor:restr-regular}
For every bar language $L$ accepted by some bar automaton over the finite alphabet $\barNames_0\seq\barNames$, the restriction $L\restriction \barNames_0$ is a  regular language.
\end{corollary}

\noindent
This result is the key to our approach to learning bar languages, as it enables a reduction to learning corresponding regular languages over finite alphabets. To achieve this, we need to bridge the gap between literal and bar languages, which requires the algorithmic handling of $\alpha$-equivalence.

    \section{Checking \texorpdfstring{$\boldsymbol{\alpha}$}{^^^^^^01d6fc}-Equivalence}\label{sec:checkingAlphaEquiv}

At the heart of our learning algorithms for bar automata presented in \Cref{sec:learningRNNA}, and an essential requirement for their effective implementation, is a method to check finite bar strings, ultimately periodic bar strings and bar trees for $\alpha$-equivalence. In the following we develop a suitable version of \emph{De Bruijn levels}~\cite{db72}, originally introduced as a canonical representation of $\lambda$-terms. In this way, $\alpha$-equivalence reduces to syntactic equality.
    
    \mypar{Finite Bar Strings} In the case of bar strings, the idea of De Bruijn levels is to replace all bound occurrences of some name, as well as the preceding occurrence of the bar letter creating the binding, with a suitable natural number, while leaving free names as they are. This yields a canonical representation of bar strings up to $\alpha$-equivalence.

    \begin{defn}[De Bruijn Normal Form]
        Given a bar string $w = \alpha_1 \cdots \alpha_n \in \barAs$, its \emph{De Bruijn normal form} $\nf(w) = \beta_1 \cdots \beta_n \in (\names + \Nat)^*$ is a word of the same length as $w$ over the alphabet $\names + \Nat$, with $\beta_i$ defined inductively as follows:
        \begin{enumerate}
            \item $\beta_i = \alpha_i$ if $\alpha_i\in \names$ and $\alpha_i$ is not preceded by any occurrence of $\newletter{\alpha_i}$ in $w$;
            \item $\beta_i = k$ if $\alpha_i \in \names$ and $\beta_j = k$ where $j = \max\setw{j < i}{\alpha_j = \newletter{\alpha_i}}$;
            \item $\beta_i = k + 1$ if $\alpha_i$ is a bar name and $\alpha_1 \cdots \alpha_{i - 1}$ contains $k$ bar names.
        \end{enumerate}
    \end{defn}
    \begin{expl}
The $\alpha$-equivalent bar strings $\newletter{a}c\newletter{b}b\newletter{a}a$ and $\newletter{d}c\newletter{a}a\newletter{a}a$ have the same normal form $1c2233$. The bar string $ac\newletter{a}a\newletter{b}a$ has the normal form $ac1121$.
    \end{expl}

    \begin{proposition}\label{lem:aeDBNF}
        Two bar strings $v, w \in \barAs$ are $\alpha$-equivalent iff $\nf(v) = \nf(w)$.
    \end{proposition}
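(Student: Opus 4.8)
The plan is to prove the two implications of \Cref{lem:aeDBNF} separately, exploiting that $\alphaequiv$ is by definition the \emph{smallest} equivalence relation containing the single-step renamings $w\newletter{a}v \alphaequiv w\newletter{b}u$ (for $\braket{a}v = \braket{b}u$ in $\Abstr{\barAs}$). For the ``only if'' direction it then suffices to show that $\nf$ is \emph{invariant under one such generating step}, since a function constant on the generators of an equivalence relation is constant on its whole classes. For the ``if'' direction I would induct on the common length $n = |v| = |w|$, peeling off the first letter and, in the binder case, using a carefully chosen common fresh name to reduce to a shorter instance. A small auxiliary used throughout is that $\alphaequiv$ is a \emph{congruence for prefixing by a single letter}: if $x \alphaequiv y$ then $\alpha x \alphaequiv \alpha y$ for every $\alpha\in\barNames$, which is immediate because a generating step applied to $x$ lifts verbatim to one applied to $\alpha x$, the prefix of the step simply growing by $\alpha$.

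For soundness I would first record that $\nf(w)$ is completely determined by three pieces of data read off $w$: (I)~its length; (II)~for each \emph{free} position the plain name it carries (case~1); and (III)~the set of bar positions together with, for each \emph{bound} plain position $i$, the bar position $j = \max\setw{j<i}{\alpha_j = \newletter{\alpha_i}}$ binding it (cases~2--3), the actual numerals being forced by the left-to-right counting of bars. It then remains to check that a single generating step $w\newletter{a}v \to w\newletter{b}u$, where either $a=b$ and $u=v$, or $a\neq b$ with $b\fresh v$, $a\fresh u$ and $u = \makecycle{a,b}\o v$, leaves (I)--(III) unchanged: the length is obviously preserved; the bar positions are preserved since $\makecycle{a,b}$ maps bar names to bar names positionwise; the binding relation is preserved since the renamed binder binds exactly the same positions on both sides; and the only free positions that could be affected are the occurrences of $a$ free in $v$, but these are \emph{bound} by the leading binder in the whole string on both sides (to $\newletter{a}$ resp.\ $\newletter{b}$) and so never contribute to (II). Hence $\nf$ is invariant, giving $v\alphaequiv w \Rightarrow \nf(v)=\nf(w)$.

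For completeness, assume $\nf(v)=\nf(w)=\beta_1\cdots\beta_n$ and induct on $n$, the case $n=0$ being trivial. If $\beta_1\in\names$, then both $v$ and $w$ must begin with the \emph{plain} name $\beta_1$ (a leading bar name would force $\beta_1 = 1\in\Nat$), say $v=\beta_1 v'$ and $w=\beta_1 w'$. Since a leading plain letter neither introduces a binder nor alters any binding, stripping it leaves the remaining normal form unchanged, so $\nf(v')=\beta_2\cdots\beta_n=\nf(w')$; the induction hypothesis gives $v'\alphaequiv w'$, and prefix-congruence yields $v\alphaequiv w$. If instead $\beta_1=1\in\Nat$, then $v=\newletter{a}v'$ and $w=\newletter{b}w'$ for some names $a,b$. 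Here I would pick a name $c$ occurring \emph{nowhere} in $v$ or $w$ and set $v'' = \makecycle{a,c}\o v'$ and $w'' = \makecycle{b,c}\o w'$; freshness of $c$ makes $\newletter{a}v' \alphaequiv \newletter{c}v''$ and $\newletter{b}w' \alphaequiv \newletter{c}w''$ valid generating steps, whence by the soundness direction $\nf(\newletter{c}v'') = \nf(v) = \nf(w) = \nf(\newletter{c}w'')$.

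The crux, and the step I expect to be the main obstacle, is to descend from these prefixed normal forms to $\nf(v'')=\nf(w'')$, because prefixing by $\newletter{c}$ both \emph{increments every bar level by one} and \emph{turns the free occurrences of $c$ into occurrences bound at level $1$}. The observation making this work is that the transformation is \emph{invertible once $c$ is fixed}: from $\nf(\newletter{c}x) = 1\cdot\gamma$ one recovers $\nf(x)$ by replacing every symbol $1$ in $\gamma$ by the plain name $c$ (these are exactly the formerly free $c$'s, since every genuine level of $x$ has been shifted to a value $\geq 2$) and decrementing every numeral $\geq 2$ by one. Applying this single recovery map, which depends only on the \emph{shared} name $c$, to the equal words $\nf(\newletter{c}v'')$ and $\nf(\newletter{c}w'')$ yields $\nf(v'')=\nf(w'')$. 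The induction hypothesis then gives $v''\alphaequiv w''$, prefix-congruence gives $\newletter{c}v''\alphaequiv \newletter{c}w''$, and chaining $v\alphaequiv\newletter{c}v''\alphaequiv\newletter{c}w''\alphaequiv w$ completes the induction. Using a common fresh $c$ for both strings is precisely what lets one invertible recovery map relate the two sides, sidestepping the bookkeeping of matching up differently named binders.
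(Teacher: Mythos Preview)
Your proof is correct. The soundness direction is essentially the same idea as the paper's: both show that $\nf$ is invariant under a single generating step of $\alphaequiv$. The paper factors this through two auxiliary lemmas (equivariance of $\nf$ under permutations, and stability of $\nf$ under left concatenation), while you give the combinatorial argument directly in terms of the data (I)--(III) that determine $\nf$.

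The completeness direction is organized genuinely differently. The paper keeps both strings at full length and iteratively \emph{grows the longest common prefix}: at the first point of disagreement the normal-form equality forces two distinct bar names, a common fresh $c$ is swapped in on both sides, and the already-proved soundness direction guarantees the normal forms stay equal, so the process can continue. Your argument instead \emph{shrinks the length} by stripping the first letter. The price you pay is the explicit inversion lemma (recovering $\nf(x)$ from $\nf(\newletter{c}x)$ by replacing level $1$ with $c$ and decrementing), which the paper never needs because it always compares strings of the same shape and lets soundness carry the normal-form equality forward. Conversely, your induction is more transparent as a standard structural argument, and the inversion step is pleasantly concrete; the paper's iterative version is a bit slicker but relies more on the auxiliary lemmas it has set up.
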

    The normal form $\nf(w)$ can be computed from $w$ letter by letter in polynomial time and linear-logarithmic space
    (for representing natural numbers) in the length of $w$. Therefore, as an immediate consequence of \Cref{lem:aeDBNF}, we get:
\begin{corollary}\label{cor:alphaeq-decidable-finite-bar-strings}
 $\alpha$-equivalence of finite bar strings is polynomial-time decidable.
\end{corollary}
Indeed, to decide whether two bar strings $v,w\in \barAs$ are $\alpha$-equivalent, one computes their normal forms $\nf(v)$ and $\nf(w)$ and checks for syntactic equality.

      \mypar{Infinite Bar Strings} De Bruijn normal forms for infinite bar strings could be introduced in the same way as for finite bar strings. However, this normal form does not preserve ultimate periodicity; for instance, the periodic bar string $(\newletter aa)^\omega$ has the normal form $11 22 33\cdots$, which is not ultimately periodic. In fact, a notion of normal form that does preserve ultimate periodicity, is invariant under $\alpha$-renaming, and invariant under changing the representation of ultimately periodic bar strings via pairs of finite bar strings, seems hard to achieve.

Nonetheless, we can check ultimately periodic bar strings for $\alpha$-equivalence by reducing to the finite case. Intuitively, given $u_0v_0^\omega$ and $u_1v_1^\omega$, since $v_0$ and $v_1$ are repeated
    infinitely often, plain names in $v_i$ are either free,
    refer back to $u_i$ or to the preceeding repetition of~$v_i$. This leads to the
    following characterization:
    \begin{proposition}\label{lem:aeinfbar}
      Two ultimately periodic bar strings $u_iv_i^\omega$ $(i=0,1)$ with $\abs{u_0} = \abs{u_1}$ are
      $\alpha$-equivalent iff the prefixes $p_i = u_i v_i^{2\ell_i}$ are $\alpha$-equivalent, where
      $\ell_i = \abs{v_{1-i}}$.
    \end{proposition}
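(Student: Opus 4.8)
The plan is to reduce everything to the De Bruijn normal form. First note $\abs{p_1}=\abs{p_2}$: since $\ell_1=\abs{v_2}$ and $\ell_2=\abs{v_1}$ and $\abs{u_1}=\abs{u_2}=:L$, we get $\abs{p_1}=L+2\abs{v_1}\abs{v_2}=\abs{p_2}=:N$, and $p_i$ is exactly the length-$N$ prefix of $u_iv_i^\omega$. Extend $\nf$ letterwise to infinite bar strings. Since each letter $\beta_i$ of $\nf(w)$ depends only on the prefix $\alpha_1\cdots\alpha_i$, the map $\nf$ is \emph{causal}, i.e.\ $\nf(w_n)=\nf(w)_n$ for every prefix. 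Combining causality with \Cref{lem:aeDBNF} and the definition of $\alpha$-equivalence of infinite bar strings yields two facts: writing $g^i:=\nf(u_iv_i^\omega)$, we have $u_1v_1^\omega\alphaequiv u_2v_2^\omega$ iff $g^1$ and $g^2$ agree in every position, and $p_1\alphaequiv p_2$ iff they agree on $[1,N]$. Thus the proposition becomes: \emph{$g^1$ and $g^2$ agree on $[1,N]$ iff they agree everywhere.} The direction ``agree everywhere $\Rightarrow$ agree on $[1,N]$'' (the forward direction of the proposition) is immediate, so the content is the converse.

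For the converse I would first formalize the informal observation that bindings reach back at most one period. Fixing $k$ and a plain name in the $m$-th copy of $v_k$: if $v_k$ contains a bar for that name it binds either within the current copy or, failing that, in the immediately preceding copy; otherwise it is free or binds into $u_k$. Hence beyond $u_k$ and the first copy the normal form is \emph{quasi-periodic}. Passing to the common period $\mathrm{lcm}:=\mathrm{lcm}(\abs{v_1},\abs{v_2})$ to align the two strings at the same scale, let $c_k$ be the number of bar letters in $u_k$ and let $\Lambda_k$ be the number of bar letters in an $\mathrm{lcm}$-block of $v_k$'s. Then for every position $i$ lying in the second or later copy of $v_k$,
\[
g^k_{i+\mathrm{lcm}}=\Phi_k(g^k_i),\qquad
\Phi_k(x)=
\begin{cases}
x,& x\in\names\ \text{or}\ x\le c_k,\\
x+\Lambda_k,& x\in\Nat\ \text{and}\ x>c_k,
\end{cases}
\]
since names and levels pointing into $u_k$ (values $\le c_k$) are preserved, whereas bar-levels and levels pointing inside the periodic part (values $>c_k$) advance by exactly $\Lambda_k$.

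To run the argument I must certify $\Phi_1=\Phi_2$. Here I use that the bar positions of $w$ are recoverable from $\nf(w)$ as the strict left-to-right records among its $\Nat$-valued entries (each new bar raises the running maximum to its level $1,2,3,\dots$, while a bound name never exceeds the current bar count). As $g^1=g^2$ throughout $[1,N]$, and $[1,N]$ covers $u_k$ together with two full blocks of length $\abs{v_1}\abs{v_2}$, the two strings have identical bar positions throughout $[1,N]$. Counting bars in an $\mathrm{lcm}$-block then gives $\Lambda_1=\Lambda_2$, and counting bars in $(L,N]$ (a common multiple of both periods) together with the total bar count forces $c_1=c_2$; hence $\Phi_1=\Phi_2=:\Phi$. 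Finally, the generic part of $[1,N]$ has length $\ge\mathrm{lcm}$, so it contains a full $\mathrm{lcm}$-window on which $g^1=g^2$; iterating the shared rule $\Phi$ forward propagates this agreement to all positions $>N$, while positions $\le N$ agree by hypothesis. Therefore $g^1=g^2$ everywhere, which is the converse.

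The main obstacle is precisely that $\abs{v_1}$ and $\abs{v_2}$ may differ — indeed be coprime — so the two periodic structures live at incompatible scales; this is what forces the passage to $\mathrm{lcm}(\abs{v_1},\abs{v_2})$ and the density/threshold bookkeeping needed to prove $\Phi_1=\Phi_2$. It is exactly the doubling in $p_i=u_iv_i^{2\ell_i}$, yielding two blocks of length $\abs{v_1}\abs{v_2}$, that makes $[1,N]$ simultaneously long enough to pin down the parameters $c_k,\Lambda_k$ and to contain a full generic $\mathrm{lcm}$-window from which to propagate.
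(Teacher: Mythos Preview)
Your argument is correct and takes a genuinely different route from the paper's. The paper proceeds by direct contradiction at each position $i>N$: it runs a three-way case analysis on the pair $(\alpha_i,\beta_i)$ (both free names; one free name and one number; both numbers) and, using that any pair of relative positions $(i_1,i_2)$ in $(v_1,v_2)$ already recurs inside the doubled block, locates a position $j\le N$ exhibiting the same kind of mismatch, contradicting $p_1\alphaequiv p_2$. You instead extract explicit structure: you prove the forward recurrence $g^k_{i+\mathrm{lcm}}=\Phi_k(g^k_i)$ valid from the second copy of $v_k$ onward, recover the bar positions (hence $c_k$ and $\Lambda_k$) from the strict records in the normal form on $[1,N]$, and conclude $\Phi_1=\Phi_2$ so that agreement propagates from an $\mathrm{lcm}$-window in the generic region. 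The paper's argument is shorter and avoids the bookkeeping of identifying $\Phi_k$, but yours makes the role of the exponent $2\ell_i$ fully explicit---it is exactly what guarantees that $[1,N]$ simultaneously determines $c_k,\Lambda_k$ and contains a generic $\mathrm{lcm}$-window---and the recurrence you isolate is an independently useful description of $\nf(u_kv_k^\omega)$ that could feed into the open question (raised in the paper's conclusion) of a normal form for ultimately periodic bar strings.
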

      \noindent
      The exponents $2\ell_i$ make sure that~(a) $p_0$ and $p_1$ have the same length, (b) for every pair of positions $i_0$ in $v_0$ and $i_1$ in $v_1$, there is a position $j$ in $p_0$ and $p_1$ where the letters at $i_0$ and $i_1$ `meet', and (c) that position $j$ can be chosen such that there is a full copy of $v_i$ in $p_i$ preceeding it. Note that~(a) and~(b) would already be achieved by taking just $\ell_i$ as the exponent, and the factor $2$ then achieves~(c) while maintaining~(a) and~(b).

\begin{corollary}
The $\alpha$-equivalence of ultimately periodic bar strings (represented as pairs of finite words) is decidable in polynomial time.
\end{corollary}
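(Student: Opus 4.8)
The plan is to reduce the ultimately periodic case to the finite case via \Cref{lem:aeinfbar} and then invoke \Cref{cor:alphaeq-decidable-finite-bar-strings}. The only obstacle to a direct application is that \Cref{lem:aeinfbar} requires the two base prefixes to have equal length, $\abs{u_1}=\abs{u_2}$, whereas an arbitrary pair of input representations $(u_1,v_1)$ and $(u_2,v_2)$ need not satisfy this. So the first thing I would do is normalise both representations to a common base length while keeping the represented infinite bar strings \emph{literally} unchanged (so that no $\alpha$-reasoning is needed at this stage).

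For the normalisation, set $\ell := \max(\abs{u_1},\abs{u_2})$. For each $i$ I would rewrite $(u_i,v_i)$ as a pair $(u_i',v_i')$ with $\abs{u_i'}=\ell$ and $\abs{v_i'}=\abs{v_i}$ such that $u_i'(v_i')^\omega = u_iv_i^\omega$ as elements of $\barAw$: take $u_i'$ to be the length-$\ell$ prefix of $u_iv_i^\omega$ and $v_i'$ to be the block formed by the next $\abs{v_i}$ letters. Since all positions beyond $\abs{u_i}$ are periodic with period $\abs{v_i}$ and $\ell\geq \abs{u_i}$, the word $u_i'(v_i')^\omega$ indeed equals $u_iv_i^\omega$, with $u_i'$ obtained by continuing $u_i$ cyclically through $v_i$ and $v_i'$ a cyclic rotation of $v_i$. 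This is clearly computable in time polynomial in $\abs{u_i}+\abs{v_i}$, and it manipulates only the finite-word representation, not the bar string itself.

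Now I can apply \Cref{lem:aeinfbar} to the normalised representations: form the finite prefixes $p_1 = u_1'(v_1')^{2\abs{v_2'}}$ and $p_2 = u_2'(v_2')^{2\abs{v_1'}}$, which both have length $\ell + 2\abs{v_1'}\cdot\abs{v_2'}$, and conclude that $u_1'(v_1')^\omega \alphaequiv u_2'(v_2')^\omega$ iff $p_1\alphaequiv p_2$. The latter is a question about \emph{finite} bar strings, which \Cref{cor:alphaeq-decidable-finite-bar-strings} decides in polynomial time by computing and comparing the De Bruijn normal forms $\nf(p_1)$ and $\nf(p_2)$.

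It remains to verify the overall polynomial bound. The crucial point is that the prefixes $p_i$ have length $\ell + 2\abs{v_1'}\cdot\abs{v_2'}$, which is at most quadratic in the size of the input and hence polynomial; consequently both the normalisation and the computation and comparison of the normal forms run in polynomial time (the natural-number entries of $\nf$ costing only logarithmic space each). The only genuinely non-routine ingredient is \Cref{lem:aeinfbar} itself, which is already available; everything else is a direct composition of the two cited results. I therefore expect the mild bookkeeping around the length normalisation—in particular confirming that it preserves the literal infinite bar string and keeps the prefixes polynomially sized—to be the main point requiring care.
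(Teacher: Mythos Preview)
Your proposal is correct and follows exactly the intended route: reduce to finite prefixes via \Cref{lem:aeinfbar} and then apply \Cref{cor:alphaeq-decidable-finite-bar-strings}. The paper leaves the corollary without an explicit proof, but your write-up is in fact more careful than the paper's implicit argument, since you explicitly handle the normalisation to equal base lengths $\abs{u_1}=\abs{u_2}$ (which \Cref{lem:aeinfbar} assumes) and check that the resulting prefixes stay polynomially sized.
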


    \mypar{Bar Trees} For the case of bar trees, the De Bruijn
    level notation can be reduced to that for bar strings: the normal form of a bar tree is simply given by the bar string normal form
computed for every individual branch.
    \begin{defn} Let $\Sigma$ be a finite signature.
        \begin{enumerate}
            \item A \emph{De Bruijn $\Sigma$-tree} is a tree over the signature $(\At+\N)\times \Sigma$ with symbols $\nicefrac{\alpha.f}{n}$ and $\nicefrac{k.f}{n}$ for $\alpha\in\barNames$, $k\in \Nat$ and $\nicefrac{f}{n}\in \Sigma$. 
            \item Given a bar $\Sigma$-tree $t \in \bartree(\Sigma)$, its \emph{De Bruijn normal form} $\nf(t)$
                is a De Bruijn $\Sigma$-tree of the same shape as $t$. For every node $x$ of $t$ with label $\alpha.f$, the corresponding node of $\nf(t)$ has the label $\beta.f$ with $\beta\in \names+\N$ determined as follows: let $\alpha_1.f_1,\cdots,\alpha_n.f_n,\alpha.f$ be the sequence of node labels occurring on the path from the root of $t$ to $x$, and take $\beta$ to be the last letter of the bar string normal form $\nf(\alpha_1\cdots \alpha_n\alpha)\in (\names+\N)^*$.
        \end{enumerate}
    \end{defn}

    \noindent An example of a bar tree $t$ and its normal form $\nf(t)$ is shown below:
 \[
\begin{tikzpicture}[level distance=1cm, xscale=1.25, yscale=.55]
          \node (TO) at (0,0) {$a.f$}
            child { node {$\newletter{a}.g$} child { node {$a.c$} } }
            child { node {$\newletter{b}.f$} child { node {$a.g$} child { node {$b.c$} } } child { node {$\newletter{a}.f$} child { node {$b.c$} } child { node {$a.c$}} }};
          \node[anchor=east] at ($(TO.west) + (-0.5,0)$) {$\mathsf{t}:$};
        \end{tikzpicture}
\qquad
        \begin{tikzpicture}[level distance=1cm, xscale=1.25, yscale=.55]
          \node (TO) at (0,0) {$a.f$}
            child { node {$1.g$} child { node {$1.c$} } }
            child { node {$1.f$} child { node {$a.g$} child { node {$1.c$} } } child { node {$2.f$} child { node {$1.c$} } child { node {$2.c$}} }};
        \node[anchor=east] at ($(TO.west) + (-0.5,0)$) {$\nf(\mathsf{t}):$};
        \end{tikzpicture}
\]
The normal form $\nf(t)$ can be computed in
    polynomial time in the number of nodes of the given tree $t$. Like in the case of bar strings, normal forms of bar trees capture $\alpha$-equivalence and thus yield a polynomial-time decision procedure:
 
    \begin{proposition}\label{lem:aeTrees}
        Two bar trees $s, t \in \bartree(\Sigma)$ are $\alpha$-equivalent iff $\nf(s) = \nf(t)$.
    \end{proposition}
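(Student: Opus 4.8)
The plan is to reduce tree $\alpha$-equivalence to bar-\emph{string} $\alpha$-equivalence branch by branch, so that \Cref{lem:aeDBNF} does the heavy lifting. The first step is a purely combinatorial observation about how $\nf(t)$ is assembled from path normal forms. For a node $x$ of a bar tree $t$, write $\sigma_x \in \barAs$ for the bar string formed by the bar-name/plain-name components of the labels along the root-to-$x$ path. I note that $\nf$ on bar strings is \emph{prefix-consistent}: each letter $\beta_i$ in the definition depends only on $\alpha_1\cdots\alpha_i$, so the normal form of a prefix is the corresponding prefix of the normal form. Consequently the De Bruijn label that $\nf(t)$ assigns to $x$ (the last letter of $\nf(\sigma_x)$) is the $\abs{\sigma_x}$-th letter of $\nf(\sigma_x)$, and the whole word $\nf(\sigma_x)$ is recovered as the sequence of labels along the path in $\nf(t)$. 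Hence $\nf(s) = \nf(t)$ holds iff $s$ and $t$ have the same shape and the same $\Sigma$-components at each node, and $\nf(\sigma^s_x) = \nf(\sigma^t_x)$ for every node $x$. Applying \Cref{lem:aeDBNF} to each path rephrases this as: $s,t$ have the same shape and $\Sigma$-labels and all corresponding root-to-node paths are $\alpha$-equivalent as bar strings. Call this relation $R$; it remains to prove $\alphaequiv\; =\; R$.

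For $\alphaequiv\;\subseteq\;R$, I would show that $R$ is a congruence on $\bartree(\Sigma)$ containing the generating pairs, whence minimality of $\alphaequiv$ gives the inclusion. Reflexivity, symmetry and transitivity of $R$ are inherited from bar-string $\alphaequiv$. Compatibility of $R$ with each operation $(t_1,\dots,t_n)\mapsto \gamma.f(t_1,\dots,t_n)$ reduces to the sub-lemma that \emph{prepending a common letter preserves $\alpha$-equivalence of bar strings}: the map $x\mapsto\gamma x$ sends each generating pair $w\newletter a v\alphaequiv w\newletter b u$ to the generating pair $(\gamma w)\newletter a v\alphaequiv (\gamma w)\newletter b u$, hence sends $\alpha$-equivalence chains to $\alpha$-equivalence chains. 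For the generators, given $\braket a{t_i} = \braket b{t_i'}$ I pick a name $c$ fresh for all the (finitely many) names in $t_i,t_i'$, so that $\makecycle{a,c}\cdot t_i = \makecycle{b,c}\cdot t_i'$; for corresponding paths $\sigma$ in $t_i$ and $\sigma'$ in $t_i'$ this yields $\makecycle{a,c}\cdot\sigma = \makecycle{b,c}\cdot\sigma'$, i.e.\ $\braket a\sigma = \braket b{\sigma'}$, i.e.\ $\newletter a\sigma\alphaequiv\newletter b\sigma'$ by the defining generator of bar-string $\alpha$-equivalence with empty prefix. Since every path of $\newletter a.f(\dots)$ and $\newletter b.f(\dots)$ arises this way (or is the trivial root path $\newletter a\alphaequiv\newletter b$), the generating tree pair lies in $R$.

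The converse $R\;\subseteq\;\alphaequiv$ is the substantive part, proved by structural induction on a same-shape pair $s=\alpha.f(s_1,\dots,s_n)$, $t=\beta.g(t_1,\dots,t_n)$ with $s\mathrel R t$. The root path forces $\alpha\alphaequiv\beta$, which for length-one bar strings means either $\alpha=\beta$ is the same plain name, or $\alpha=\newletter a$ and $\beta=\newletter b$ are both bar names (and $f=g$ throughout). In the plain case one strips the leading free letter: since a leading free plain name affects neither bar-counts nor bindings, $\nf(a\sigma)=a\cdot\nf(\sigma)$, so the path condition $a\sigma\alphaequiv a\sigma'$ is equivalent to $\sigma\alphaequiv\sigma'$; thus each pair $(s_j,t_j)$ again lies in $R$, the induction hypothesis gives $s_j\alphaequiv t_j$, and congruence of $\alphaequiv$ yields $s\alphaequiv t$. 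In the bar case I choose a single $c$ fresh for all of $s_1,\dots,s_n,t_1,\dots,t_n$; the path conditions $\newletter a\sigma\alphaequiv\newletter b\sigma'$ then give $\makecycle{a,c}\cdot\sigma=\makecycle{b,c}\cdot\sigma'$ for every pair of corresponding paths, and since a tree is determined by its labelled root-to-node paths this upgrades to $\makecycle{a,c}\cdot s_j=\makecycle{b,c}\cdot t_j$, i.e.\ $\braket a{s_j} = \braket b{t_j}$ for all $j$. The defining generator for bar trees then applies directly, giving $s\alphaequiv t$ with no appeal to the induction hypothesis.

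The main obstacle I anticipate is this converse direction, and within it the bookkeeping of fresh names: the per-path witnesses of $\braket a\sigma=\braket b{\sigma'}$ must be aligned to a \emph{single} fresh $c$ before one can recombine them into a tree-level identity $\makecycle{a,c}\cdot s_j=\makecycle{b,c}\cdot t_j$, and one must separately verify that equality of all labelled paths genuinely forces equality of trees. The prefix-consistency of $\nf$ and the reduction to \Cref{lem:aeDBNF} are routine once set up, but getting the quantifier order on fresh names right in the recombination step is where the care lies.
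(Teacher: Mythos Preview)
Your reduction to path-wise bar-string $\alpha$-equivalence and the use of \Cref{lem:aeDBNF} is sound, and the forward direction ($\alphaequiv\subseteq R$) is fine. The gap is in the bar case of the converse. From $\newletter a\sigma\alphaequiv\newletter b\sigma'$ you conclude $\makecycle{a,c}\cdot\sigma=\makecycle{b,c}\cdot\sigma'$ as a \emph{literal equality}, but this does not follow: bar-string $\alpha$-equivalence is the congruence \emph{generated} by the root rule $\braket a v=\braket b u$, not the root rule itself. Concretely, take $s=\newletter a.f(\newletter b.g)$ and $t=\newletter a.f(\newletter d.g)$ with $g$ nullary; the leaf paths give $\newletter a\,\newletter b\alphaequiv\newletter a\,\newletter d$, yet $\makecycle{a,c}\cdot\newletter b=\newletter b\neq\newletter d=\makecycle{a,c}\cdot\newletter d$. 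Hence you cannot upgrade to $\makecycle{a,c}\cdot s_j=\makecycle{b,c}\cdot t_j$, and $\braket a{s_j}=\braket b{t_j}$ is unjustified. The attempt to bypass the induction hypothesis in this case fails.

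The fix is straightforward and stays within your framework: in the bar case, first apply $\makecycle{a,c}$ to $s$ and $\makecycle{b,c}$ to $t$ (one generator step each), obtaining trees with common root label $\newletter c.f$. Then observe that stripping a common leading $\newletter c$ preserves equality of De~Bruijn normal forms (the tail of $\nf(\newletter c\tau)$ determines $\nf(\tau)$: indices $>1$ shift down by one, index $1$ becomes the free letter $c$, and free letters stay put). This puts each pair of children in $R$, and the induction hypothesis finishes the job via congruence---exactly parallel to your plain case. For comparison, the paper proves the converse by an iterative maximal-common-context argument (supported by separate lemmas that $\nf$ is stable under permutations and under plugging into contexts) rather than a structural induction; your path-based decomposition is a perfectly good alternative once the bar case is repaired.
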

\begin{corollary}\label{cor:alphaeq-decidable-finite-bar-trees}
The $\alpha$-equivalence of bar trees is decidable in polynomial time.
\end{corollary}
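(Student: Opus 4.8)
The plan is to derive the statement directly from \Cref{lem:aeTrees}, which already does all the conceptual work by reducing $\alpha$-equivalence of bar trees to syntactic equality of their De Bruijn normal forms. Concretely, given two bar trees $s, t \in \bartree(\Sigma)$, that proposition guarantees $s \alphaequiv t$ iff $\nf(s) = \nf(t)$, so it suffices to argue that both normal forms can be computed, and then compared, in polynomial time in the number of nodes.

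First I would compute $\nf(s)$ and $\nf(t)$. As noted immediately before \Cref{lem:aeTrees}, the normal form of a bar tree assigns to each node the last letter of the bar string normal form along the root-to-node path. Rather than recomputing that path normal form from scratch at every node, I would perform a single depth-first traversal that maintains two pieces of state: a counter recording how many bar names have been seen so far on the current path, and a partial map sending each currently bound plain name to the De Bruijn level of its binder. With this bookkeeping, each node can be relabelled in time proportional to the cost of one arithmetic operation on the level counters, so the whole normal form is produced in polynomial time; moreover the level numbers never exceed the depth of the tree and thus take only linear-logarithmic space to represent.

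It then remains to test whether the two resulting De Bruijn $\Sigma$-trees $\nf(s)$ and $\nf(t)$ coincide. Since these are finite ordered labelled trees over the signature $(\names + \N) \times \Sigma$, syntactic equality is decided by a simultaneous recursive descent that compares the root labels and then the corresponding children in order, aborting at the first shape or label mismatch; this runs in time linear in the tree sizes. Composing the normal-form computation with this equality test yields the desired polynomial-time decision procedure, with correctness immediate from \Cref{lem:aeTrees}.

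I do not expect a genuine obstacle, since the substantive mathematics is confined to \Cref{lem:aeTrees}; the only point deserving mild care is the level bookkeeping during the traversal, which must reproduce \emph{exactly} the labels that the branchwise bar string normal form assigns, while keeping the per-node cost polynomial.
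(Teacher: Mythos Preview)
Your proposal is correct and matches the paper's approach: the corollary is stated as an immediate consequence of \Cref{lem:aeTrees}, with the paper having already observed just before that proposition that the De Bruijn normal form of a bar tree can be computed in polynomial time in the number of nodes. Your DFS-with-bookkeeping description simply spells out that observation in a bit more detail than the paper does.
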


    \section{Learning Bar Languages}\label{sec:learningRNNA}
With the technical preparation of the previous sections at hand, we present the main contribution of our paper: a learning algorithm for bar languages.  
\begin{notation}
In the following, we fix an \emph{unknown} bar word/$\omega$-/tree language~$L_\teach$ recognizable by some bar word/Büchi/tree automaton with finite alphabet $\barNames_0$. (In the tree case, this means a language/automaton over the signature $\barNames_0\times \Sigma$ for some finite signature $\Sigma$.)
We let the term \emph{bar input} refer to either a finite bar string/ultimately periodic bar string/bar tree.
\end{notation}
Our learning algorithm follows Angluin's MAT framework (\Cref{sec:active-learning}): it describes the strategy of a \emph{learner} $\learn_{\textsf{bar}}$ that interacts with a \emph{teacher} $\teach$ in order to infer a bar automaton over $\barNames_0$ accepting $L_\teach$.
As usual, $\learn_{\textsf{bar}}$ can direct membership and equivalence queries to \teach, which in the
present setting take the following form:
    \begin{description}
        \item[Membership Queries $(\textsf{MQ}_\alpha)$:]
          Given
          $w\in \barAs/\barAw/\bartree(\Sigma)$, is $w \in L_\teach$?
        \item[Equivalence Queries $(\textsf{EQ}_\alpha)$:]
            Given a hypothesis (i.e.\ a bar automaton) $\H$, is $L_\alpha(\H)=L_\teach$? If not,
             then
            \teach provides a counterexample, viz.\  a bar input $w_\teach\in \barAs/\barAw/\bartree(\Sigma)$ in the symmetric difference  $L_\alpha(\H)\oplus L_\teach$.
    \end{description}
 We assume that the learner $\learn_{\textsf{bar}}$ is aware of the finite alphabet $\barNames_0$ of the unknown bar automaton. Interpreting bar automata as register automata (\Cref{rem:bar-aut-vs-nominal-aut}), this corresponds to the number of registers being known. In \Cref{sec:alphabet}, we explain how to learn $L_\teach$ without this information. 

The learner $\learn_{\textsf{bar}}$ consists of two components, depicted in \Cref{fig:talearner}. 
    \begin{figure*}[t]
        \centering
        \resizebox{\linewidth}{!}{\begin{tikzpicture}[every node/.style={align=center}]
    \coordinate (TAR) at (0.8\linewidth, 0);
    \coordinate (COMPUTENODE) at (0.68\linewidth, -2.75cm);
    \coordinate (MQSHIFT) at (0,-0.5cm);
    \coordinate (MASHIFT) at (0,-0.65cm);
    \coordinate (EQSHIFT) at (0,-1.5cm);
    \coordinate (EASHIFT) at (0,-2.75cm);

    \node[anchor=north west, draw=black, thick, minimum height=4cm, minimum width=0.1\linewidth, rounded corners]
        (L) at (0,0) {\textsf{L}};
    \node[anchor=north west, draw=black, thick, minimum height=4cm, minimum width=0.1\linewidth, rounded corners]
        (T) at (TAR) {\textsf{T}};
    \node[anchor=north west, draw=black, thick, minimum height=4cm, minimum width=0.5\linewidth, rounded corners]
        (TAouter) at (0.2\linewidth,0) {};
    \node[anchor=north west, draw=none, thick, minimum height=2.3cm, minimum width=0.5\linewidth, rounded corners]
        (TA) at (0.2\linewidth,-1.6cm) {};
    \node[anchor=north, inner sep=0] (GL) at (0.35\linewidth,-4.1cm) {$\learn_{\textsf{bar}}$};
    \node[fit=(TA)(L)(GL), draw=black, thick, rounded corners, dotted] (BOXGENLEARN) {};

    \draw[-latex, ultra thick] ($(L.north east) + (MQSHIFT)$) -- node[midway, above, scale=0.65] {Is $w \in L_\teach$?} node[pos=0.075, above, scale=0.65] {(\textsf{MQ})} node[pos=0.925, above, scale=0.65] {($\textsf{MQ}_\alpha$)} ($(T.north west) + (MQSHIFT)$);
    \draw[latex-, very thick, dashed] ($(L.north east) + (MASHIFT)$) -- node[midway, below, scale=0.65] {\small\textsc{Yes}/\textsc{No}} ($(T.north west) + (MASHIFT)$);
    
    \draw[-latex, ultra thick] ($(L.north east) + (EQSHIFT)$) -- node[midway, above, scale=0.65] {Is $L_\alpha(\H) = L_\teach$?} node[pos=0.075, above, scale=0.65] {(\textsf{EQ})} node[pos=0.925, above, scale=0.65] {($\textsf{EQ}_\alpha$)}  ($(T.north west) + (EQSHIFT)$);
    \draw[latex-, very thick, dashed] ($(L.north east) + (EASHIFT)$) -- node[midway, above, scale=0.65] {$w_c$} ($(TA.west)$);

    \node[draw, trapezium, anchor=east, trapezium left angle=70, trapezium right angle=110,
        scale=0.65] (COR) at (COMPUTENODE) {Compute $w \equiv_\alpha w_\teach$\\over $\barNames_0$};
    \node[draw, diamond, anchor=east, aspect=1.5, scale=0.65] (D2) at ($(COR.west) + (-0.75,0)$) {$\exists\,w'\alphaequiv w:$\\ $w'\in L(\H)$?};

    \node[anchor=north west,text=gray] at ($(COR.south east) + (14pt,1pt)$) {\bfseries\textsf{1}};
    \node[anchor=south west,text=gray] at ($(D2.north east) + (2pt,-3pt)$) {\bfseries\textsf{2}};
    \node[anchor=center] at ($(COR.center) + (0,0.75cm)$) {\textsf{TA}};

    \draw[-latex, very thick, dashed] ($(T.north west) + (EASHIFT)$) -- node[pos=0.4, above, scale=0.65] {$w_\teach$} ($(COR.east)$);
    \draw[very thick, rounded corners, dashed] ($(T.north west) + (EASHIFT)$) -| ($(COR.east) + (0.1\linewidth,-0.5cm)$);
    \draw[-latex, very thick, rounded corners, dashed] 
        let \p1 = (BOXGENLEARN.east), \p2 = ($(COR.east) + (0,-0.5cm)$) in 
        ($(COR.east) + (0.1\linewidth,-0.25cm)$) |- 
        node[pos=0.7, below, scale=0.65] {\textsc{Yes}} (\x1,\y2);
    
    \draw[-latex, thick, rounded corners]
        (COR.west) -- node[midway, above, scale=0.65] {$w$} (D2.east);

    \draw[-latex, thick, rounded corners] (D2.north) |- node[pos=0.15, right, scale=0.65] {\textsc{Yes}}
        ++(-0.075\linewidth,0.35cm) -| node[pos=-0.2, anchor=east, below, scale=0.65] {$w_c := w'$}
        ($(D2.west) + (-0.025\linewidth,0.5cm)$) |- (TA.west);
    \draw[thick, rounded corners] (D2.south) |- node[pos=0.15, right, scale=0.65] {\textsc{No}}
        ++(-0.075\linewidth,-0.35cm) -| node[pos=-0.2, anchor=east, above, scale=0.65] {$w_c := w$}
        ($(D2.west) + (-0.025\linewidth,-0.5cm)$) |- (TA.west);
\end{tikzpicture}%
        \caption{\label{fig:talearner}Learner $\learn_\textsf{bar}$ with internal learner \learn, teaching assistant \tass and teacher \teach. Bold arrows 
        (\thickrarrow\!) 
        denote queries asked by \learn, dashed arrows
        (\dashedrarrow\!)
        denote answers to those queries given by \teach (via \tass) to \learn, and arrows inside of the \tass denote its flow.}
    \end{figure*}
The first component is an \emph{arbitrary} MAT-based learner \learn for standard regular word/$\omega$-/tree languages over $\barNames_0$ that infers a 
bar automaton using a finite number of membership queries $(\textsf{MQ})$ and equivalence queries $(\textsf{EQ})$.  
The learner \learn may, for instance, execute any of the known learning algorithms discussed in \Cref{sec:active-learning}. The specific choice of \learn's algorithm affects the query complexity of~$\learn_{\textsf{bar}}$, but not its correctness and termination.

The second component of $\learn_{\textsf{bar}}$ is a \emph{teaching assistant} (\tass) that internally interacts with~\learn and externally interacts with the teacher \teach. In the interaction with~\learn, the \tass emulates a teacher for the language $L_\teach \restriction \barNames_0$; recall from \Cref{cor:restr-regular} that this is a regular language. 
To achieve this, the \tass turns each query $(\textsf{MQ})$ and $(\textsf{EQ})$ of \learn into a corresponding query $(\textsf{MQ}_\alpha)$ and $(\textsf{EQ}_\alpha)$ for the teacher~\teach. Then the \tass translates \teach's reply back to a form that \learn can process.  
\enlargethispage*{5mm}
In more detail, the \tass proceeds as shown in \Cref{fig:talearner}:
\begin{itemize}
\item A membership query ($\textsf{MQ}$) by $\learn$ (given by a bar input $w\in \barNames_0^*/\barNames_0^\omega/\bartree[\barNamess_0](\Sigma)$) and its answer are relayed
    unchanged between \learn and \teach.
\item For an equivalence query ($\textsf{EQ}$) by \learn with hypothesis $\H$ (a bar automaton over~$\barNames_0$), the query itself is
  relayed unchanged to \teach. If~$\H$ is correct (i.e.~$L_\alpha(\H)=L_\teach$), then the learner~$\learn_{\textsf{bar}}$ successfully terminates. Otherwise \teach
    returns a counterexample, that is, a bar input $w_\teach \in \barAs/\barAw/\bartree(\Sigma)$ in the symmetric difference $L_\alpha(\H)\oplus L_\teach$. 
\item The \tass cannot simply relay the counterexample $w_\teach$ to the internal learner~\learn because the latter expects a counterexample over the finite alphabet $\barNames_0$. Therefore, $w_\teach$ needs to be suitably processed. 
First, the \tass picks any bar input~$w$ over $\barNames_0$ that is $\alpha$-equivalent to~$w_\teach$ (step $\mathsf{1}$ in \Cref{fig:talearner}). 
Such $w$ always exists because the bar languages $L_\alpha(\H)$ and~$L_\teach$ are both accepted by bar automata over $\barNames_0$.
Second, $w$ is chosen to be in the literal language of the hypothesis~$\H$ if possible (step~$\mathsf{2}$ in \Cref{fig:talearner}), and then returned to \learn. 
\end{itemize}

    The effective implementation of steps $\mathsf{1}$ and $\mathsf{2}$ by the \tass is explained in \Cref{prop:complTAS1,prop:complTA} below. First, we establish the correctness
    and the query complexity of the learning algorithm.

    \begin{theorem}[Correctness and Complexity]\label{lem:correctEverything}
      Suppose that the internal learner
      \learn needs ${M(L_\teach\restriction \barNames_0)}$ membership and $E(L_\teach\restriction \barNames_0)$ equivalence queries to learn a
      finite automaton/Büchi automaton/tree automaton for the regular language
      $L_\teach \restriction \barNames_0$. Then $\learn_{\textsf{bar}}$ learns a
      bar automaton for $L_\teach$ with at most $M(L_\teach \restriction \barNames_0)$ membership and $E(L_\teach\restriction \barNames_0)$ equivalence queries.
    \end{theorem}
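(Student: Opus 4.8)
The plan is to show that the teaching assistant $\tass$ faithfully emulates a minimally adequate teacher for the regular language $L_\teach\restriction\barNames_0$ (regular by \Cref{cor:restr-regular}), so that the query guarantees assumed for the internal learner $\learn$ transfer without loss to $\learn_{\textsf{bar}}$. The algebraic fact underlying the whole argument is the identity $L_\teach = L_\alpha(L_\teach\restriction\barNames_0)$: since $L_\teach$ is accepted by a bar automaton over $\barNames_0$, every bar input in $L_\teach$ is $\alpha$-equivalent to one over $\barNames_0$, and as $L_\teach$ is closed under $\alphaequiv$ this exhibits it as the $\alpha$-closure of its own $\barNames_0$-restriction. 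I would record this identity first, since both steps below depend on it.

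For membership queries the simulation is immediate: $\learn$ only ever queries bar inputs $w$ over $\barNames_0$, and since $L_\teach\restriction\barNames_0 = L_\teach\cap S$ for the relevant set $S$ of $\barNames_0$-inputs, the teacher's reply to `is $w\in L_\teach$?' agrees with the correct answer to `is $w\in L_\teach\restriction\barNames_0$?'. The heart of the proof, and the step I expect to be the main obstacle, is the validity of the \emph{processed} counterexamples. Suppose $\teach$ returns $w_\teach\in L_\alpha(\H)\oplus L_\teach$. The representative $w\alphaequiv w_\teach$ over $\barNames_0$ produced in step $\mathsf{1}$ still lies in $L_\alpha(\H)\oplus L_\teach$, because both bar languages are $\alpha$-closed, so $w$ and $w_\teach$ fall on the same side of each. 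I then case-split on the test of step $\mathsf{2}$. If some $w'\alphaequiv w$ belongs to the literal language $L(\H)$, then $w\in L_\alpha(\H)$, hence $w\notin L_\teach$ and therefore $w'\notin L_\teach\restriction\barNames_0$, while $w'\in L(\H)$; thus $w_c:=w'$ lies in $L(\H)\oplus(L_\teach\restriction\barNames_0)$. If no such $w'$ exists, then no $\alpha$-variant of $w$ is literally accepted, so $w\notin L_\alpha(\H)$, which forces $w\in L_\teach\restriction\barNames_0$ while $w\notin L(\H)$; thus $w_c:=w$ is again a valid counterexample. Only the logical specification of steps $\mathsf{1}$ and $\mathsf{2}$ is needed here; their effective computation via De Bruijn normal forms is deferred to \Cref{prop:complTAS1,prop:complTA}. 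This case distinction is precisely where the mismatch between literal and bar languages is reconciled.

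With faithful simulation in hand, correctness and the query bound follow without overhead. Each internal query of $\learn$ is turned by $\tass$ into exactly one external query of $\learn_{\textsf{bar}}$, and the counterexample processing issues no additional queries to $\teach$, since membership in the literal language $L(\H)$ is decided on the known hypothesis. Hence $\learn$ performs a legitimate learning run for $L_\teach\restriction\barNames_0$ and, by assumption, would halt after $M(L_\teach\restriction\barNames_0)$ membership and $E(L_\teach\restriction\barNames_0)$ equivalence queries at a hypothesis with $L(\H)=L_\teach\restriction\barNames_0$; by the identity above this yields $L_\alpha(\H)=L_\teach$, so $\teach$ answers `Yes' and $\learn_{\textsf{bar}}$ terminates with a correct bar automaton. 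Termination may occur even earlier, at any equivalence query where $\teach$ already finds $L_\alpha(\H)=L_\teach$, which only lowers the counts and accounts for the claimed upper bounds. Finally I would observe that the three instantiations—finite words, $\omega$-words, and finite trees—are uniform, differing only in which $\alpha$-equivalence procedure (\Cref{lem:aeDBNF}, \Cref{lem:aeinfbar}, or \Cref{lem:aeTrees}) is invoked when computing the $\barNames_0$-representative in step $\mathsf{1}$.
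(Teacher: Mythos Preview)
Your proposal is correct and follows essentially the same approach as the paper's proof: both establish that the \tass faithfully simulates a teacher for the regular language $L_\teach\restriction\barNames_0$, handling membership queries trivially and verifying that processed counterexamples land in $L(\H)\oplus(L_\teach\restriction\barNames_0)$. The only cosmetic difference is the pivot of the case analysis---you split on the outcome of step~$\mathsf{2}$ (does some $w'\alphaequiv w$ lie in $L(\H)$?), whereas the paper splits on which side of $L_\alpha(\H)\oplus L_\teach$ the teacher's counterexample $w_\teach$ falls---but these two case distinctions are logically equivalent and yield identical conclusions.
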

    \begin{proof}[Sketch]
      Correctness is immediate from termination of $\learn_{\textsf{bar}}$. For termination, one only needs to show that
      the processed counterexamples $w_c$ are elements of the symmetric difference $L(\H) \oplus
      (L_\teach\restriction \barNames_0)$. This follows from the computations in steps 1 and 2 and the closure
      of $L_\teach$ under $\alpha$-equivalence.
    \end{proof}
\begin{expl}
  In the case where $\learn_{\textsf{bar}}$ is instantiated with Angluin's
  $\Lstar$, the complexity of \cref{lem:correctEverything} is as
  follows. Let $k$ be the size of the alphabet $\barNames_0$, and let $n$ be the number of states of the minimal DFA accepting the regular language $L_{\teach} \restriction \barNames_0$. Furthermore, let $m$ be the maximum length of any counterexample given during the run
  of $\learn_{\textsf{bar}}$ by $\teach$. Then, $\learn_{\textsf{bar}}$ (just as~$\Lstar$) asks at
  most $n$ equivalence queries and $\mathcal{O}(kn^2m)$ membership
  queries~\cite[pp.~95--96]{angluin87}. In particular, the total query complexity is polynomial in the three parameters.
\end{expl}

To implement steps $\mathsf{1}$ and $\mathsf{2}$ in \Cref{fig:talearner}, we make use of our procedures for checking $\alpha$-equivalence (\Cref{sec:checkingAlphaEquiv}) and obtain the following complexity:
    \begin{prop} \label{prop:complTAS1}
        For bar word and tree automata, step $\mathsf{1}$ can be implemented in polynomial time. For
        bar Büchi automata, it can be implemented in space exponential in the size of the unknown
        alphabet $\barNames_0$ and the length of $w_\teach$.
    \end{prop}
    \begin{prop} \label{prop:complTA}
      For all bar automata, step $\mathsf{2}$ can be implemented in parametric 
      deterministic linear space, with the size of $\barNames_0$ as the parameter.
    \end{prop}

    \begin{rem}
      \begin{enumerate}
        \item For finite bar strings and bar trees, we can also implement step~$\mathsf{2}$ in non-deterministic polynomial time:
          guess a bar input $w'$ of the same length/shape as~$w$, verify that
          $w'\alphaequiv w$ via their De Bruijn normal forms, and then verify that the bar automaton~$\H$ literally accepts $w'$ by guessing an accepting run.
          These steps could be implemented with a state-of-the-art SAT solver.
        \item As usual for
          active learning methods, the above complexity analysis focuses on the effort of the learner. On the side of teacher, equivalence queries can be implemented by forming closures and then checking for literal language equivalence. This contrasts the case of classical non-deterministic register automata where equivalence is undecidable~\cite{KaminskiFrancez94}.
      \end{enumerate}
    \end{rem}

    \section{Handling Unknown Input Alphabets} \label{sec:alphabet}

    Finally, we discuss how to lift the previous requirement of having a known finite
    alphabet generating the unknown bar language $L_\teach$. The idea is to run the learner~$\learn_{\textsf{bar}}$ with \emph{any} finite alphabet $\barNames_0\seq \barNames$. In case $\learn_{\textsf{bar}}$ gets stuck, it knows that the present alphabet is too small and reboots the learning process with a suitably extended alphabet. Details are as follows.

Initially, $\learn_{\textsf{bar}}$ is executed (running the learning algorithm of \Cref{fig:talearner}) with
the trivial alphabet $\barNames_0=\emptyset$. If a correct hypothesis is found, then $\learn_{\textsf{bar}}$
successfully terminates. If \teach delivers a counterexample $w_\teach$, then it may occur in the computation of step $\mathsf{1}$ that no $w\alphaequiv w_\teach$ over $\barNames_0$ is found. In this case, $\barNames_0$ is extended to a larger finite alphabet  $\barNames_0\seq \barNames_0'\seq \barNames$ such that a representative $w$ of $w_\teach$ over $\barNames_0'$ exists; we explain below how to compute a suitable~$\barNames_0'$. Then~$\learn_{\textsf{bar}}$ restarts with the new alphabet $\barNames_0'$. This process of alphabet extension and restarting~$\learn_{\textsf{bar}}$ is repeated each time step $\mathsf{1}$ gets stuck. At some stage, the current alphabet is large enough to generate the unknown bar language $L_\teach$, and $\learn_{\textsf{bar}}$ runs to completion.

It remains to explain how to extend the current $\barNames_0$ to a larger $\barNames_0'$. In the case of finite bar strings/trees, compute the De Bruijn normal form of $w_\teach$ and choose the extension $\barNames_0\seq \barNames_0'$ to be a smallest possible alphabet such that some $w\alphaequiv w_\teach$ over $\barNames_0'$ exists. The number of bar and plain letters that need to be added to $\barNames_0$ can be read off the normal form.

 In the case of infinite bar strings, construct a Büchi automaton accepting the closure of the ultimately periodic bar string $w_\teach$ with respect to the alphabet $\barNames_0\cup \barNames_1$, where $\barNames_1$ is the set of letters appearing in $w_\teach$, and then search for a smallest subalphabet $\barNames_0\seq\barNames_0'\seq (\barNames_0\cup\barNames_1)$%
 such that this closure contains some ultimately periodic string over~$\barNames_0'$. Note that $\barNames_0'$ contains all free names of $w_\teach$.

    This process of iterated alphabet extension approximates a fitting alphabet from below, ending at a smallest possible $\barNames_0$ that generates $L_\teach$, that is, such that $L_\teach$ is the closure of $L_\teach\restriction \barNames_0$ under $\alpha$-equivalence. To estimate the query complexity of the above algorithm, let us introduce some notation. We write $M(L)$ and $E(L)$ for the number of membership and equivalence queries required by the internal learner \learn to learn a regular language $L$. Moreover, for $k\in \Nat$ we put $M_k = \max \set{M(L_\teach \restriction \barNames_0) \mid \barNames_0\seq \barNames,\, \card{\barNames_0}=k}$; similarly for $E_k$.
\begin{theorem}[Correctness]
Let $n$ be the least cardinality of any finite alphabet $\barNames_0\seq \barNames$ generating $L_\teach$. Then the above extended learning algorithm infers a bar automaton for $L_\teach$ with at most $\sum_{k = 0}^n M_k$ membership queries and $\sum_{k = 0}^n E_k$ 
equivalence queries.
    \end{theorem}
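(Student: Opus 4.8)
The plan is to show two things: (1) the extended algorithm terminates after reaching some alphabet of size at most $n$, and (2) summing the query counts across all attempted alphabets yields the stated bounds. The correctness of the inferred automaton upon termination is already guaranteed by \Cref{lem:correctEverything}, so the work lies in bounding the sequence of alphabet-extension restarts and accounting for queries across restarts.

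First I would argue that the algorithm attempts only finitely many alphabets, each strictly larger than its predecessor, and that it terminates once the current alphabet is large enough to generate $L_\teach$. The key point is that alphabet extension happens \emph{only} when step $\mathsf{1}$ gets stuck, i.e.\ when no representative $w\alphaequiv w_\teach$ over the current $\barNames_0$ exists. By the mechanism described in \Cref{sec:alphabet}, each extension produces a \emph{smallest} alphabet $\barNames_0\subsetneq \barNames_0'$ admitting such a representative, so the cardinalities form a strictly increasing sequence $\card{\barNames_0}<\card{\barNames_0'}<\cdots$. Crucially, I would show that the algorithm never extends beyond size $n$: once the current alphabet has cardinality $n$ and is (up to renaming) a generating alphabet for $L_\teach$, every counterexample $w_\teach\in L_\alpha(\H)\oplus L_\teach$ is $\alpha$-equivalent to some bar input over the current alphabet (because both $L_\alpha(\H)$ and $L_\teach$ are generated over it by \Cref{cor:restr-regular} and \Cref{prop:barautclosure}), so step $\mathsf{1}$ never gets stuck again and \Cref{lem:correctEverything} applies to drive the run to completion. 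Hence the algorithm attempts alphabets of sizes $0,1,2,\dots$ up to at most $n$.

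Next I would account for the queries. On the attempt with an alphabet $\barNames_0$ of size $k$, the internal learner \learn together with the \tass behaves exactly as in \Cref{lem:correctEverything}: every membership and equivalence query is forwarded to \teach, so this attempt costs at most $M(L_\teach\restriction\barNames_0)\le M_k$ membership and $E(L_\teach\restriction\barNames_0)\le E_k$ equivalence queries, using the definition $M_k=\max\setw{M(L_\teach\restriction\barNames_0)}{\card{\barNames_0}=k}$ and similarly for $E_k$. Since at most one attempt is made at each size $k\in\{0,1,\dots,n\}$, summing over all restarts gives the bound $\sum_{k=0}^n M_k$ membership and $\sum_{k=0}^n E_k$ equivalence queries.

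\textbf{The main obstacle} I expect is the termination argument, specifically verifying that the smallest-alphabet extension rule cannot overshoot $n$ and, more subtly, that the extensions are genuinely monotone so that each size is visited at most once. The delicate point is that step $\mathsf{1}$ gets stuck precisely because the \emph{teacher's} counterexample $w_\teach$ uses names/binders that cannot be $\alpha$-renamed into the current alphabet; I would argue that the free names of $w_\teach$ must be among those of any generating alphabet (as noted for the infinite-word case, $\barNames_0'$ necessarily contains all free names of $w_\teach$), and that the number of distinct simultaneously-live bound names---readable from the De Bruijn normal form---is bounded by the degree of a generating automaton, hence by $n$. Combining these, the smallest admitting alphabet for $w_\teach$ never needs more than $n$ letters, which closes the termination argument and yields the summation over $k=0,\dots,n$.
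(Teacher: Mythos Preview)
Your approach matches the paper's: strictly increasing alphabet sizes starting at $0$, at most one iteration per size, each iteration bounded by $M_k$ (resp.\ $E_k$), then sum. The paper's proof is considerably terser than yours: it simply writes $0=n_0<\cdots<n_k=n$ for the sequence of alphabet sizes, notes that iteration $i$ costs at most $M_{n_i}$, and bounds $\sum_{i=0}^k M_{n_i}\le\sum_{i=0}^n M_i$.

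The difference is that the paper \emph{asserts} $n_k=n$ without argument, relying on the preceding informal text (``this process \ldots\ approximates a fitting alphabet from below, ending at a smallest possible $\barNames_0$ that generates $L_\teach$''). Your ``main obstacle'' paragraph is precisely an attempt to justify this assertion, and you are right that it is the only non-trivial point. One caution: your phrasing ``once the current alphabet has cardinality $n$ and is (up to renaming) a generating alphabet'' slightly conflates two things---having size $n$ does not by itself guarantee the current alphabet generates $L_\teach$. The actual reason the process cannot overshoot $n$ is that each extension is chosen \emph{minimally} for the specific counterexample $w_\teach$, and $w_\teach$ (lying in $L_\alpha(\H)\oplus L_\teach$, both of which are generated by some alphabet of size $n$ containing the current $\barNames_0$) always admits a representative over such an alphabet; hence the minimal extension never exceeds size $n$. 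Your De~Bruijn/free-name reasoning gestures at this but would need tightening to be airtight. The paper sidesteps all of this by fiat.
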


\begin{proof}
Let $n_i$ be the size of the finite alphabet used in the $i$th iteration of $\learn_{\textsf{bar}}$. Since the alphabet grows in every iteration, we have $0= n_0<\ldots<n_k=n$, where $k\!+\!1$ is the number of iterations. The number of membership queries in the $i$th iteration is~at~most~$M_{n_i}$, resulting in at most
$\sum_{i=0}^k M_{n_i} \leq \sum_{i=0}^n M_i$ queries. The same reasoning applies to equivalence~queries.
\end{proof}

    \begin{rem}
        An alternative to restarting the learner $\learn_{\textsf{bar}}$ each time the alphabet grows is to use an internal learner $\learn$ that can handle \emph{dynamically growing input alphabets}, that is,
        can process counterexamples that are not over the current finite alphabet~\cite{ihs13}. The \tass procedure then presents counterexamples to \learn using the minimal number of
        additional letters. The learner \learn
        needs to handle the \emph{grown} alphabet internally and proceed. This method potentially saves queries compared to a  restart but restricts the class of
        possible learners significantly. To the best of our knowledge, learners for dynamically growing alphabets have only been studied in the case of deterministic automata over finite words.
    \end{rem}

    \section{Conclusion and Future Work}\label{sec:concl}

    We have investigated the learnability of bar word languages, bar $\omega$-languages and bar tree languages in Angluin's minimally adequate teacher (MAT) framework. For this purpose, we
    have introduced a learning algorithm $\learn_{\textsf{bar}}$ which builds on an arbitrary MAT-based learner for
    standard word/$\omega$-/tree languages over finite alphabets. Its query complexity is completely determined by that of
    the underlying learner.
    One of the key ingredients  of $\learn_{\mathsf{bar}}$ is an efficient procedure for checking
    $\alpha$-equivalence of bar strings/trees. We have developed a
    version of \emph{De Bruijn levels}~\cite{db72} reducing this problem to checking syntactic
    equality.

When learning bar $\omega$-languages, we 
    currently compute closures of bar automata, which is not needed in the cases of finite bar strings and trees where we instead guess a De Bruijn normal form. It remains to investigate whether a guessing approach is possible in the infinite case; as a prerequisite, this would require a suitable normal form for ultimately periodic bar strings.

    Additionally, a coinductive approach to $\alpha$-equivalence as proposed by Blan\-chette et al.~\cite{bgpt19} would be interesting to study in order to streamline the definition of $\alpha$-equivalence and look for a possible use in defining suitable normal forms.

    Looking at data languages (bar languages under local
    freshness), a further question is whether we can efficently learn such languages without needing an explicit representation
    as bar automata, and with queries asking for membership of data inputs (rather than bar inputs) and equivalence of data languages (rather than bar languages). 
	The key challenge is that the local freshness operator $L\mapsto \LF(L)$ is not injective, so that there is no straightforward back-and-forth translation between bar and data languages.
    
Another open problem is the possibility of conformance testing for bar automata, which is a common approach for implementing equivalence queries in a practical black-box learning setting (e.g.~\cite{DBLP:journals/cacm/Vaandrager17}). 
	Likewise, passive learning of bar languages is an open problem.

	 An alternative to the approach of the present paper would be to develop learning algorithms directly
	at the level of bar automata, or equivalent models in the category of nominal sets (such as RNNAs), instead of reducing
	them to standard learning algorithms over finite alphabets. We expect that categorical
	approaches to automata learning~\cite{us20,hkrss22} would be a good starting point.
      
\clearpage
    \bibliographystyle{splncs04}
    \bibliography{refs}

@string{springerch="Springer, Cham"}

@string{springerbh="Springer, Berlin, Heidelberg"}

@string{elsevier="Elsevier"}

@string{dagstuhl="Schloss Dagstuhl -- Leibniz-Zentrum f{\"{u}}r Informatik"}

@string{acm="ACM"}

@string{ieee="{IEEE} Computer Society"}

@string{tis="The Internet Society"}

@string{lncs="Lecture Notes in Computer Science"}

@string{lmcs="Logical Methods in Computer Science"}

@string{tcs="Theoretical Computer Science"}

@string{iandc="Information and Computation"}

@string{incon="Information and Control"}

@string{ssm="Software and Systems Modeling"}

@string{fac="Formal Aspects of Computing"}

@string{im="Indagationes Mathematicae (Proceedings)"}

@string{fi="Fundamenta Informaticae"}

@string{ijfcs="International Journal of Foundations of Computer Science"}

@string{jalc="Journal of Automata, Languages and Combinatorics"}

@string{comacm="Communications of the ACM"}

@string{lipics="Leibniz International Proceedings in Informatics (LIPIcs)"}

@string{tocl="ACM Transactions on Computational Logic"}

@inproceedings{ush16,
  author       = {Natsuki Urabe and
                  Shunsuke Shimizu and
                  Ichiro Hasuo},
  OPTeditor       = {Jos{\'{e}}e Desharnais and
                  Radha Jagadeesan},
  title        = {Coalgebraic Trace Semantics for Buechi and Parity Automata},
  booktitle    = {27th International Conference on Concurrency Theory, {CONCUR} 2016},
  series       = lipics,
  volume       = {59},
  pages        = {24:1--24:15},
  publisher    = dagstuhl,
  year         = {2016},
  doi          = {10.4230/LIPICS.CONCUR.2016.24},
}

@article{rutten00,
  author  = {Rutten, Jan J. M. M.},
  title   = {Universal Coalgebra: A Theory of Systems},
  journal = {tcs},
  publisher = elsevier,
  volume  = {249},
  number  = {1},
  pages   = {3--80},
  year    = {2000},
  doi     = {10.1016/S0304-3975(00)00056-6}
}

@InProceedings{saka97,
  author       = {Hiroshi Sakamoto},
  editor       = {Ming Li and Akira Maruoka},
  title        = {Learning simple deterministic finite-memory automata},
  booktitle    = {Algorithmic Learning Theory. {{ALT} 1997}},
  series       = lncs,
  volume       = {1316},
  year         = {1997},
  pages        = {416--431},
  publisher    = springerbh,
  doi          = {10.1007/3-540-63577-7_58}
}

@InProceedings{afkv15,
  author       = {Fides Aarts and Paul Fiterau-Brostean and Harco Kuppens and Frits Vaandrager},
  editor       = {Martin Leucker and Camilo Rueda and Frank D. Valencia},
  title        = {Learning Register Automata with Fresh Value Generation},
  booktitle    = {Theoretical Aspects of Computing - {{ICTAC} 2015}},
  series       = lncs,
  volume       = {9399},
  year         = {2015},
  pages        = {165--183},
  publisher    = springerch,
  doi          = {10.1007/978-3-319-25150-9_11}
}

@InProceedings{fmu22,
  author       = {Florian Frank and Stefan Milius and Henning Urbat},
  editor       = {Hansen, Helle Hevid and Zanasi, Fabio},
  title        = {Coalgebraic Semantics for Nominal Automata},
  booktitle    = {Coalgebraic Methods in Computer Science {({CMCS} 2022)}},
  publisher    = springerch,
  isbn         = {978-3-031-10736-8},
  year         = {2022},
  month        = {July},
  series       = lncs,
  volume       = {13225},
  pages        = {45--66},
  doi          = {10.1007/978-3-031-10736-8_3}
}

@InProceedings{CEGAR12,
  author       = {Fides Aarts and Faranak Heidarian and Harco Kuppens and Petur Olsen and Frits W. Vaandrager},
  editor       = {Dimitra Giannakopoulou and Dominique M{\'{e}}ry},
  title        = {Automata Learning through Counterexample Guided Abstraction Refinement},
  booktitle    = {{{FM} 2012}: Formal Methods},
  series       = lncs,
  volume       = {7436},
  year         = {2012},
  pages        = {10--27},
  publisher    = springerbh,
  doi          = {10.1007/978-3-642-32759-9_4}
}

@InProceedings{bm15,
  author       = {Borja Balle and Mehryar Mohri},
  editor       = {Andreas Maletti},
  title        = {Learning Weighted Automata},
  booktitle    = {Algebraic Informatics. {{CAI} 2015}},
  series       = lncs,
  volume       = {9270},
  pages        = {1--21},
  publisher    = springerch,
  year         = {2015},
  doi          = {10.1007/978-3-319-23021-4_1}
}

@InProceedings{bhkl09,
  author       = {Benedikt Bollig and Peter Habermehl and Carsten Kern and Martin Leucker},
  editor       = {Craig Boutilier},
  title        = {Angluin-Style Learning of {NFA}},
  booktitle    = {{{IJCAI} 2009}, Proceedings of the 21st International Joint Conference on Artificial Intelligence},
  series       = {IJCAI'09},
  location     = {Pasadena, California, USA},
  year         = {2009},
  pages        = {1004–-1009},
  publisher    = {Morgan Kaufmann Publishers Inc.}
}

@InProceedings{BolligHLM13,
  author       = {Benedikt Bollig and Peter Habermehl and Martin Leucker and Benjamin Monmege},
  editor       = {Marie-Pierre B{\'{e}}al and Olivier Carton},
  title        = {A Fresh Approach to Learning Register Automata},
  booktitle    = {Developments in Language Theory. {{DLT} 2013}},
  series       = lncs,
  volume       = {7907},
  pages        = {118--130},
  publisher    = springerbh,
  year         = {2013},
  doi          = {10.1007/978-3-642-38771-5_12}
}

@InProceedings{dhlt14,
  author       = {Normann Decker and Peter Habermehl and Martin Leucker and Daniel Thoma},
  editor       = {Gianfranco Ciardo and Ekkart Kindler},
  title        = {Learning Transparent Data Automata},
  booktitle    = {Application and Theory of Petri Nets and Concurrency. {{PETRI NETS} 2014}},
  series       = lncs,
  volume       = {8489},
  pages        = {130--149},
  publisher    = springerch,
  year         = {2014},
  doi          = {10.1007/978-3-319-07734-5_8}
}

@InProceedings{DBLP:conf/tacas/DierlFHJST24,
  author       = {Simon Dierl and Paul Fiterau{-}Brostean and Falk Howar and Bengt Jonsson and Konstantinos Sagonas and Fredrik T{\aa}quist},
  editor       = {Bernd Finkbeiner and Laura Kov{\'{a}}cs},
  title        = {Scalable Tree-based Register Automata Learning},
  booktitle    = {Tools and Algorithms for the Construction and Analysis of Systems. {{TACAS} 2024}},
  series       = lncs,
  volume       = {14571},
  pages        = {87--108},
  publisher    = springerch,
  year         = {2024},
  doi          = {10.1007/978-3-031-57249-4_5}
}

@InProceedings{dh03,
  author       = {Frank Drewes and Johanna H{\"{o}}gberg},
  editor       = {Zolt{\'{a}}n {\'{E}}sik and Zolt{\'{a}}n F{\"{u}}l{\"{o}}p},
  title        = {Learning a Regular Tree Language from a Teacher},
  booktitle    = {Developments in Language Theory. {{DLT} 2003}},
  series       = lncs,
  volume       = {2710},
  pages        = {279--291},
  publisher    = springerbh,
  year         = {2003},
  doi          = {10.1007/3-540-45007-6_22}
}

@InProceedings{fcctw08,
  author       = {Azadeh Farzan and Yu-Fang Chen and Edmund M. Clarke and Yih-Kuen Tsay and Bow-Yaw Wang},
  editor       = {C. R. Ramakrishnan and Jakob Rehof},
  title        = {Extending Automated Compositional Verification to the Full Class of Omega-Regular Languages},
  booktitle    = {Tools and Algorithms for the Construction and Analysis of Systems. {{TACAS} 2008}},
  series       = lncs,
  volume       = {4963},
  pages        = {2--17},
  publisher    = springerbh,
  year         = {2008},
  doi          = {10.1007/978-3-540-78800-3_2}
}

@InProceedings{FJV16,
  author       = {Paul Fiterau{-}Brostean and Ramon Janssen and Frits W. Vaandrager},
  editor       = {Swarat Chaudhuri and Azadeh Farzan},
  title        = {Combining Model Learning and Model Checking to Analyze {TCP} Implementations},
  booktitle    = {Computer Aided Verification. {{CAV} 2016}},
  series       = lncs,
  volume       = {9780},
  pages        = {454--471},
  publisher    = springerch,
  year         = {2016},
  doi          = {10.1007/978-3-319-41540-6_25}
}

@InProceedings{DBLP:conf/ndss/Fiterau-Brostean23,
  author       = {Paul Fiterau{-}Brostean and Bengt Jonsson and Konstantinos Sagonas and Fredrik T{\aa}quist},
  title        = {Automata-Based Automated Detection of State Machine Bugs in Protocol Implementations},
  booktitle    = {30th Annual Network and Distributed System Security Symposium, {{NDSS} 2023}},
  publisher    = tis,
  year         = {2023},
  doi          = {10.14722/ndss.2023.23068}
}

@InProceedings{FiterauEtAl17,
  author       = {Paul Fiterau{-}Brostean and Toon Lenaerts and Erik Poll and Joeri de Ruiter and Frits W. Vaandrager and Patrick Verleg},
  title        = {Model learning and model checking of {SSH} implementations},
  booktitle    = {Proceedings of the 24th ACM SIGSOFT International {SPIN} Symposium on Model Checking of Software ({{SPIN} 2017})},
  pages        = {142--151},
  publisher    = acm,
  year         = {2017},
  doi          = {10.1145/3092282.3092289}
}

@InProceedings{gp99,
  author       = {Murdoch J. Gabbay and Andrew M. Pitts},
  title        = {A new approach to abstract syntax involving binders}, 
  booktitle    = {Proceedings of the 14th Annual {IEEE} Symposium on Logic in Computer Science ({LICS} 1999)}, 
  pages        = {214--224},
  publisher    = ieee,
  year         = {1999},
  doi          = {10.1109/LICS.1999.782617}
}

@InProceedings{GrigoreEA13,
  author       = {Radu Grigore and Dino Distefano and Rasmus Petersen and Nikos Tzevelekos},
  editor       = {Nir Piterman and Scott A. Smolka},
  title        = {Runtime Verification Based on Register Automata},
  booktitle    = {Tools and Algorithms for the Construction and Analysis of Systems. {{TACAS} 2013}},
  series       = lncs,
  volume       = {7795},
  pages        = {260--276},
  publisher    = springerbh,
  year         = {2013},
  doi          = {10.1007/978-3-642-36742-7_19}
}

@InProceedings{hkrss22,
  author       = {Heerdt, Gerco van and Tobias Kapp{\'{e}} and Jurriaan Rot and Matteo Sammartino and Alexandra Silva},
  editor       = {Helle Hvid Hansen and Fabio Zanasi},
  title        = {A Categorical Framework for Learning Generalised Tree Automata},
  booktitle    = {Coalgebraic Methods in Computer Science. {{CMCS} 2022}},
  series       = lncs,
  volume       = {13225},
  pages        = {67--87},
  publisher    = springerch,
  year         = {2022},
  doi          = {10.1007/978-3-031-10736-8_4}
}

@InProceedings{hkrs20,
  author       = {Heerdt, Gerco van and Clemens Kupke and Jurriaan Rot and Alexandra Silva},
  editor       = {Jean Goubault-Larrecq and Barbara K{\"{o}}nig},
  title        = {Learning Weighted Automata over Principal Ideal Domains},
  booktitle    = {Foundations of Software Science and Computation Structures. {{FoSSaCS} 2020}},
  series       = lncs,
  volume       = {12077},
  pages        = {602--621},
  publisher    = springerch,
  year         = {2020},
  doi          = {10.1007/978-3-031-10736-8_4}
}

@InProceedings{HowarISBJ12,
  author       = {Falk Howar and Malte Isberner and Bernhard Steffen and Oliver Bauer and Bengt Jonsson},
  editor       = {Tiziana Margaria and Bernhard Steffen},
  title        = {Inferring Semantic Interfaces of Data Structures},
  booktitle    = {Leveraging Applications of Formal Methods, Verification and Validation. Technologies for Mastering Change. {{ISoLA} 2012}},
  series       = lncs,
  volume       = {7609},
  pages        = {554--571},
  publisher    = springerbh,
  year         = {2012},
  doi          = {10.1007/978-3-642-34026-0_41}
}

@InCollection{HowarEA19,
  author    = {Falk Howar and Bengt Jonsson and Frits Vaandrager},
  editor    = {Bernhard Steffen and Gerhard Woeginger},
  title     = {Combining Black-Box and White-Box Techniques for Learning Register Automata},
  booktitle = {Computing and Software Science: State of the Art and Perspectives},
  series    = lncs,
  volume    = {10000},
  pages     = {563--588},
  publisher = springerch,
  year      = {2019},
  doi       = {10.1007/978-3-319-91908-9_26}
}

@InProceedings{HowarS16,
  author       = {Falk Howar and Bernhard Steffen},
  editor       = {Amel Bennaceur and Reiner H{\"{a}}hnle and Karl Meinke},
  title        = {Active Automata Learning in Practice},
  booktitle    = {Machine Learning for Dynamic Software Analysis: Potentials and Limits},
  series       = lncs,
  volume       = {11026},
  pages        = {123--148},
  publisher    = springerch,
  year         = {2018},
  doi          = {10.1007/978-3-319-96562-8_5}
}

@InProceedings{ihs13,
  author       = {Malte Isberner and Falk Howar and Bernhard Steffen},
  editor       = {Guillaume Brat and Neha Rungta and Arnaud Venet},
  title        = {Inferring Automata with State-Local Alphabet Abstractions},
  booktitle    = {{NASA} Formal Methods. {{NFM} 2013}},
  series       = lncs,
  volume       = {7871},
  pages        = {124--138},
  publisher    = springerbh,
  year         = {2013},
  doi          = {10.1007/978-3-642-38088-4_9}
}

@InProceedings{ihs14,
  author       = {Malte Isberner and Falk Howar and Bernhard Steffen},
  editor       = {Borzoo Bonakdarpour and Scott A. Smolka},
  title        = {The {TTT} Algorithm: A Redundancy-Free Approach to Active Automata Learning},
  booktitle    = {Runtime Verification. {{RV} 2014}},
  series       = lncs,
  volume       = {8734},
  pages        = {307--322},
  publisher    = springerch,
  year         = {2014},
  doi          = {10.1007/978-3-319-11164-3_26}
}

@InProceedings{KurtzEA07,
  author       = {Klaas K{\"{u}}rtz and Ralf K{\"{u}}sters and Thomas Wilke},
  title        = {Selecting theories and nonce generation for recursive protocols},
  booktitle    = {Proceedings of the 2007 ACM workshop on Formal methods in security engineering ({{FMSE} 2007})},
  pages        = {61--70},
  publisher    = acm,
  year         = {2007},
  doi          = {10.1145/1314436.1314445},
}

@InProceedings{MargariaNRS04,
  author       = {Tiziana Margaria and Oliver Niese and Harald Raffelt and Bernhard Steffen},
  title        = {Efficient test-based model generation for legacy reactive systems},
  booktitle    = {Proceedings of the Ninth {IEEE} International High-Level Design Validation and Test Workshop ({HLDVT} 2004)},
  pages        = {95--100},
  publisher    = ieee,
  year         = {2004},
  doi          = {10.1109/HLDVT.2004.1431246}
}

@InProceedings{mssks17,
  author       = {Joshua Moerman and Matteo Sammartino and Alexandra Silva and Bartek Klin and Micha{\l{}} Szynwelski},
  title        = {Learning nominal automata},
  booktitle    = {Proceedings of the 44th ACM SIGPLAN Symposium on Principles of Programming Languages ({POPL} '17)},
  pages        = {613--625},
  publisher    = acm,
  year         = {2017},
  doi          = {10.1145/3009837.3009879}
}

@InProceedings{ps24,
  author       = {Simon Prucker and Lutz Schr{\"{o}}der},
  editor       = {Rupak Majumdar and Alexandra Silva},
  title        = {Nominal Tree Automata with Name Allocation},
  booktitle    = {35th International Conference on Concurrency Theory ({{CONCUR} 2024})},
  series       = lipics,
  volume       = {311},
  pages        = {35:1--35:17},
  publisher    = dagstuhl,
  year         = {2024},
  doi          = {10.4230/LIPIcs.CONCUR.2024.35}
}

@InProceedings{skmw17,
  author       = {Lutz Schr{\"{o}}der and Dexter Kozen and Stefan Milius and Thorsten Wi{\ss}mann},
  editor       = {Javier Esparza and Andrzej S. Murawski},
  title        = {Nominal Automata with Name Binding},
  booktitle    = {Foundations of Software Science and Computation Structures. {{FoSSaCS} 2017}},
  series       = lncs,
  volume       = {10203},
  pages        = {124--142},
  publisher    = springerbh,
  year         = {2017},
  doi          = {10.1007/978-3-662-54458-7_8}
}

@InProceedings{SHV16,
  author       = {Mathijs Schuts and Jozef Hooman and Frits W. Vaandrager},
  editor       = {Erika {\'{A}}brah{\'{a}}m and Marieke Huisman},
  title        = {Refactoring of Legacy Software Using Model Learning and Equivalence Checking: An Industrial Experience Report},
  booktitle    = {Integrated Formal Methods. {{IFM} 2016}},
  series       = lncs,
  volume       = {9681},
  pages        = {311--325},
  publisher    = springerch,
  year         = {2016},
  doi          = {10.1007/978-3-319-33693-0_20}
}

@InProceedings{SmeenkMVJ15,
  author       = {Wouter Smeenk and Joshua Moerman and Frits Vaandrager and David N. Jansen},
  editor       = {Michael Butler and Sylvain Conchon and Fatiha Za{\"{i}}di},
  title        = {Applying Automata Learning to Embedded Control Software},
  booktitle    = {Formal Methods and Software Engineering. {{ICFEM} 2015}},
  series       = lncs,
  volume       = {9407},
  pages        = {67--83},
  publisher    = springerch,
  year         = {2015},
  doi          = {10.1007/978-3-319-25423-4_5}
}

@InProceedings{uhms21,
  author       = {Henning Urbat and Daniel Hausmann and Stefan Milius and Lutz Schr{\"{o}}der},
  editor       = {Serge Haddad and Daniele Varacca},
  title        = {Nominal B{\"{u}}chi Automata with Name Allocation},
  booktitle    = {32nd International Conference on Concurrency Theory ({{CONCUR} 2021})},
  series       = lipics,
  volume       = {203},
  pages        = {4:1--4:16},
  publisher    = dagstuhl,
  year         = {2021},
  doi          = {10.4230/LIPIcs.CONCUR.2021.4}
}

@InProceedings{blls25,
  author       = {Bohn, Le\'{o}n and Li, Yong and L\"{o}ding, Christof and Schewe, Sven},
  title        = {{Saturation Problems for Families of Automata}},
  editor       = {Censor-Hillel, Keren and Grandoni, Fabrizio and Ouaknine, Jo\"{e}l and Puppis, Gabriele},
  booktitle    = {52nd International Colloquium on Automata, Languages, and Programming ({{ICALP} 2025})},
  series       = lipics,
  volume       = {334},
  pages        = {146:1--146:19},
  publisher    = dagstuhl,
  year         = {2025},
  doi          = {10.4230/LIPIcs.ICALP.2025.146}
}

@InProceedings{us20,
  author       = {Henning Urbat and Lutz Schr{\"{o}}der},
  title        = {Automata Learning: An Algebraic Approach},
  booktitle    = {Proceedings of the 35th Annual ACM/IEEE Symposium on Logic in Computer Science ({LICS} '20)},
  pages        = {900--914},
  publisher    = acm,
  year         = {2020},
  doi          = {10.1145/3373718.3394775}
}

@InProceedings{bgpt19,
  author = {Jasmin Christian Blanchette and Lorenzo Gheri and Andrei Popescu and Dmitriy Traytel},
  title = {Bindings as Bounded Natural Functors},
  booktitle = {Proceedings of the ACM on Programming Languages},
  pages = {22:1--22:34},
  volume = {3, issue POPL},
  year = {2019},
  doi = {10.1145/3290335}
}

@InProceedings{vgrw22,
  author       = {Frits Vaandrager and Bharat Garhewal and Jurriaan Rot and Thorsten Wi{\ss}mann},
  editor       = {Data Fisman and Grigore Rosu},
  title        = {A New Approach for Active Automata Learning Based on Apartness},
  booktitle    = {Tools and Algorithms for the Construction and Analysis of Systems. {{TACAS} 2022}},
  pages        = {223--243},
  series       = lncs,
  volume       = {13243},
  publisher    = springerch,
  year         = {2022},
  doi          = {10.1007/978-3-030-99524-9_12}
}

@article{angluin87,
  author    = {Dana Angluin},
  title     = {Learning regular sets from queries and counterexamples},
  journal   = iandc,
  volume    = {75},
  number    = {2},
  pages     = {87-106},
  year      = {1987},
  issn      = {0890-5401},
  doi       = {10.1016/0890-5401(87)90052-6}
}

@article{sbbr13,
  title      = {Generalizing determinization from automata to coalgebras},
  author     = {Alexandra Silva and Filippo Bonchi and Marcello M. Bonsangue and Jan J.M.M. Rutten},
  journal    = lmcs,
  year       = {2013},
  volume     = {9},
  number     = {1},
  pages      = {9:1--9:27},
  issn       = {1860-5974},
  doi        = {10.2168/LMCS-3(4:11)2007},
}

@article{af16,
  author    = {Dana Angluin and Dana Fisman},
  title     = {Learning regular omega languages},
  journal   = tcs,
  volume    = {650},
  pages     = {57--72},
  year      = {2016},
  issn      = {0304-3975},
  doi       = {10.1016/j.tcs.2016.07.031}
}

@article{AslamCSB20,
	author    = {Kousar Aslam and Loek Cleophas and Ramon R. H. Schiffelers and Mark van den Brand},
	title     = {Interface protocol inference to aid understanding legacy software components},
	journal   = ssm,
	volume    = {19},
	number    = {6},
	pages     = {1519--1540},
	year      = {2020},
	doi       = {10.1007/s10270-020-00809-2}
}

@article{BojanczykEA14,
  author    = {Miko{\l}aj Boja{\'n}czyk and Bartek Klin and S{\l}awomir Lasota},
  title     = {Automata theory in nominal sets},
  journal   = lmcs,
  volume    = {10},
  number    = {3},
  year      = {2014},
  doi       = {10.2168/LMCS-10(3:4)2014}
}

@article{BolligEA14,
  author    = {Benedikt Bollig and Peter Habermehl and Martin Leucker and Benjamin Monmege},
  title     = {A Robust Class of Data Languages and an Application to Learning},
  journal   = lmcs,
  volume    = {10},
  number    = {4},
  year      = {2014},
  doi       = {10.2168/LMCS-10(4:19)2014}
}

@article{CasselHJS16,
  author    = {Sofia Cassel and Falk Howar and Bengt Jonsson and Bernhard Steffen},
  title     = {Active learning for extended finite state machines},
  journal   = fac,
  volume    = {28},
  number    = {2},
  pages     = {233--263},
  year      = {2016},
  doi       = {10.1007/s00165-016-0355-5}
}

@article{db72,
  author    = {N.G {de Bruijn}},
  title     = {Lambda calculus notation with nameless dummies, a tool for automatic formula manipulation, with application to the Church-Rosser theorem},
  journal   = im,
  volume    = {75},
  number    = {5},
  pages     = {381--392},
  year      = {1972},
  doi       = {10.1016/1385-7258(72)90034-0},
}

@article{dlt02,
  author   = {François Denis and Aurélien Lemay and Alain Terlutte},
  title    = {Residual Finite State Automata},
  journal  = fi,
  volume   = {51},
  number   = {4},
  pages    = {339--368},
  year     = {2002},
  doi      = {10.3233/FUN-2002-51402}
}

@article{DemriLazic09,
  author    = {St{\'{e}}phane Demri and Ranko Lazi{\'{c}}},
  title     = {{LTL} with the freeze quantifier and register automata},
  journal   = tocl,
  volume    = {10},
  number    = {3},
  pages     = {16:1--16:30},
  year      = {2009},
  doi       = {10.1145/1507244.1507246}
}

@article{KaminskiFrancez94,
  author   = {Michael Kaminski and Nissim Francez},
  title    = {Finite-Memory Automata},
  journal  = tcs,
  volume   = {134},
  number   = {2},
  pages    = {329--363},
  year     = {1994},
  doi      = {10.1016/0304-3975(94)90242-9}
}

@article{KaminskiZeitlin10,
  author   = {Michael Kaminski and Daniel Zeitlin},
  title    = {Finite-Memory Automata with Non-Deterministic Reassignment},
  journal  = ijfcs,
  volume   = {21},
  number   = {5},
  pages    = {741--760},
  year     = {2010},
  doi      = {10.1142/S0129054110007532}
}

@article{k13,
  author   = {Anna Kasprzik},
  title    = {Four one-shot learners for regular tree languages and their polynomial characterizability},
  journal  = tcs,
  volume   = {485},
  pages    = {85--106},
  year     = {2013},  
  doi      = {10.1016/j.tcs.2013.01.003}
}

@article{lczl21,
  author   = {Yong Li and Yu-Fang Chen and Lijun Zhang and Depeng Liu},
  title    = {A novel learning algorithm for Büchi automata based on family of DFAs and classification trees},
  journal  = iandc,
  volume   = {281},
  pages    = {104678},
  year     = {2021},
  doi      = {10.1016/j.ic.2020.104678}
}

@article{MalerP95,
	author   = {Oded Maler and Amir Pnueli},
	title    = {On the Learnability of Infinitary Regular Sets},
	journal  = iandc,
	volume   = {118},
	number   = {2},
	pages    = {316--326},
	year     = {1995},
	doi      = {10.1006/inco.1995.1070}
}

@article{mcn66,
  author   = {Robert McNaughton},
  title    = {Testing and generating infinite sequences by a finite automaton},
  journal  = incon,
  volume   = {9},
  number   = {5},
  pages    = {521--530},
  year     = {1966},
  doi      = {10.1016/S0019-9958(66)80013-X},
}

@article{ms22,
  author   = {Joshua Moerman and Matteo Sammartino},
  title    = {Residuality and Learning for Nondeterministic Nominal Automata},
  journal  = lmcs,
  volume   = {18},
  number   = {1},
  year     = {2022},
  doi      = {10.46298/lmcs-18(1:29)2022}
}

@article{NevenEA04,
  author    = {Frank Neven and Thomas Schwentick and Victor Vianu},
  title     = {Finite state machines for strings over infinite alphabets},
  journal   = tocl,
  volume    = {5},
  number    = {3},
  pages     = {403--435},
  year      = {2004},
  doi       = {10.1145/1013560.1013562}
}

@article{DBLP:journals/jalc/PeledVY02,
  author   = {Doron A. Peled and Moshe Y. Vardi and Mihalis Yannakakis},
  title    = {Black Box Checking},
  journal  = jalc,
  volume   = {7},
  number   = {2},
  pages    = {225--246},
  year     = {2002},
  doi      = {10.25596/JALC-2002-225}
}

@article{rs93,
  author   = {Ronald L. Rivest and Robert E. Schapire},
  title    = {Inference of finite automata using homing sequences},
  journal  = iandc,
  volume   = {103},
  number   = {2},
  pages    = {299--347},
  year     = {1993},
  doi      = {10.1006/inco.1993.1021}
}

@article{DBLP:journals/cacm/Vaandrager17,
  author   = {Frits Vaandrager},
  title    = {Model learning},
  journal  = comacm,
  volume   = {60},
  number   = {2},
  pages    = {86--95},
  year     = {2017},
  doi      = {10.1145/2967606}
}

@book{barendregt85,
  author    = {Hendrik Pieter Barendregt},
  title     = {The lambda calculus - its syntax and semantics},
  series    = {Studies in logic and the foundations of mathematics},
  volume    = {103},
  publisher = {North-Holland},
  year      = {1985},
}

@book{tata08,
  author    = {Hubert Comon and Max Dauchet and R{\'{e}}mi Gilleron and Florent Jacquemard and Denis Lugiez and Christof L\"{o}ding and Sophie Tison and Marc Tommasi},
  note      = {hal-03367725},
  title     = {Tree Automata Techniques and Applications},
  pages     = {262},
  publisher = {HAL INRIA},
  year      = {2008}
}

@book{kv94,
  author    = {Kearns, Michael J. and Vazirani, Umesh},
  title     = {An Introduction to Computational Learning Theory},
  pages     = {203},
  publisher = {The MIT Press},
  year      = {1994}
}

@Book{Pitts2013,
  author    = {Andrew M. Pitts},
  title     = {Nominal Sets: Names and Symmetry in Computer Science},
  series    = {Cambridge Tracts in Theoretical Computer Science},
  volume    = {57},
  publisher = {Cambridge University Press},
  year      = {2013},
}

@mastersthesis{f22,
  author = {Florian Frank},
  title  = {Coalgebraic Infinitary Trace Semantics of Nominal Büchi-Automata},
  school = {Department of Computer Science, Faculty of Engineering, Friedrich-Alexander-Universität Erlangen-Nürnberg},
  year   = {2022}
}

\clearpage
\appendix
\subsection*{Appendix}
This appendix contains proof details omitted in the paper. In particular, we explain the connection between bar automata and nominal automata.

    \section{Details for~\Cref{sec:prelim}}\label{app:EquivModels}
    
\mypar{More on Nominal Sets}
We recall additional terminology from the theory of nominal sets, which is needed for some of the proofs and constructions presented in the sequel.

    An element $x$ of a nominal set $X$ is \emph{equivariant} if it has an empty support, i.e.~$\supp(x) = \emptyset$, while a subset $X$ of a nominal set $Y$ is \emph{equivariant} if $\pi\cdot x\in X$ for all $x\in X$ and $\pi\in\Perm(\names)$.
    A map $f\colon X \to Y$ between nominal sets is \emph{equivariant} if $f(\pi \cdot x) = \pi \cdot f(x)$ for all $x \in X$ and $\pi \in \Perm(\names)$, which implies $\supp f(x) \subseteq \supp(x)$ for all $x \in X$.
    Similarly, a subset $X$ has \emph{support} $S$ if $\pi \cdot x \in X$ for all $x \in X$ and permutations $\pi$ such that $\pi(a) = a$ for all $a \in S$.
    $X$ is \emph{uniformly finitely supported} if $\bigcup_{x \in X} \supp(x)$ is finite, in which case $X$ is also finitely supported. A uniformly finitely supported equivariant set consists of only elements that are themselves equivariant.

    The coproduct of nominal sets $X_i$ ($i\in I$) is given by their disjoint union $\coprod_{i\in I} X_i = \setw{(i,x)}{i\in I,\, x\in X_i}$ with groups action inherited from the $X_i$, that is, $\pi\cdot (i,x)=(i,\pi\cdot x)$.

    Given a nominal set $X$, the \emph{orbit} of an element $x\in X$ is the set  $\setw{\pi\cdot x}{\pi \in \Perm(\names)}$. The orbits form a partition of $X$.
    A nominal set is \emph{orbit-finite} if it has finitely many orbits. For every finite set $S\seq\names$, an orbit-finite nominal set contains only finitely many elements supported by $S$. The \emph{degree} of an orbit-finite nominal set $X$ is  $\deg(X) = \max_{x\in X} |\supp(x)|$. 

    \begin{expl}
     The nominal set $\Ats$ has infinitely
        many orbits; its equivariant subsets $\names^n$ (words of a fixed length $n$) are orbit-finite. For instance,
        $\names\!^2$ has the two orbits $\{aa: a\in \names\}$ and $\{ab: a\neq b\in \names\}$.
        Another example of an orbit-finite nominal set is \[\names\!^{\#n} = \setw{a_1\ldots a_n}{a_i\neq a_j
        \text{ for $i\neq j$}},\]
an equivariant subset of $\names^n$ with just a single orbit.
        Both $\names^{\# n}$ and $\names^n$ have degree $n$.
    \end{expl}

        \mypar{Details for~\Cref{ex:nonresidual}}  
      We begin by recalling necessary notions to make our argument from~\cite{ms22}:

      \begin{defn}[{\cite{dlt02,ms22}}]
        Given a data language $L \seq \Ats$ and a data word $u \in \Ats$, we define the \emph{derivative of
        $L$ w.r.t.~$u$} by $\deriv{u}{L} := \setw{v}{uv \in L}$. The set of all derivatives is defined by
        $\Der(L) := \setw{\deriv{u}{L}}{u \in \Ats}$.
      \end{defn}
      
      \begin{defn}[{\cite[Def.~2.7]{ms22}}]
        A nominal automaton is \emph{residual} if all states accept derivatives of the language of the
        automaton. A data language $L$ is \emph{residual} if there is a residual nominal automaton accepting $L$.
      \end{defn}
      
      \begin{defn}[{\cite[Def.~4.3]{ms22}}]
        \begin{enumerate}
          \item For some nominal set $X$, we consider the nominal set $\powfs(X)$ of finitely supported subsets
            as a Boolean algebra where all necessary operations ($\wedge$, $\vee$, $\neg$) and the finitely supported join $\bigvee\colon \powfs(\powfs(X)) \to \powfs(X)$ are equivariant.
          \item Let $Y \seq \powfs(X)$ be equivariant and $y \in Y$. Then, $y$ is \emph{join-irreducible in $Y$}
            if $y = \bigvee \mathscr{Y} \implies y \in \mathscr{Y}$ for every finitely supported $\mathscr{Y} \seq Y$.
            The subset of all join-irreducible elements of $Y$ is defined by $\Jir(Y) := \setw{y \in Y}{\text{$y$ is join-irreducible in $Y$}}$.
          \item A subset $Y \seq \powfs(X)$ \emph{generates} a subset $Z \seq \powfs(X)$ if $Z \seq \setw{\bigvee \mathcal{y}}{\text{$\mathcal{y} \seq Y$ f.s.}}$.
        \end{enumerate}
      \end{defn}

      \begin{theorem}[{\cite[Thm.~4.10]{ms22}}]
        A data language $L$ is residual iff the set $\Jir(\Der(L))$ is orbit-finite and generates $\Der(L)$.
      \end{theorem}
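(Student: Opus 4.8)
The statement is the nominal analogue of the Denis--Lemay--Terlutte characterization of residual finite-state automata~\cite{dlt02}, so I would prove the two implications separately: the ``if'' direction by an explicit \emph{canonical automaton} built from the join-irreducible derivatives, and the ``only if'' direction by extracting the required structure from a given residual automaton. Throughout I would use that $\Der(L)$ is an equivariant subset of the Boolean algebra $\powfs(\Ats)$ and that both $\Jir(\cdot)$ and the finitely supported join $\bigvee$ are equivariant (Def.~4.3), so that join-irreducibility is stable under the $\Perm(\names)$-action and all constructed objects remain legitimate nominal sets.

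For the ``if'' direction, assume $\Jir(\Der(L))$ is orbit-finite and generates $\Der(L)$. I would define a nominal automaton $\mathcal{C}$ with state set $Q := \Jir(\Der(L))$ (orbit-finite by hypothesis), initial states $\setw{D\in Q}{D\seq L}$, final states $\setw{D\in Q}{\epsilon\in D}$, and transitions $D\xto{a} D'$ iff $D'\seq \deriv{a}{D}$. Equivariance of all four data is immediate from equivariance of $\Jir$, of $\seq$, and of the derivative operation. The heart of the argument is the claim that the language accepted \emph{from} a state $D\in Q$ equals $D$ itself, which I would prove by induction on word length: the inclusion $\seq$ is routine, and for $\qes$ one writes $\deriv{a}{D}$ as a finitely supported join of the join-irreducibles below it (using the generation hypothesis), each of which is an available transition target, and then applies the induction hypothesis. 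Taking the union over the initial states and using that $L=\deriv{\epsilon}{L}$ is a derivative, hence by generation equals $\bigvee\setw{D\in Q}{D\seq L}$, shows $L(\mathcal{C})=L$. Since every state accepts a derivative, $\mathcal{C}$ is residual, so $L$ is residual.

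For the ``only if'' direction, let $\mathcal{A}$ be a residual nominal automaton accepting $L$ with orbit-finite state set $Q$; by residuality each state-language $L_q$ lies in $\Der(L)$, and the equivariant map $q\mapsto L_q$ has orbit-finite image $S := \setw{L_q}{q\in Q}$. The nondeterministic acceptance semantics gives, for every word $u$, the decomposition $\deriv{u}{L} = \bigvee_{q\in\delta(I,u)} L_q$, a finitely supported join of elements of $S$, where $I$ is the set of initial states. If $\deriv{u}{L}$ is join-irreducible it must therefore coincide with one of the $L_q$, so $\Jir(\Der(L))\seq S$; being an equivariant subset of an orbit-finite set, $\Jir(\Der(L))$ is itself orbit-finite. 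For the generating property I would show that every derivative is a finitely supported join of the join-irreducibles below it: since each derivative is already a join of members of the orbit-finite set $S$, a minimality argument identifies the minimal joinands as join-irreducible and shows that they suffice.

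The main obstacle is the generation property together with the interaction of \emph{finitely supported} (rather than finite) joins with orbit-finiteness. Two points need care. In the ``if'' direction, the inductive step must genuinely recover $\deriv{a}{D}$ from its join-irreducible lower bounds, and this is exactly where the generation hypothesis is indispensable. In the ``only if'' direction, one must establish that join-irreducibles do generate, for which a naive finite-lattice argument as in the classical case is unavailable; instead I expect to run a well-foundedness argument over the orbit-finite join-semilattice generated by $S$. A secondary subtlety is the bookkeeping of supports---ensuring that the joins, the initial and final sets, and the map $q\mapsto L_q$ are finitely supported and equivariant throughout---so that orbit-finiteness is preserved by each construction.
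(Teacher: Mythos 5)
You should know up front that the paper does not prove this statement at all: it is imported verbatim from Moerman and Sammartino~\cite[Thm.~4.10]{ms22} and used as a black box in the appendix (to conclude that $\LF(L)$ is not residual), so there is no in-paper proof to compare against. Judged against the original proof in \cite{ms22}, your sketch follows essentially the same route, namely the nominal version of the Denis--Lemay--Terlutte canonical construction: states $\Jir(\Der(L))$, transitions $D \xto{a} D'$ iff $D' \seq \deriv{a}{D}$, and the length induction showing each state accepts exactly itself, where the step from $w \in \deriv{a}{D}$ to an available successor correctly uses that $\deriv{a}{D}$ is again a derivative of $L$ and that finitely supported joins in $\powfs(\Ats)$ are unions, so membership in a join picks out a join-irreducible joinand below $\deriv{a}{D}$. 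Your converse direction is also the right one: $\deriv{u}{L} = \bigvee_{q \in \delta(I,u)} L_q$ with $\setw{L_q}{q \in \delta(I,u)}$ finitely supported (by $\supp(u)$) and contained in $\Der(L)$ by residuality, whence $\Jir(\Der(L))$ is an equivariant subset of the orbit-finite image $S$ of the equivariant map $q \mapsto L_q$ and thus orbit-finite. The one step you flag as uncertain---the well-foundedness needed for the generation property---is in fact immediate, and you could have closed it: an equivariant strict order never relates two elements of the same orbit, because $x \supsetneq \pi\cdot x$ yields, by equivariance and transitivity, $x \supsetneq \pi^m\cdot x = x$, where $m$ is the (finite) order of the finite permutation $\pi$; hence strict inclusion chains in $S$ have length bounded by the number of orbits of $S$, and well-founded induction over $S$ shows that every $L_q$---and therefore every derivative, being a finitely supported union of state languages---is the join of the join-irreducibles below it. So your proposal is correct modulo writing out that short lemma, and it coincides in substance with the proof in the cited source rather than offering a genuinely different route.
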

      We consider the data language
      \[ \LF(L) = \setw{{u}ab{v}ac{w}bc}{{u}, {v}, {w} \in \Ats,\ a \neq b, b \neq c \in \names}. \]
      There is the following infinite ascending chain of derivatives in the poset $\Der(\LF(L))$:
      \begin{equation} \label{eq:derivsof}
        \LF(L) \subsetneq \deriv{a}{\LF(L)} \subsetneq \deriv{ba}{\LF(L)} \subsetneq \deriv{cba}{\LF(L)} \subsetneq \cdots
      \end{equation}
      To reject residuality of $\LF(L)$, we prove that all derivatives in Equation~\eqref{eq:derivsof} are join-irreducible, whence $\Jir(\Der(L))$ is not orbit-finite.
      \begin{lemma}[Join-Irreducibleness]
        All the derivatives in Equation~\eqref{eq:derivsof} are join-irreducible.
      \end{lemma}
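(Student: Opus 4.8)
The plan rests on a convenient reformulation of join-irreducibility. In the Boolean algebra $\powfs(\Ats)$ the finitely supported join is just union, and for any derivative $y\in\Der(\LF(L))$ the family $\setw{y'\in\Der(\LF(L))}{y'\subsetneq y}$ of its proper sub-derivatives is finitely supported (by $\supp(y)$, since $\subsetneq$ is equivariant). Hence $y$ is join-irreducible in $\Der(\LF(L))$ if and only if there is a data word $z\in y$ lying in no derivative $y'\subsetneq y$. For each element of the chain I would therefore exhibit one such witness word.

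Write the prefixes as $p_n=x_n\cdots x_1$, where $x_1=a,x_2=b,x_3=c,\dots$ are the distinct names used, so that the $n$-th element of the chain is $y_n:=\deriv{p_n}{\LF(L)}$. First I would compute $y_n$ as a finite union of \emph{progress sets}: the whole language $\LF(L)$; a ``half-open'' set $M$ recording that the final letter $x_1$ of $p_n$ has been read as the opening letter of the first block; and, for every adjacent pair $x_{k+1}x_k$ of $p_n$, a ``block-open'' set \[S_k=\setw{t\,x_{k+1}Z\,w\,x_kZ}{t,w\in\Ats,\ Z\neq x_k}\qquad(1\leq k\leq n-1),\] recording that the opening block $x_{k+1}x_k$ is in place and only the matching blocks $x_{k+1}Z$ and $x_kZ$ remain. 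Note that $y_0=\LF(L)$ is the least derivative, since $\LF(L)\subseteq\deriv{u}{\LF(L)}$ for every $u$ (prepending a prefix preserves the pattern); hence $y_0$ is join-irreducible for free.

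For $n\geq 1$ I would take the witness from the newest component. Fixing a name $d\notin\set{x_1,\dots,x_n}$, set $z:=x_n d\,x_{n-1}d\in S_{n-1}\subseteq y_n$ (for $n=1$, take $z$ a generic element of $M\setminus\LF(L)$); a short check shows $z\notin y_{n-1}$. The heart of the argument is the claim that \emph{every} derivative $\deriv{u}{\LF(L)}$ with $z\in\deriv{u}{\LF(L)}\subseteq y_n$ equals $y_n$. I would prove this by the same progress-set analysis applied to an arbitrary $u$: the nonempty progress sets of $\deriv{u}{\LF(L)}$ are $\LF(L)$, at most one half-open set attached to the last letter of $u$, one block-open set $A_{X,Y}$ for each adjacent distinct pair $XY$ of $u$, and ``second-block-done'' sets $B_{Y,Z}$ consisting of all words ending in $YZ$. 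Because $z$ begins with $x_n$ and ends with $x_{n-1}d$, membership $z\in\deriv{u}{\LF(L)}$ forces either the adjacent pair $x_nx_{n-1}$ to occur in $u$ or some set $B_{x_{n-1},d}$ to be present. The inclusion $\deriv{u}{\LF(L)}\subseteq y_n$ then removes all other possibilities: a $B$-set is excluded because it contains the length-two word $YZ\notin y_n$; the half-open set is forced to be $M$ (so $u$ ends in $x_1$); and every adjacent distinct pair of $u$ must be some $x_{k+1}x_k$, since otherwise $A_{X,Y}$ contributes a length-four word outside $y_n$. Reading off the indices, the letters of $u$ then form a non-increasing sequence with unit steps running from $n$ (as $x_nx_{n-1}$ occurs) down to $1$ (the last letter), so $u$ realises \emph{every} pair $x_{k+1}x_k$; hence $\deriv{u}{\LF(L)}$ contains all of $\LF(L),M,S_1,\dots,S_{n-1}$, i.e.\ $\deriv{u}{\LF(L)}\supseteq y_n$, whence $\deriv{u}{\LF(L)}=y_n$.

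This produces a word $z\in y_n$ in no proper sub-derivative, so each $y_n$ is join-irreducible, as desired (and, since the support $\set{x_1,\dots,x_n}$ of $y_n$ grows with $n$, the $y_n$ occupy infinitely many orbits, which is what the surrounding argument needs). The main obstacle is the progress-set bookkeeping in the key step: one must classify exactly which sets can occur in an arbitrary derivative and, by exhibiting explicit short witnesses, determine precisely which of them fit inside $y_n$, so that the single inclusion $\deriv{u}{\LF(L)}\subseteq y_n$ is strong enough to pin $u$ down to a descending staircase and thereby recover all of $y_n$.
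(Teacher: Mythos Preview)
Your strategy is sound and lands on essentially the same witness as the paper, but the two arguments are organised differently and your sketch has one genuine gap in the bookkeeping.

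\textbf{Comparison with the paper.} The paper proves the clean characterisation
\[
  \deriv{u}{\LF(L)}\subseteq \deriv{w}{\LF(L)}\quad\Longleftrightarrow\quad u\text{ is a suffix of }w
\]
by exhibiting, whenever $u$ is not a suffix of $w$, an explicit word in $\deriv{u}{\LF(L)}\setminus\deriv{w}{\LF(L)}$. This reduces join-irreducibility to checking that a single witness (their $a_ka_0a_{k-1}a_0$, i.e.\ your $x_nx_1x_{n-1}x_1$) is not in any of the finitely many proper-suffix derivatives. Your route instead decomposes each derivative into ``progress sets'' and uses the inclusion $\deriv{u}{\LF(L)}\subseteq y_n$ to constrain $u$ directly; this is more computational but reaches the same endpoint, effectively re-deriving the suffix characterisation through the staircase argument. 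Your reformulation of join-irreducibility via a witness not in any proper sub-derivative is correct (the family of proper sub-derivatives is indeed supported by $\supp(y_n)$), and your witness $x_n\,d\,x_{n-1}\,d$ is the paper's witness with a fresh $d$ in place of $x_1$.

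\textbf{The gap.} Your list of progress sets is incomplete. The pattern $u'AB\,v'AC\,w'BC$ has seven possible cut positions, not four: besides your $\LF(L)$, $M_\ell$, $A_{X,Y}$, $B_{Y,Z}$, there are the cuts after $AB\ldots A$ (giving sets of the form $\{Cw'BC:C\neq B\}$), after $AB\ldots AC\ldots B$ (giving singletons $\{C\}$), and after the whole pattern (giving $\{\varepsilon\}$). Two things need checking. First, your witness $z=x_n\,d\,x_{n-1}\,d$ does not lie in any of the three missing set types (a short length/first-letter/last-letter check), so your dichotomy ``$x_nx_{n-1}$ occurs in $u$ or a $B$-set is present'' survives. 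Second, the three missing types are themselves excluded by $\deriv{u}{\LF(L)}\subseteq y_n$: they contain the short words $CBC$ (for fresh $C$), $C$, and $\varepsilon$, none of which are in $y_n$ since $p_n$ has all letters distinct. Once you add these checks, your staircase argument (indices non-increasing with unit steps, ending at $1$, hitting $n$) is correct and forces $\deriv{u}{\LF(L)}\supseteq y_n$ as you claim.
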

      \begin{proof}
        Consider $w = a_k \cdots a_1 a_0 \in \Ats$ with $k \geqslant 2$ and all $a_i$ distinct.
        We show that $\deriv{w}{\LF(L)}$ is join-irreducible
        in $\Der(\LF(L))$. For this, we notice that $\deriv{u}{%
        \LF(L)} \seq \deriv{w}{\LF(L)}$
        holds iff $u$ is a suffix of $w$. The direction ``$\Leftarrow$'' is simple, since any
        prefix may be skipped ($\Ats$). So suppose then that $u$ is not a prefix of $w$, i.e.~that
        there no $v$ such that $w \neq vu$. Knowing this, there exists an $i \geqslant 0$ with
        $x \neq a_i$ and $u$ contains the suffix $xa_{i - 1}\cdots a_0$. Choose a fresh $a_{-1}$.
        If $x = a_k$ for some $k$, let $c := a_{k - 1}$, otherwise choose $c$ fresh. 
        
        With this
        choice, we see that $a_{-1}xca_{i-1}c \in \deriv{u}{\LF(L)}$.
        We then show $a_{-1}xca_{i-1}c \notin \deriv{w}{\LF(L)}$
        by exhaustion: If $x$ does not occur in $w$, then $c$ is fresh for $w$, which means that
        all letters in $wa_{-1}xca_{i-1}c$ are unique except for $a_{i - 1}$ which occurs twice
        even with different successors, which do not occur after the last pair with $a_{i - 1}$.
        Therefore, $a_{-1}xca_{i-1}c \notin \deriv{w}{\LF(L)}$.
        If $x = a_k$ for some $k$, and therefore $c = a_{k - 1}$, then $wa_{-1}xca_{i-1}c$ mentions
        $a_k$ and $a_{i - 1}$ twice, and $a_{k - 1}$ trice, however not in the form required by
        $\LF(L)$, since the different successors of the repeated letters
        are not repeated. Therefore, $a_{-1}xca_{i-1}c \notin \deriv{w}{\LF(L)}$. We have thus shown the following:
        \[%
            \setw{u}{\deriv{u}{\LF(L)} \seq
            \deriv{w}{\LF(L)}} =
            \setw{u}{u \text{ is a suffix of } w}.
        \]
        Now consider $X = \bigvee\setw{\deriv{u}{\LF(L)}}{u\text{ is a
        strict suffix of } w}$ and see that $X \neq \deriv{w}{\LF(L)}$,
        since $a_ka_0a_{k-1}a_0 \in \deriv{w}{\LF(L)} \setminus X$,
        thereby making $\deriv{w}{\LF(L)}$ join-irreducible.
      \end{proof}
      \begin{corollary}
        The language $\LF(L)$ is not residual.
      \end{corollary}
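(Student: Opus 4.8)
The plan is to combine the Join-Irreducibleness Lemma just established with the Moerman--Sammartino characterization (\cite[Thm.~4.10]{ms22}), which asserts that $\LF(L)$ is residual if and only if $\Jir(\Der(\LF(L)))$ is orbit-finite \emph{and} generates $\Der(\LF(L))$. Since this is a conjunction, to refute residuality it suffices to falsify one conjunct; I would show that $\Jir(\Der(\LF(L)))$ fails to be orbit-finite, which is arguably the more robust of the two to attack.

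By the Lemma, every derivative occurring in the strictly ascending chain~\eqref{eq:derivsof} is join-irreducible. Writing $C_0 \subsetneq C_1 \subsetneq C_2 \subsetneq \cdots$ for that chain (so $C_0 = \LF(L)$, $C_1 = \deriv{a}{\LF(L)}$, $C_2 = \deriv{ba}{\LF(L)}$, and so on), all the $C_i$ therefore lie in $\Jir(\Der(\LF(L)))$. The crux is then to argue that these infinitely many elements represent infinitely many distinct \emph{orbits}.

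I would establish this by a short order-theoretic argument that sidesteps any explicit support computation. Suppose, towards a contradiction, that $C_i$ and $C_j$ with $i < j$ (hence $C_i \subsetneq C_j$) lie in a common orbit, say $C_j = \pi \cdot C_i$ for some $\pi \in \Perm(\names)$. The group action on finitely supported subsets is an order automorphism of the inclusion poset, so applying $\pi$ repeatedly to $C_i \subsetneq C_j$ yields the infinite strictly ascending chain $C_j \subsetneq \pi\cdot C_j \subsetneq \pi^2\cdot C_j \subsetneq \cdots$. But $\pi$, being a finite permutation, has finite order, so $\pi^m = \id$ for some $m \geq 1$; then $\pi^m\cdot C_j = C_j$ forces $C_j \subsetneq C_j$, a contradiction. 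Hence distinct elements of the chain occupy distinct orbits, and since the chain is infinite, $\Jir(\Der(\LF(L)))$ contains infinitely many orbits and is not orbit-finite. Invoking the Moerman--Sammartino characterization then immediately gives that $\LF(L)$ is not residual.

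The genuinely hard part of the argument is already discharged by the preceding Lemma, namely the join-irreducibility of every member of the chain; what remains is the orbit-separation step, whose only delicate point is to use finiteness of the order of $\pi$ rather than trying to chase supports. As an alternative to this permutation-order argument, one could instead verify that $\supp(C_i)$ grows without bound in $i$ and appeal to the fact that orbit-finite nominal sets have finite degree (bounded support size), but the order argument above is cleaner and self-contained, so I would present it as the main route.
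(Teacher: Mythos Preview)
Your proposal is correct and follows essentially the same route as the paper: invoke the Moerman--Sammartino characterization, use the preceding Join-Irreducibleness Lemma to place all chain elements~\eqref{eq:derivsof} in $\Jir(\Der(\LF(L)))$, and conclude that this set is not orbit-finite. The paper simply asserts the last step (``whence $\Jir(\Der(L))$ is not orbit-finite''), whereas your finite-order-of-$\pi$ argument neatly justifies why distinct chain elements must lie in distinct orbits; this is a welcome clarification rather than a different approach.
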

      \begin{rem}
        There is another notion of \emph{non-guessing residuality} mentioned by~\cite{ms22},
        which is a subset of the aformentioned residuality requiring the residual automaton to be
        \emph{non-guessing}. While $\LF(L)$ is accepted by a \emph{non-guessing} nominal automaton (\Cref{sec:bar-automata}),
        it cannot be accepted by a (non-guessing) residual automaton. 
        Overall, the modified version of \nomNLstar \cite{ms22} cannot ensure termination for
        the data language $\LF(L)$ due to its non-residuality.
      \end{rem}

    \mypar{Proof of~\Cref{prop:barautclosure}}

    We construct for every type of bar automaton its closure directly. All constructions here are equivalent to a
    back-and-forth translation between bar automata and their equi-expressive nominal variants. 
    
    \begin{defn}\label{defn:dollarSets}
      For $n\in\Nat$, we write $\mathbf{n}=\{1,\dots,n\}$. We denote by
      $X^{\$\mathbf{n}}$ the set of all \emph{partial injective maps}
      from $\mathbf{n}$ to the set $X$.
      The \emph{domain} $\dom(r)$ of $r$
      is the set of all $x \in \mathbf{n}$ for which $r(x)$ is defined.
      A partial injective map
      $\overline{r} \in X^{\$\mathbf{n}}$ \emph{extends} $r \in X^{\$\mathbf{n}}$, denoted
      $r \leqslant \overline{r}$, if
      $\dom(r) \seq \dom(\overline{r})$ and~$r$ and~$\overline{r}$
      coincide on $\dom(r)$, that is, $r(x) = \overline{r}(x)$ for all
      $x \in \dom(r)$.
    \end{defn}
    
    \begin{fact}
      For finite sets $X$, we have $\card{X^{\$\mathbf{k}}} \leqslant \card{\mathbf{k} \rightharpoonup X} = (1 + \card{X})^k = 2^{k \log(1 + |X|)}$.
    \end{fact}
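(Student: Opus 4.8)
The plan is to verify the displayed inequality and the two equalities separately, all of which are elementary counting facts.

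First I would note that the inequality $\card{X^{\$\mathbf{k}}} \leqslant \card{\mathbf{k} \rightharpoonup X}$ is immediate from the definitions: by \Cref{defn:dollarSets}, $X^{\$\mathbf{k}}$ is the set of partial \emph{injective} maps from $\mathbf{k}$ to $X$, and every such map is in particular a partial map, so $X^{\$\mathbf{k}}\seq(\mathbf{k}\rightharpoonup X)$ as sets; passing to cardinalities of a subset of a finite set then gives the inequality.

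For the middle equality I would count the partial maps directly. The clean way is to observe that a partial map $r\colon\mathbf{k}\rightharpoonup X$ is the same datum as a total map $\mathbf{k}\to X+\{\bot\}$ into $X$ augmented by one fresh symbol $\bot$ marking ``undefined'': send $i\in\dom(r)$ to $r(i)$ and every other $i$ to $\bot$. This assignment is a bijection, so $\card{\mathbf{k}\rightharpoonup X}$ equals the number of functions from a $k$-element set into a set of size $\card{X}+1$, which is $(1+\card{X})^k$ by the multiplication principle (each of the $k$ arguments is assigned independently among $1+\card{X}$ possibilities).

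The last equality $(1+\card{X})^k = 2^{k\log(1+\card{X})}$ is just the identity $a=2^{\log a}$, with $\log$ read in base $2$: since $2^{\log(1+\card{X})}=1+\card{X}$, raising both sides to the $k$th power yields the claim. I do not expect any genuine obstacle in this Fact; the only point worth making explicit is the base-$2$ convention for $\log$, which is precisely what makes the final exponential form line up with the state-count bounds appearing in \Cref{prop:barautclosure} and its applications.
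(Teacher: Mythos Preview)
Your proof is correct. The paper states this as a \emph{Fact} without proof, so your elementary verification (subset inclusion for the inequality, the bijection with total maps into $X+\{\bot\}$ for the count, and the base-$2$ logarithm identity) simply spells out what the paper takes as evident.
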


    \paragraph*{Bar Word Automata} The idea behind the construction is to regard the original bar word automaton
    as accepting \enquote{patterns} and to match those to bar strings over the finite alphabet by the use of a
    register automaton (wherein the registers are restricted to the finite alphabet).
    
    \begin{construction}\label{constr:closeNFA}
      \sloppypar
      Given a bar word automaton $A = \maketuple{Q, \barNames_0,\to, q_0, F}$ such that 
      ${k := \card{\barNames_0 \cap \names}}$ and $\names_0 := \barNames_0 \cap \names = \set{a_1, \dots, a_k}$,
      we construct the bar NFA $\barA = \maketuple{\overline{Q}, \barNames_0, \to', \overline{q_0}, \overline{F}}$
      as follows:
      \begin{itemize}
        \item $\overline{Q} := Q \times \parnom[0]{k}$;
        \item $\overline{q_0} = \maketuple{q_0, (i \mapsto a_i)}$;
        \item $\overline{F} := F \times \parnom[0]{k}$; and
        \item transitions $\to'$:
        \begin{itemize}
          \item $\maketuple{q, r} \xra{a}\!'{}\, \maketuple{q', r'}$ iff there is an $i \leqslant k$ such that
            $r(i) = a$; $r' \leqslant r$; and $q \xra{a_i} q'$.
          \item $\maketuple{q, r} \xra{\scriptnew{a}}\!'{}\, \maketuple{q', r'}$ iff there is an
            $\alpha \in \barNames_0\setminus\names$ such that for all $i \leqslant k$, (i) if $r'(i) = a$, then $\alpha = \newletter{a_i}$ and (ii) if $\alpha = \newletter{a_i}$ and $i \in \dom(r')$, then $r'(i) = a$; (iii) for all $i \in \dom(r')$, $r'(i) \in \set{a,r(i)}$; and (iv) $q \xra{\alpha} q'$.
        \end{itemize}
      \end{itemize}
    \end{construction}

    \begin{lemma}\label{lem:corrCloseNFA}
      Let $A$ be a bar word automaton and $\barA$ the bar NFA of~\Cref{constr:closeNFA}. Then $L(\barA) =
      L_\alpha(A) \restriction \barNames_0$ and $L_\alpha(\barA) = L_\alpha(A)$.
    \end{lemma}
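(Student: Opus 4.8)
The plan is to prove the first equation $L(\barA) = L_\alpha(A)\restriction\barNames_0$ and then derive the second one almost for free. For that reduction, recall that $L(A)\seq\barNames_0^*$ and that $L_\alpha(A)$ is, by definition, the $\alpha$-closure of $L(A)$; hence every $\alpha$-class contained in $L_\alpha(A)$ already has a representative in $\barNames_0^*$. Once $L(\barA)=L_\alpha(A)\cap\barNames_0^*$ is established, taking $\alpha$-closures on both sides yields $L_\alpha(\barA)=L_\alpha(A)$: the inclusion ``$\seq$'' holds because $L(\barA)\seq L_\alpha(A)$ and $L_\alpha(A)$ is $\alpha$-closed, while ``$\qes$'' holds because each $u\in L_\alpha(A)$ is $\alpha$-equivalent to some $w'\in L(A)\seq L(\barA)$.

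For the first equation I would set up a correspondence between accepting runs of $\barA$ on bar strings $w=\gamma_1\cdots\gamma_n\in\barNames_0^*$ and accepting runs of $A$. Given a run $(q_0,r_0)\xra{\gamma_1}\cdots\xra{\gamma_n}(q_n,r_n)$ of $\barA$ with $r_0=(i\mapsto a_i)$, the construction attaches to each step a witnessing $A$-transition $q_{j-1}\xra{\alpha_j}q_j$; projecting these out gives a run of $A$ on $w'=\alpha_1\cdots\alpha_n$, and conversely every run of $A$ arises this way. Crucially, $\gamma_j$ is a bar name iff $\alpha_j$ is, so $w$ and $w'$ carry bar letters in exactly the same positions and every prefix contains equally many of them. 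The heart of the argument is then the claim that $w\alphaequiv w'$, which by \Cref{lem:aeDBNF} I would establish by showing $\nf(w)=\nf(w')$.

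To this end I would maintain the invariant that, after $j$ steps, the partial injection $r_j$ records the currently active renaming: $r_j(i)$ is the plain name that $w$ reads at the position of the last preceding $\newletter{a_i}$ in $w'$ (and $r_j(i)=a_i$ as long as $a_i$ has not yet been bound, so that free names are left untouched). Since bar positions coincide, the De Bruijn level of each bar letter agrees in $w$ and $w'$; and the transition conditions guarantee that a plain $\gamma_j$ with $r_{j-1}(i)=a$ and witness $\alpha_j=a_i$ has the same back-reference (the same De Bruijn level, or the same free name) in $w$ as $a_i$ has in $w'$. A straightforward induction on $j$ then gives $\nf(w)=\nf(w')$ for soundness; for completeness one uses $\nf(w)=\nf(w')$ to read off the assignments $r_j$ step by step and verifies that the defining conditions (i)--(iv) of the plain and bar transitions of \Cref{constr:closeNFA} are met.

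The main obstacle is verifying that conditions (i)--(iii) on bar transitions enforce exactly the bookkeeping required for $\alpha$-renaming over a \emph{finite} alphabet, and in particular that the partiality (``name-dropping'') is both necessary and sufficient to keep $r_j$ injective. The delicate point is a collision: when the pattern name $a_i$ is (re)bound while the incoming plain name $a$ is still stored in another register $i'\neq i$, injectivity forces register $i'$ to be dropped — which is precisely what (i) and (iii) impose, since they forbid $r'(i')=a$ and leave no alternative to $i'\notin\dom(r')$. I expect the careful case analysis around shadowing (rebinding the same index) and name reuse after dropping, together with checking that free names are never captured, to be the only genuinely subtle part; everything else is routine bookkeeping.
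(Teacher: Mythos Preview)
Your proposal is correct and follows essentially the same approach as the paper: both directions are proved via the run correspondence, $\alpha$-equivalence is established through equality of De Bruijn normal forms (the paper's \Cref{lem:aeDBNF}), and the second equation is derived from the first using that both automata live over~$\barNames_0$. The paper makes the completeness direction slightly more concrete by defining the registers explicitly as $r_i(j)=\ub(\alpha_i)$ when $\alpha_i'=\newletter{a_j}$ and this name is still free in the remaining suffix, and $r_i(j)=r_{i-1}(j)$ when that value is still needed and not shadowed---exactly the name-dropping discipline you identify as the delicate point---but this is just a specific instantiation of your ``read off the assignments $r_j$ step by step'' idea.
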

    \begin{proof}
      {}\sloppypar
      Given $A = \maketuple{Q, \barNames_0, \to, q_0, F}$, $\names_0 := \barNames_0\cap\names = \set{a_1, \dots, a_k}$
      and $\barA = \maketuple{\overline{Q}, \barNames_0, \to', \overline{q_0}, \overline{F}}$, we see that equality
      of bar languages follows directly from the first property by fact that both $A$ and $\barA$ are over the same
      finite alphabet. Regarding $L(\barA) = L_\alpha(A) \restriction \barNames_0$, we show mutal inclusion:

      \noindent\enquote{$\qes$}: Let $w = \alpha_1\cdots\alpha_n \alphaequiv w' = \alpha_1'\cdots\alpha_n' \in L(A)$,
      $w \in \barAs_0$ and $q_0 \xra{\alpha_1'} q_1 \xra{\alpha_2'} \cdots \xra{\alpha_n'} q_n$ be an accepting run for
      $w'$ in $A$. We define the partial injective maps $r_0\colon \mathbf{k} \rightharpoonup \names_0, i \mapsto a_i$
      and
      \[
        r_i\colon \mathbf{k} \rightharpoonup \names_0, j \mapsto \begin{cases} \ub(\alpha_i) & \text{, if $\alpha_i' = \newletter{a_j}$ and $\ub(\alpha) \in \FN(\alpha_{i + 1}\cdots\alpha_n)$} \\
          r_{i - 1}(j) & \text{, if $\alpha_i \neq \newletter{r_{i-1}(j)}$ and $r_{i-1}(j) \in \FN(\alpha_{i + 1}\cdots\alpha_n)$} \\
          \bot & \text{, otherwise.} \end{cases} \quad (1 \leqslant i \leqslant n)
      \]
      This results in a run $\maketuple{q_0,r_0} \xra{\alpha_1}\!'{} \maketuple{q_1,r_1} \xra{\alpha_2}\!'{} \cdots
      \xra{\alpha_n}\!'{} \maketuple{q_n,r_n}$ for $w$ in $\barA$, where acceptance is clear by construction of
      $\barA$. Lastly, we show that all proposed transitions do exist: Take any $\maketuple{q_i,r_i}
      \xra{\alpha_{i+1}}\!'{} \maketuple{q_{i+1},r_{i+1}}$. Existence if $\alpha_{i+1} \notin \names$ is obvious, so
      suppose further that $\alpha_{i+1} \in \names_0$. Because $w \alphaequiv w'$, we see that their de Bruijn normal
      form $\nf(w) = \beta_1\cdots\beta_n$ coincides. With $\alpha_{i+1}' = a_j \in \names_0$, we look at the (unique)
      index $k \leqslant i$ such that $\alpha_k' = \newletter{a_j}$ and $\beta_k = \beta_{i+1}$ We notice immediately
      that $r_{k}(j) = \alpha_{i+1}$, since $\alpha_{i+1} \in \FN(\alpha_{k+1}\cdots\alpha_n)$ (by $\alpha$-equivalence
      and properties of the de Bruijn normal form). Additionally, for all $k < m \leqslant i$, we have $r_m(j) = 
      r_{m-1}(j) = \alpha_{i+1}$ because $\alpha_i \in \FN(\alpha_m\cdots\alpha_n)$ and $\alpha_{m-1} \neq \newletter{\alpha_i}$ (this would contradict the required $\alpha$-equivalence). Thus, $w \in L(\barA)$.

      \noindent\enquote{$\seq$}: 
      Let $\maketuple{q_0, r_0} \xra{\alpha_1}\!'{}\, \maketuple{q_1, r_1} \xra{\alpha_2}\!'{}\, \cdots
      \xra{\alpha_n}\!'{}\, \maketuple{q_n,r_n}$ be an accepted run of $w = \alpha_1\cdots\alpha_n \in \barAs_0$
      in $\barA$. By construction, there is a corresponding transition $q_{i-1} \xra{\alpha_i'} q_i$ in $A$
      for every $1 \leqslant i \leqslant n$. The precise choice of these $\alpha_i'$s is irrelevant, that is, we
      only assume that every $\alpha_i'$ witnesses the existential quantifier in the condition of the original 
      transition. This resulting run is also accepting.      
      
      Note that this choice is miniscule and only affects those transitions and bar names not immediately stored
      in the subsequent \enquote{register}: Indeed, if $\alpha_i \in \names$, there must be some $j \leqslant k$ with
      $r_i(j) = \alpha_i$, thereby fixing $\alpha_i' = a_j$. If $\alpha_i \notin\names$ and $r_{i}(j) =
      \ub(\alpha_i)$ holds for some $j \leqslant k$, then
      $\alpha_i' = \newletter{a_j}$ is fixed (by condition (ii)). Lastly, if $\alpha_i \notin\names$ and $\ub(\alpha_i)$
      is unequal to every $r_{i}(j)$, we see that $\ub(\alpha_i)$ also does not occur freely in
      $\alpha_{i + 1}\cdots\alpha_n$ by a direct consequence of the definition of plain transitions, thus giving
      an inconsequential choice overall.

      Lastly, we show $\alpha$-equivalence of $w$ and $w' = \alpha_1'\cdots\alpha_n'$ by equality of their de Bruijn
      normal forms (\Cref{lem:aeDBNF}). We show, that for every $n \in \Nat$, if $w = \alpha_1\cdots\alpha_n$ has
      a run in $\barA$ and $w' = \alpha_1'\cdots\alpha_n'$ is the corresponding word with run in $A$, then $w
      \alphaequiv w'$. This via induction over $n$: The base case ($n = 0$) is clear.
      For the induction step, assume that $v = \alpha_1\cdots\alpha_{n} \alphaequiv \alpha_1'\cdots\alpha_{n}' := v'$ by
      use of the induction hypothesis. We show $v\alpha_{n+1} \alphaequiv v'\alpha_{n+1}$. We consider the cases where
      $\alpha_{n+1} \notin\names$ and $\alpha_{n+1} \in \names$: The former case holds trivially by definition
      and~\Cref{lem:aeDBNF}.
      
      For the latter case, if $\alpha_{n+1} \in \names$, i.e.~
      $\maketuple{q_n, r_n} \xra{\alpha_{n+1}}\!'{}\, \maketuple{q_{n + 1}, r_{n + 1}}$ and
      $\alpha_{n+1} = b \in \names_0$, there must be some $j \leqslant k$, such that $r_n(j) = b$ and
      $q_n \xra{a_j} q_{n + 1}$, as well as $r_{n + 1} \leqslant r_n$. Thus, $\alpha_{n+1}' = a_j$. There are now two possibilities:
      \begin{enumerate}
        \item There has been a prior $\newletter{b}$-transition with a corresponding $\newletter{a_j}$-transition.
          Then, both indices in the de Bruijn normal form (\Cref{lem:aeDBNF}) of $v\alpha_{n+1}$ and $v'\alpha_{n+1}'$
          must be equal (namely to the index of the last such corresponding transition).
        \item There has been no prior $\newletter{b}$-transition with corresponding $\newletter{a_j}$-transitions.
          Thus, $b = a_j = r_0(j)$ and $a_j$ is free in $v\alpha_{n+1}$. $a_j$ is then also free in $v'\alpha_{n+1}'$,
          since by condition (ii) of bar transitions in $\barA$, if there had been an
          $\alpha'_{k} = \newletter{a_j}$ with $k \leqslant n$, the corresponding $\alpha_k = \newletter{c}$ would
          have been stored in $r_k(j)$ and all subsequent $r_{i}(j)$'s. This results in the desired $\alpha$-equivalence.
      \end{enumerate}
      By induction, we see that $w \alphaequiv w'$, therefore $w \in L_\alpha(A)$ and finally $L(\barA) =
      L_\alpha(A) \restriction \barNames_0$.
    \end{proof}

    \paragraph*{Bar Büchi Automata} Let $A$ be the bar Büchi automaton over $\barNames_0$ and $\barA$ the
    bar NFA of~\Cref{constr:closeNFA} interpreted with the Büchi semantics. It is readily verified
    that the arguments of~\Cref{lem:corrCloseNFA} still hold under the changed Büchi semantics. Therefore,
    $\barA$ is the closed bar Büchi automaton to $A$.

    \paragraph*{Bar Tree Automata} While the previously discussed bar tree automata were bottom-up tree
    automata, this section will deal with top-down tree automata. It is well-known that top-down and bottom-up NFTAs
    are expressively equivalent (see e.g.~\cite[Thm.~1.6.1]{tata08}) by reversing all rewrite rules, swapping final and 
    initial states, and reducing those to a single one and that bottom-up NFTAs are determinisable 
    (see e.g.~\cite[Thm.~1.1.9]{tata08}), i.e.~bottom-up NFTAs and bottom-up DFTAs are expressively equivalent by use 
    of the powerset construction. So while there is no difference in expressivity, a direct translation to a closed 
    bottom-up tree automaton counteracts the similarity between and intuition behind the various closure constructions
    for bar automata. 
    
    A bar (top-down) non-deterministic finite tree automaton $A = \maketuple{Q, \barNames_0, \Sigma, q_0, \Delta}$ has
    a finite set $Q$ of \emph{states}, an \emph{initial state} $q_0$ and a transition relation $\Delta \seq
    Q \times \barNames_0 \times (\coprod_{\nicefrac{f}{n} \in \Sigma} Q^n)$ consisting of \emph{rewrite rules}
    of the form \[ q(\alpha.f(x_1, \dots, x_n)) \to \alpha.f(q_1(x_1), \dots, q_n(x_n)). \qquad 
      (\alpha \in \barNames_0, \nicefrac{f}{n} \in \Sigma) \]
    A \emph{run} for a tree $t \in \bartree[\barNamess_0](\Sigma)$ from $q \in Q$ is a tree $\runtree(t)$ over the
    signature $Q \times \barNames_0 \times \Sigma$ subject to the following conditions:
    \begin{enumerate}
      \item Every node $\alpha.f(t_1, \dots, t_n)$ in $t$ corresponds to one node
        $q.\alpha.f(t_1', \dots,  t_n')$ in $\runtree(t)$ at the same position (i.e.~in the same context) and
        vice versa. Structurally, both $t$ and $\runtree(t)$ are identical.\label{run:cond:A}
      \item Every node $q.\alpha.f(q_1.t_1, \dots, q_n.t_n)$ in $\runtree(t)$ has a corresponding rewrite
        rule\label{run:cond:B} \[q(\alpha.f(x_1, \dots, x_n)) \to \alpha.f(q_1(x_1), \dots, q_n(x_n)) \in \Delta.\]
      \item The root node is annotated by $q$.\label{run:cond:C}
    \end{enumerate}
    A tree $t \in \bartree[\barNamess_0](\Sigma)$ is accepted by $A$ iff there is a run for $t$ from $q_0$.
    \begin{construction}\label{constr:closeNFTA}
      \sloppypar
      Given a top-down bar tree automaton $A = \maketuple{Q, \barNames_0, \Sigma, q_0, \Delta}$ such that 
      ${k := \card{\barNames_0 \cap \names}}$ and $\names_0 := \barNames_0 \cap \names = \set{a_1, \dots, a_k}$,
      we construct the bar NFTA $\barA = \maketuple{\overline{Q}, \barNames_0, \Sigma, \overline{q_0},
      \overline{\Delta}}$ as follows:
      \begin{itemize}
        \item $\overline{Q} := Q \times \parnom[0]{k}$;
        \item $\overline{q_0} = \maketuple{q_0, (i \mapsto a_i)}$; and
        \item rewrite rules $\overline{\Delta}$:
        \begin{itemize}
          \item $\maketuple{q,r}(a.f(x_1, \dots, x_n)) \to a.f(\maketuple{q_1,r_1}(x_1),\dots,\maketuple{q_n,r_n}(x_n))
            \in \overline{\Delta}$ iff there is an $i \leqslant k$ such that $r(i) = a$; $r_j \leqslant r$ for all
            $j \leqslant n$; and $q(a.f(x_1, \dots, x_n)) \to a.f(q_1(x_1),\dots,q_n(x_n)) \in \Delta$.
          \item $\maketuple{q,r}(\newletter{a}.f(x_1, \dots, x_n)) \to \newletter{a}.f(\maketuple{q_1,r_1}(x_1),\dots,
            \maketuple{q_n,r_n}(x_n)) \in \overline{\Delta}$ iff there is an $\alpha \in \barNames_0\setminus\names$
            such that for all $j \leqslant n$ and all $i \leqslant k$, $\ell \in \dom(r_j)$, we have (i) $r_j(i) = a
            \implies \alpha = \newletter{a_i}$; (ii) $(\alpha = \newletter{a_i} \wedge i \in \dom(r_j)) \implies
            r_j(i) = a$; (iii) $r_j(\ell) \in \set{a, r(\ell)}$; and (iv) $q(\newletter{a}.f(x_1, \dots, x_n)) \to
            \newletter{a}.f(q_1(x_1),\dots,q_n(x_n)) \in \Delta$.
        \end{itemize}
      \end{itemize}
    \end{construction}

    \begin{lemma}\label{lem:corrCloseNFTA}
      Let $A$ be a top-down bar tree automaton and $\barA$ the bar NFTA of~\Cref{constr:closeNFTA}. Then
      $L(\barA) = L_\alpha(A) \restriction \barNames_0$ and $L_\alpha(\barA) = L_\alpha(A)$.
    \end{lemma}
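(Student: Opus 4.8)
The plan is to mirror the proof of \Cref{lem:corrCloseNFA}, lifting the linear register-tracking argument from single branches to trees by means of the branchwise characterization of tree $\alpha$-equivalence in \Cref{lem:aeTrees}. As in the word case, once the first equation $L(\barA) = L_\alpha(A)\restriction\barNames_0$ is established, the second equation $L_\alpha(\barA) = L_\alpha(A)$ is immediate: both automata share the finite alphabet $\barNames_0$, so $L_\alpha(\barA)$ is just the $\alpha$-closure of $L_\alpha(A)\restriction\barNames_0$, and since $L(A)\seq\bartree[\barNamess_0](\Sigma)$ every tree in $L_\alpha(A)$ is already $\alpha$-equivalent to one over $\barNames_0$; hence closing the restriction recovers exactly $L_\alpha(A)$. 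It thus remains to prove the two inclusions for the first equation.

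For the inclusion \enquote{$\qes$}, I would take $t\alphaequiv t'\in L(A)$ with $t\in\bartree[\barNamess_0](\Sigma)$ and fix an accepting run of $t'$ in $A$; note that $t$ and $t'$ share their shape, since $\alpha$-equivalent trees have identical De Bruijn normal forms (\Cref{lem:aeTrees}). I then annotate each node $x$ of $t$ with a partial injective map $r_x\in\parnom[0]{k}$, defined by descending from the root (where $r=(i\mapsto a_i)$) exactly as in the word case along the path to $x$: a register $j$ retains its value when the corresponding name still occurs free somewhere below $x$ in the current subtree and is not rebound at $x$, it stores the newly bound name when a bar node binds a name occurring free below, and it is deleted otherwise. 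The genuinely new point is that at a node with $n$ children this definition forks into maps $r_1,\dots,r_n$, one per subtree, each a restriction of $r_x$ (for plain nodes) or satisfying the binding conditions (i)--(iii) of \Cref{constr:closeNFTA} (for bar nodes), because freeness of a name below $x$ is evaluated subtree-by-subtree. That these annotations form a valid run of $\barA$ then follows branch by branch from \Cref{lem:corrCloseNFA}: along each root-to-leaf path, equality of the De Bruijn bar-string normal forms, guaranteed by $t\alphaequiv t'$ through \Cref{lem:aeTrees}, forces the existence of every required transition.

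For the inclusion \enquote{$\seq$}, given an accepting run of a tree $t\in\bartree[\barNamess_0](\Sigma)$ in $\barA$, I would read off a corresponding tree $t'$ over $\barNames$ together with an accepting run in $A$, choosing at each node a witness $\alpha'$ for the existential in the transition of \Cref{constr:closeNFTA}. Since each node carries a single label, this witness is shared by all of its children, so the consistency problem across branches that one might fear does not arise. I then show $t\alphaequiv t'$ by structural induction, reusing the two-case analysis of \Cref{lem:corrCloseNFA} at each node: a plain label either matches a register set by an earlier binding on the path (yielding equal De Bruijn indices) or is genuinely free, and in the latter case remains free in $t'$ by condition (ii) of the construction. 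By \Cref{lem:aeTrees}, agreement of all branchwise normal forms yields $t\alphaequiv t'$, whence $t\in L_\alpha(A)\restriction\barNames_0$.

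The main obstacle I anticipate is the bookkeeping of the forking register maps at branching nodes, and verifying that \enquote{freeness below $x$}, which in the word case is just freeness in a suffix, behaves correctly when evaluated independently in each subtree rooted at a child of a binder. Once the reduction to paths via \Cref{lem:aeTrees} is set up cleanly, each individual path is handled verbatim as in the word proof, so the only genuinely new content is checking that the per-child maps $r_j$ of \Cref{constr:closeNFTA} assemble the path-level runs into a single coherent tree run.
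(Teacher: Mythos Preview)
Your proposal is correct and follows essentially the same approach as the paper: both directions are reduced to the word case (\Cref{lem:corrCloseNFA}) via the branchwise characterization of tree $\alpha$-equivalence from \Cref{lem:aeTrees}, with the register maps defined by the same descent from the root and the same per-subtree freeness conditions you describe. The paper packages the \enquote{$\qes$} direction as an explicit recursive translation function $\mathsf{tr}$ rather than an annotation, but the content is identical, and your observation that the shared witness $\alpha'$ at a node avoids any cross-branch consistency issue in the \enquote{$\seq$} direction is exactly the point.
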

    \begin{proof}
      Given $A = \maketuple{Q, \barNames_0, \Sigma, q_0, \Delta}$,
      $\names_0 := \barNames_0\cap\names = \set{a_1, \dots, a_k}$
      and $\barA = \maketuple{\overline{Q}, \barNames_0, \Sigma, \overline{q_0}, \overline{\Delta}}$, we see that
      equality of bar languages follows directly from the first property by fact that both $A$ and $\barA$ are over the
      same finite alphabet. Regarding $L(\barA) = L_\alpha(A) \restriction \barNames_0$, we show mutal inclusion:

      \noindent\enquote{$\qes$}: Let $t \in L_\alpha(A) \restriction \barNames_0$, that is,
      $t \alphaequiv t' \in L(A)$, $t \in \bartree[\barNamess_0](\Sigma)$ and $\runtree(t')$
      be an accepting run of $t'$. We translate this run into a run for $t$ in $\barA$ recursively as follows:
      \[ \begingroup\setlength{\arraycolsep}{2pt}\begin{array}{r@{\,}rcl}
        \mathsf{tr}\colon & \parnom[0]{k} \times \bartree[\barNamess_0](\Sigma) \times 
          \bartree[Q \times \barNamess_0](\Sigma) & \longrightarrow & 
          \bartree[\overline{Q} \times \barNamess_0](\Sigma)\\
                & \maketuple{r, \alpha.f(t_1, \dots, t_n), q(\alpha'.f(t'_1, \dots, t'_n))} & \longmapsto &
                \displaystyle \maketuple{q,r}(\alpha.f(\mathsf{tr}(r_1, t_1, t'_1), \dots, \mathsf{tr}(r_n,t_n,t'_n)))
        \end{array}\endgroup \]
      Herein, the $r_i \in \parnom[0]{k}$ ($1 \leqslant i \leqslant n$) are defined pointwise, we have
      \begin{equation} r_i\colon \mathbf{k} \to \names, j \mapsto \begin{cases}
        \ub(\alpha) & \text{, if $\alpha' = \newletter{a_j}$ and $a_j \in \FN(t_i')$} \\
        r(j) & \text{, if $\alpha \neq \newletter{r(j)}$ and $r(j) \in \FN(t_i)$} \\
        \bot & \text{, otherwise.}
      \end{cases} \label{defn:ris}\end{equation}
      Given $\overline{q_0} = \maketuple{q_0, r_0}$, we readily verify that $\mathsf{tr}(r_0, t, \runtree(t'))$
      is a valid accepting run for $t$ from $\overline{q_0}$. Conditions (i) and (iii) are satisfied by construction.
      Regarding condition (ii), i.e.~that every node has a corresponding rewrite rule, we take any node
      $\maketuple{q,r}.(\alpha.f(\maketuple{q_1,r_1}\widetilde{t}_1, \dots, \maketuple{q_n,r_n}\widetilde{t}_n))$ in
      $\mathsf{tr}(r_0, t, \runtree(t'))$ and look at the following two cases: Whenever $\alpha \in \barNames_0\setminus
      \names$, it is obvious that $\maketuple{q,r}(\alpha.f(x_1, \dots, x_n)) \to \alpha.f(\maketuple{q_1,r_1}(x_1),
      \dots, \maketuple{q_n,r_n}(x_n)) \in \overline{\Delta}$. So suppose $\alpha \in \names_0$ and $\alpha' = a_i$.
      We immediately see that $r(i) = \alpha$ by a direct consequence of~\eqref{defn:ris} and $\alpha$-equivalence.
      The proof of this is analogous to the (in)finite word case. But then, we also see that $\maketuple{q,r}(
        \alpha.f(x_1, \dots, x_n)) \to \alpha.f(\maketuple{q_1,r_1}(x_1), \dots, \maketuple{q_n,r_n}(x_n)) \in
        \overline{\Delta}$. Therefore, $t \in L(\barA)$.

      \noindent\enquote{$\seq$}: Let $\runtree(t)$ be an accepted run of $t \in \bartree[\barNamess_0](\Sigma)$ in
      $\barA$. By the definition of runs, we know that every node
      $\maketuple{q,r}(\alpha.f(\maketuple{q_1,r_1}t_1, \dots, \maketuple{q_n,r_n}t_n))$ in $\runtree(t)$ corresponds
      to a rewrite rule
      \[ \maketuple{q,r}(\alpha.f(x_1, \dots, x_n)) \to \alpha.f(\maketuple{q_1,r_1}(x_1), \dots, \maketuple{q_n,r_n}
        (x_n)) \in \overline{\Delta}. \]
      By construction of $\barA$, these rewrite rules correspond to rewrite rules
      \[ q(\alpha'.f(x_1, \dots, x_n)) \to \alpha'.f(q_1(x_1), \dots, q_n(x_n)) \in \Delta \text{ for some
        $\alpha' \in \barNames_0$.} \]
      These $\alpha'$ witness the respective existential quantifier. Using this, we get a tree $\widetilde{t'} \in
      \bartree[Q \times \barNamess_0](\Sigma)$ which is a run in $A$ for its underlying tree $t' \in \bartree(\Sigma)$.
      $\alpha$-Equivalence of $t$ and $t'$ is then shown by equality of their de Bruijn normal forms
      (\Cref{lem:aeTrees}). Since the de Bruijn normal forms are essentially equal to branchwise finite word de Bruijn 
      normal forms, it is enough to check equality of those branchwise normal forms. This is readily verified with an
      analogous argument as in the finite word case (\Cref{lem:corrCloseNFA}) because the rules for transitions in
      $\barA$ is branchwise equivalent to the rules for transitions in the finite word case.
    \end{proof}

    \paragraph*{Blow-Up of Constructions} Regarding the overall complexity: Let $\barNames_0$ be of cardinality
    $k_0$, $k_1 := \card{\barNames_0 \cap \names} < k_0$ and $n$ be the number of states of the bar automaton $A$.
    All constructions result in an automaton with at most $n \cdot 2^{k_1\log(k_1 + 1)}$ states. 
    This follows directly from the definition as an $n$-fold coproduct of the set of partial injective functions
    $(\barNames_0 \cap \names)^{\$\mathbf{k_1}}$.

    \begin{rem}
      Lastly, we remark about the generality of the proven statement: Note that for both bar word automata and bar tree
      automata, the resulting closed bar automaton is always non-deterministic, and thus in turn determinisable.
      However, this non-determinism does not have any impact on the algorithms used for learning.
      Indeed, both non-emptiness as well as the \enquote{product automaton}-construction are still in polynomial time
      even if the bar automaton is non-deterministic. (see~\Cref{prop:complTA})
    \end{rem}
    
    \mypar{Details for~\Cref{rem:tightness}} Fix $k, n \in \Nat$. For
    $\barNames_0 = \setw{a_i, \newletter{a_i}}{1 \leqslant i \leqslant k}$, we look at the bar
    automaton $\A$ accepting only the bar string
    $w = \newletter{a_1}\cdots\newletter{a_k}a_1^n\cdots a_k^n$.  Note that~$\A$ has at least
    $(n + 1)\cdot k$ states. The closed bar automaton $\A_{\cl}$ accepts precisely the bar
    strings of the form $\newletter{a_{j_1}}\cdots\newletter{a_{j_k}}a_{j_1}^n\cdots a_{j_k}^n$,
    where $\maketuple{j_1,\dots,j_k}$ is any permutation of $\maketuple{1,\dots,k}$.  Next we show
    that $\A_{\cl}$ has at least one state for every
    $1 \leqslant \ell \leqslant n, 1 \leqslant i \leqslant k$ and every choice of elements
    $a_{j_i}, \dots, a_{j_k} \in \set{a_1, \dots, a_k}$ which accepts the bar
    string $a_{j_i}^\ell\cdots a_{j_k}^n$. Indeed, if that were not the case, then $\A_{\cl}$ would either
    have one state $q$ accepting two different bar strings $a_{j_i}^\ell\cdots a_{j_k}^n$ and
    $a_{j_i'}^\ell\cdots a_{j_k'}^n$ or one state $q$ accepting both
    $a_{j_i}^\ell\cdots a_{j_k}^n$ and $a_{j_i}^{\ell'}\cdots a_{j_k}^n$ for $\ell \neq
    \ell'$. In the former case, the state $q$ is, by definition of the accepted language of
    $\A_{\cl}$, reached by
    $\newletter{a_{j_1}}\cdots \newletter{a_{j_k}}a_{j_1}^n\cdots a_{j_i}^{n-\ell}$, which
    implies that $\A_{\cl}$ accepts the bar string
    $\newletter{a_{j_1}}\cdots\newletter{a_{j_k}}a_{j_1}^n\cdots
    a_{j_i}^{n-\ell}a_{j_i'}^\ell\cdots a_{j_k'}^n$, a clear contradiction. In the latter case,
    we see that the state $q$ is again reached by
    $\newletter{a_{j_1}}\cdots\newletter{a_{j_k}}a_{j_1}^n\cdots a_{j_i}^{n-\ell}$, which
    implies acceptance of the bar string
    $\newletter{a_{j_1}}\cdots\newletter{a_{j_k}}a_{j_1}^n\cdots a_{j_i}^{n-\ell+\ell'} \cdots
    a_{j_k'}^n$ with $n - \ell + \ell' \neq n$, which again is a contradiction.  This results in at
    least $\sum_{i = 1}^{k} n \cdot \frac{k!}{i!}$ states, because there are
    $\nicefrac{k!}{i!}$ choices for the $a_{j_i}, \dots, a_{j_k}$. This number of states is in
    $\Theta(n\cdot k!)$ as shown by the following inequations: 
    \[ 
      n \cdot k!
      \leqslant
      n \cdot k! \cdot \sum_{i = 1}^{k} \frac{1}{i!}
      <      
      n \cdot k! \cdot \sum_{i=1}^\infty\frac{1}{i!}
      \leqslant
      (e-1) \cdot n \cdot k!
      <
      2 \cdot n \cdot k!,
    \]
    using the power series expansion of the exponential function $e^x = \sum_{i=0}^\infty x^i/i!$.

 \section{Details for \Cref{sec:checkingAlphaEquiv}}
We prove that the normal form for bar strings 
    is stable under permutations and concatenations to the left:
    \begin{lemma}\label{lem:nfstbPP}
        Let $w \in \barAs$ be a bar string and $\pi \in \Perm(\names)$ a permutation. Then
        $\nf(\pi \cdot w) = \pi \cdot \nf(w)$, where permutations act trivially on natural numbers.
    \end{lemma}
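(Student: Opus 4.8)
The plan is to prove the identity letter by letter. Writing $w = \alpha_1\cdots\alpha_n$ with $\nf(w) = \beta_1\cdots\beta_n$ and $\nf(\pi\cdot w) = \gamma_1\cdots\gamma_n$, I would show that $\gamma_i = \pi\cdot\beta_i$ for every $i$, recalling that $\pi$ fixes every natural number and acts as the bijection $a\mapsto\pi(a)$ on names, while $\pi\cdot(\pi\cdot w)$ is formed by the letterwise action $\pi\cdot\newletter{a} = \newletter{\pi(a)}$. Since the value $\beta_i$ produced by the three clauses depends only on the prefix $\alpha_1\cdots\alpha_i$, an induction on $i$ (equivalently, on the length of $w$) is the natural vehicle, and the induction hypothesis is genuinely needed only in clause~2.

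The key structural observation is that, because $\pi$ is a bijection, it preserves all the combinatorial data on which the clauses depend. Concretely, $\alpha_i$ is a bar name iff $\pi\cdot\alpha_i$ is, so the number of bar names in any prefix is unchanged and clause~3 produces the same level $k+1$ at position $i$ for both $w$ and $\pi\cdot w$; as $\pi\cdot(k+1)=k+1$, this case matches. For plain names the crucial point is the equivalence $\alpha_j = \newletter{\alpha_i} \iff \pi\cdot\alpha_j = \newletter{\pi\cdot\alpha_i}$, which again is immediate from bijectivity of $\pi$. Hence the set of positions $j<i$ that bind $\alpha_i$ in $w$ coincides with the set binding $\pi\cdot\alpha_i$ in $\pi\cdot w$; in particular the two strings agree on which plain-name occurrences are free. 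This settles clause~1 directly, where $\beta_i=\alpha_i$ and $\gamma_i=\pi\cdot\alpha_i=\pi\cdot\beta_i$, and it identifies the binding index $j = \max\setw{j<i}{\alpha_j = \newletter{\alpha_i}}$ as the same for both strings in clause~2.

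For clause~2 I would then invoke the induction hypothesis at the (shared) binding position $j$: it gives $\gamma_j = \pi\cdot\beta_j$, and since $\beta_j$ is a natural number (produced by clause~3) this reads $\gamma_j=\beta_j$. As $\beta_i=\beta_j$ and $\gamma_i=\gamma_j$ by definition, we obtain $\gamma_i=\beta_i=\pi\cdot\beta_i$. Assembling the three cases yields $\gamma_i=\pi\cdot\beta_i$ for all $i$, i.e. $\nf(\pi\cdot w)=\pi\cdot\nf(w)$. There is no real obstacle beyond bookkeeping; the only point needing care is clause~2, namely confirming that the binding bar-name position $j$ is identical for $w$ and $\pi\cdot w$ and that the level stored there is a natural number fixed by $\pi$ — both of which follow from the bijectivity of $\pi$ together with the induction hypothesis.
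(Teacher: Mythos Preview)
Your proposal is correct and follows essentially the same approach as the paper: a letter-by-letter case distinction on the three defining clauses of the De Bruijn normal form, exploiting that the bijection $\pi$ preserves the bar/plain distinction, the count of preceding bar names, and the binding positions. The only cosmetic difference is that you phrase the argument as an explicit induction on~$i$ and invoke the induction hypothesis at the binding position in clause~2, whereas the paper handles all positions simultaneously and refers back to the bar-name case with ``by the argument before''; both amount to the same reasoning.
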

    \begin{proof}
        Let $w = \alpha_1 \cdots \alpha_n$, $\nf(w) = \beta_1 \cdots \beta_n$ and $\nf(\pi \cdot w) = \gamma_1 \cdots
        \gamma_n$ be defined as above. We show that $\gamma_i = \pi \cdot \beta_i$ holds for all $i$ by case
        distinction: If $\beta_i \in \names$, then $\alpha_i$ is free in $w$ as is $\pi(\alpha_i)$ in $\pi \cdot w$.
        Therefore, $\gamma_i = \pi(\alpha_i) = \pi(\beta_i)$.
        If $\beta_i = k$ and $k$ does not occur in $\beta_1 \cdots \beta_{i - 1}$, then $\alpha_i = \newletter{a} \in
        \barNames$ and $\alpha_1 \cdots \alpha_{i - 1}$ contain $k - 1$ bar names. However, so does
        $\pi \cdot (\alpha_1 \cdots \alpha_{i - 1})$ and $\pi\cdot \alpha_i = \newletter{\pi(a)} \in \barNames$.
        Therefore, $\gamma_i = k = \pi \cdot \beta_i$.
        Lastly, if $\beta_i = k$ and $k$ occurs in $\beta_1 \cdots \beta_{i - 1}$, let $j < i$ denote the smallest index
        with $\beta_j = k$. Existence of such a $j$ is obvious by design. Additionally, $\alpha_i \in \names$ and
        $\alpha_j = \newletter{\alpha_i}$.
        Then, $\pi(\alpha_i) \in \names$ and $\pi \cdot \alpha_j = \newletter{\pi(\alpha_i)}$ and there is no
        index $\ell > j$ with $\pi \cdot \alpha_\ell = \newletter{\pi(\alpha_i)}$. Therefore, $\gamma_j = k$
        (by the argument before) and $\gamma_i = k$. 
      \end{proof}
      \begin{rem}\label{R:pi.w}
        Note that for a permutation $\pi$ that does not change any free letters of the word $w$,
        we have $\nf(w) = \nf(\pi\cdot w)$.
      \end{rem}
      \begin{lemma}\label{lem:nfstbCC}
        Let $v, w, x \in \barAs$ be bar strings. If $\nf(w) = \nf(v)$, then $\nf(xw) = \nf(xv)$.
    \end{lemma}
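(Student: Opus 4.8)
The plan is to show that the normal form of $xw$ is completely determined by $x$ together with $\nf(w)$, so that the hypothesis $\nf(w)=\nf(v)$ immediately forces $\nf(xw)=\nf(xv)$. Write $x=\alpha_1\cdots\alpha_m$, $w=\gamma_1\cdots\gamma_n$ and $\nf(w)=\beta_1\cdots\beta_n$. I first record the essential \emph{locality} property: for any bar string $u=u_1\cdots u_N$ and any $i\le N$, the letter $\nf(u)_i$ depends only on the prefix $u_1\cdots u_i$. This follows by induction on $i$ from the three clauses of the definition, since clauses~1 and~3 inspect only $u_1\cdots u_{i-1}$, and clause~2 sets $\nf(u)_i=\nf(u)_j$ for an index $j<i$ that is itself read off $u_1\cdots u_i$. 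In particular, for $1\le i\le m$ the $i$-th letters of $\nf(xw)$, $\nf(xv)$ and $\nf(x)$ all coincide, so the two normal forms already agree on the prefix coming from $x$.

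It remains to compare the suffix positions $m+1,\dots,m+n$. Let $b$ be the number of bar names in $x$, and let $h_x\colon\names+\Nat\to\names+\Nat$ send $k\in\Nat$ to $k+b$, and send a name $a\in\names$ to the index of the last occurrence of $\newletter{a}$ in $x$ (that is, one plus the number of bar names strictly preceding it) if such an occurrence exists, and to $a$ itself otherwise; note that $h_x$ depends only on $x$. The key claim is that for every $1\le j\le n$,
\begin{equation}\label{eq:star}
\nf(xw)_{m+j}=h_x(\beta_j).
\end{equation}
I would prove \eqref{eq:star} by the same case split as the definition, applied to $\gamma_j$. First, if $\gamma_j$ is a bar name then $\beta_j\in\Nat$ equals $1$ plus the number of bars in $\gamma_1\cdots\gamma_{j-1}$, and prepending $x$ adds exactly $b$ further bars to its left, so $\nf(xw)_{m+j}=\beta_j+b=h_x(\beta_j)$. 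Second, if $\gamma_j$ is a plain name bound inside $w$, its nearest binder $\newletter{\gamma_j}$ lies at some position $j'<j$ of $w$, and since $x$ contributes no binder to the right of it, the same binder still acts in $xw$; hence $\nf(xw)_{m+j}=\nf(xw)_{m+j'}=\beta_{j'}+b=\beta_j+b=h_x(\beta_j)$, using the first case and $\beta_j=\beta_{j'}$. Third, if $\gamma_j$ is free in $w$, clause~1 records $\beta_j=\gamma_j\in\names$; as there is no binder for $\gamma_j$ in the $w$-block to its left, its fate in $xw$ is decided solely by $x$, yielding the index of the last $\newletter{\gamma_j}$ in $x$ when present and the name $\gamma_j=\beta_j$ otherwise, which is exactly $h_x(\beta_j)$.

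Since the right-hand side of \eqref{eq:star} depends only on $x$ and on $\beta_j=\nf(w)_j$, and since by hypothesis $\nf(v)_j=\beta_j$ as well, the identical computation gives $\nf(xv)_{m+j}=h_x(\beta_j)=\nf(xw)_{m+j}$ for every $j$. Combined with the agreement on the first $m$ positions, this yields $\nf(xw)=\nf(xv)$, as required. The one genuinely delicate point is the third case: a name free in $w$ may become bound once $x$ is prepended. What saves the argument is that free names are recorded \emph{verbatim} in the normal form (clause~1), so $\nf(w)=\nf(v)$ forces $w$ and $v$ to carry literally the same free name at each such position, and $x$ then binds or leaves free these names identically in $xw$ and $xv$. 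I would therefore take care to phrase all three cases purely in terms of $\beta_j$, so that the dependence of $\nf(xw)_{m+j}$ on $w$ factors entirely through $\nf(w)$. Note that I deliberately avoid invoking \Cref{lem:aeDBNF} here, since that proposition is itself proved using the present lemma.
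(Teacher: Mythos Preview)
Your proof is correct and follows essentially the same strategy as the paper: both show that each letter of $\nf(xw)$ is determined by $x$ together with the corresponding letter of $\nf(w)$, whence $\nf(w)=\nf(v)$ forces $\nf(xw)=\nf(xv)$. Your version is in fact more careful than the paper's, which treats only the case $\nf(w)_j\in\Nat$ explicitly and leaves the free-name case implicit; your introduction of the map $h_x$ and the explicit third case (a name free in $w$ that may become bound by $x$) fills exactly that gap.
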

    \begin{proof}
        Let $w = \alpha_1 \cdots \alpha_n$, $x = \beta_1 \cdots \beta_k$, $\nf(x) = \gamma_1 \cdots \gamma_k$
        and $\nf(w) = \gamma_1' \cdots \gamma_n'$. Then $\nf(xw) = \delta_1 \cdots \delta_{n + k}$ is given as follows:
        For $1 \leqslant i \leqslant k$, we have $\delta_i = \gamma_i$. Let $\ell = \max
        \setw{\gamma_i \in \Nat}{1 \leqslant i \leqslant k}$. If $\gamma_{i - k}' \in \Nat$, then $\delta_i = 
        \gamma_{i - k}' + \ell$. This is easily seen, since if $\gamma_{i - k}' = j$ and if $j$ does not occur in
        $\gamma_1' \cdots \gamma_{i-k-1}'$, there is a bar name in $w$ at position $i - k$ and $j - 1$ bar names
        occur prior to it. Since the prefix $x$ has $\ell$ bar names, there is still a bar name in $xw$ at position
        $i$ but now there are $j - 1 + \ell$ bar names prior to this position. Similarly, if $\gamma_{i - k}' = j$ occurs
        in $\gamma_1' \cdots \gamma_{i-k-1}'$ first at position $m < i - k$, then there is a plain name in $w$ at position
        $i - k$ that corresponds to the bar name at position $m$ such that this bar name does not occur between
        positions $m$ and $i - k$. The prefix again does not change anything in this situation besides increasing the
        level from $j$ to $j + \ell$ because there are $\ell$ more bar names before position $m$. This makes the above
        definition the De Bruin normal form for $xw$, which depends only on $\nf(w)$ and $\nf(x)$. Therefore, $\nf(xw)
        = \nf(xv)$, since $\nf(w) = \nf(v)$.
    \end{proof}
    
\mypar{Proof of \Cref{lem:aeDBNF}}
        \noindent $(\Rightarrow)$: Let $w \alphaequiv v$ be in one step, that is, $w = \newletter{a}w' 
        \alphaequiv \newletter{b}v' = v$ by $\braket{a}w' = \braket{b}v'$. This is without loss of generality, since syntactic
        equality is transitive and prefixes do not change equality of normal forms (\Cref{lem:nfstbCC}).
        Thus, we have some fresh $c \in \names$ such that $\makecycle{a,c} \cdot w' = \makecycle{b,c} \cdot v'$
        and also $\makecycle{a,c} \cdot w = \makecycle{b,c} \cdot v$. By~\Cref{lem:nfstbPP} and
        \Cref{R:pi.w}, we see that
        $\nf(w) = \nf(\makecycle{a,c} \cdot w) = \nf(\makecycle{b,c} \cdot v) = \nf(v)$.

        \noindent $(\Leftarrow)$: We first show that given bar strings $w = xu$ and $v = xy$
        with longest joint prefix $x$ and identical normal form
        $\nf(w) = \nf(v) = \gamma_1 \cdots \gamma_n$, either $w = v$ or two
        $\alpha$-equivalent $w' = x'u'$ and $v' = x'y'$ exist, i.e.~$w \alphaequiv w'$ and
        $v \alphaequiv v'$, such that $\abs{x'} > \abs{x}$.  This implies the statement using
        an iterative process. If $\abs{x} = \abs{w}$, then $w = v$ by design, so assume
        $\abs{x} < \abs{w}$ and $w \neq v$. Because $x$ is the longest joint prefix of $w$ and
        $v$, and both normal forms are equal ($\nf(w) = \nf(v)$), we see that both $u$ and $y$
        start with different bar names, say $\newletter{a}$ and $\newletter{b}$ for $a \neq
        b$. Let the level, i.e.~the value in the normal form, of those bar names be
        $\ell$. Let $c \in \names$ be a fresh name for $w$ and $v$, then both
        $xu \alphaequiv x \left(\makecycle{a,c}\cdot u\right)$ and
        $xy \alphaequiv x \left(\makecycle{b,c}\cdot y\right)$ are $\alpha$-equivalences.
        Additionally, it is easy to verify that both
        $x \left(\makecycle{a,c}\cdot u\right) = \alpha_1 \cdots \alpha_n$ and
        $x \left(\makecycle{b,c}\cdot y\right) = \beta_1 \cdots \beta_n$ coincide up to the
        index $j$ where $\gamma_j = \ell + 1$. Thus, we obtain the decomposition
        $x' = \alpha_1 \cdots \alpha_{j - 1}$, $u' = \alpha_j \cdots \alpha_n$ and
        $y' = \beta_j \cdots \beta_n$. Because of the construction of the De Bruijn normal
        form, we have $\abs{x'} > \abs{x}$. Thus, we arrive at a sequence of $\alpha$-equivalent
        strings proving the equivalence $w \alphaequiv v$.

\mypar{Proof of \Cref{lem:aeinfbar}}
    \begin{proof}
        The \enquote{only if}-direction is trivial by definition of $\alpha$-equivalence on infinite bar strings.
        For the \enquote{if}-direction, we assume $p_0 \alphaequiv p_1$ and show $(u_0v_0^\omega)_n \alphaequiv
        (u_1v_1^\omega)_n$ for every $n \in \Nat$. Let $m$ be the length of $p_0$ (and $p_1$), then
        $(u_0v_0^\omega)_n \alphaequiv (u_1v_1^\omega)_n$ holds for every $n \leqslant m$. For $n > m$, we show
        equality of the De Bruijn normal forms $\nf((u_0v_0^\omega)_n) = \alpha_1 \cdots \alpha_n$ and
        $\nf((u_1v_1^\omega)_n) = \beta_1 \cdots \beta_n$ (\Cref{lem:aeDBNF}). We have equality
        of $\alpha_i$ and $\beta_i$ for $i \leqslant m$ by assumption. For $i > m$, suppose $\alpha_i \neq \beta_i$.
        We proceed by case distinction.
        \begin{enumerate}
        \item Suppose that $\alpha_i, \beta_i \in \names$. This means that there are free letters
        $a \neq b$ at position $i$ in $(u_0v_0^\omega)_n$ and $(u_1v_1^\omega)_n$ occurring by design in $v_0$ or $v_1$
        at some positions $i_0$ and $i_1$. However, note that due to the length of each $p_i$, every pair of letters that occur in some index of the infinite words already occur somewhere in the finite prefixes $p_0$ and $p_1$ at the same relative positions.
        In other words, there must be a position $j < m$ in $(u_0v_0^\omega)_n$ and $(u_1v_1^\omega)_n$ with the
        letters of position $i_0$ or $i_1$ in $v_0$ or $v_1$, 
        respectively. Since both letters are free at position $i$, they must be free at position $j$, thereby
        contradicting the $\alpha$-equivalence of $p_0$ and $p_1$.

      \item Suppose that $\alpha_i \in \names$ and $\beta_i \in \Nat$. This also contradicts our assumption of $p_0
        \alphaequiv p_1$: This means that at position $i$ there is a free letter $a \in \names$ in $(u_0v_0^\omega)_n$
        and a bar name or bound letter in $(u_1v_1^\omega)_n$. Suppose  that they occur at positions $i_0$ and $i_1$ in $v_0$  and $v_1$, respectively. Again, due to the length of $p_0$ and $p_1$, there must be a position $\abs{u_0} +  \max\set{\abs{v_0}, \abs{v_1}} < j < m$, where positions $i_0$ and $i_1$ in $v_0$ and $v_1$ occur.
        Clearly the letter $a$ at position $j$ in the first prefix is again free, leading to direct contradiction, if the letter at position $i_1$ in $v_1$ is a bar name.
        If the letter at position $i_1$ in $v_1$ is plain (and bound), then the corresponding bar name for position $i$ occurs either in $u_1$, the directly preceeding iteration of $v_1$ (whence in $v_1$ after position $i_1$), or in $v_1$ before position $i_1$.
        Since the position $j$ is chosen such that there is copy of $v_1$ preceeding it, this makes the letter at position~$j$ also plain and bound, contradicting the presumed $\alpha$-equivalence of $p_0$ and $p_1$.
        The case $\alpha_i \in \Nat$ and $\beta_i \in \names$ follows analogously.

      \item Suppose that $\alpha_i \neq \beta_i \in \Nat$.
        The case where there is a bar name in one string and a plain name in the other at position $i$, follows
        analogously from the previous case. 
        Therefore, suppose that there are two bar names at position $i$ in $(u_0v_0^\omega)_n$ and $(u_1v_1^\omega)_n$.
        Since the indices of bar names indicate the number of other bar names before that position in the bar string,
        there must be a position $j < i$, where there is a bar name in one string and a plain name in the other,
        leading to a contradiction as shown in the previous case.
        Lastly, if two plain names occur at position $i$ in $(u_0v_0^\omega)_n$ and $(u_1v_1^\omega)_n$ (and by design
        in $v_j$ at position $i_j$ for $j = 1, 2$), we see that either there is a position $j < m$ (repeating the
        positions $i_0$ and $i_1$) with $\alpha_j \neq \beta_j \in \Nat$ leading directly to a contradiction or
        there is some position in between where there is a bar name in one and plain name in the other, leading to
        a contradiction as shown in the previous case. 
      \end{enumerate}
      Therefore, we have $\alpha_i = \beta_i$. By~\Cref{lem:aeDBNF}, we conclude that $(u_0v_0^\omega)_n \alphaequiv (u_1v_1^\omega)_n$ holds, as desired.
    \end{proof}

  \begin{lemma}\label{lem:TnfstPP}
        De Bruijn normal forms for bar $\Sigma$-trees are stable under permutations: $\pi \cdot \nf(t)
        = \nf(\pi \cdot t)$ for $\pi \in \Perm(\names)$ and $t \in \tree_\names(\Sigma)$.
    \end{lemma}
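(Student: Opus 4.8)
The plan is to reduce the tree statement to the already-established stability of the bar string normal form under permutations, namely \Cref{lem:nfstbPP}, and to argue node by node. Recall that $\nf(t)$ is a De Bruijn $\Sigma$-tree of the same shape as $t$ whose label at a node $x$ is computed as follows: one reads off the sequence $\alpha_1.f_1, \dots, \alpha_m.f_m$ of labels along the root-to-$x$ path (with $\alpha_m.f_m$ the label at $x$ itself), forms the bar string $w_x = \alpha_1 \cdots \alpha_m \in \barAs$ of its name components, and takes the label to be the last letter of $\nf(w_x) \in (\names + \Nat)^*$. Since $\pi \cdot t$ has the same shape as $t$, all three trees $\nf(t)$, $\nf(\pi \cdot t)$ and $\pi \cdot \nf(t)$ share the same set of node positions, so it suffices to compare their labels nodewise.

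First I would record the key structural fact that links the tree action to the letterwise string action: by the group action $\pi \cdot (\alpha.f(t_1,\dots,t_n)) = (\pi\cdot\alpha).f(\pi\cdot t_1,\dots,\pi\cdot t_n)$, applying $\pi$ to $t$ affects only the name components of the labels and fixes the signature parts. Consequently, for every node $x$ the root-to-$x$ path bar string in $\pi \cdot t$ is precisely $\pi \cdot w_x$, the letterwise image of $w_x$ under $\pi$. This identification is the only genuinely tree-specific observation needed.

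Next I would apply \Cref{lem:nfstbPP} to this path bar string, obtaining $\nf(\pi \cdot w_x) = \pi \cdot \nf(w_x)$. Since $\pi$ acts letterwise on $(\names + \Nat)^*$ and trivially on natural numbers, the last letter of $\pi \cdot \nf(w_x)$ is exactly $\pi$ applied to the last letter of $\nf(w_x)$. Hence the label of $x$ in $\nf(\pi \cdot t)$ is $\pi$ applied to the label of $x$ in $\nf(t)$. On the other hand, by definition of the action on De Bruijn $\Sigma$-trees (which again fixes the signature part and acts trivially on indices), the label of $x$ in $\pi \cdot \nf(t)$ is also $\pi$ applied to the label of $x$ in $\nf(t)$. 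The two labels coincide at every node, giving $\pi \cdot \nf(t) = \nf(\pi \cdot t)$.

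I do not expect a real obstacle here, as \Cref{lem:nfstbPP} already carries the substantive content. The only point requiring a little care is the bookkeeping in the second paragraph, namely verifying that the root-to-$x$ path in $\pi \cdot t$ is the letterwise $\pi$-image of the corresponding path in $t$, and that extracting the last letter commutes with the letterwise action; both follow directly from the respective definitions of the group action and of the normal form.
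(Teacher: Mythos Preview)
Your proof is correct, but it takes a genuinely different route from the paper's own argument. The paper proves the lemma by a direct three-case analysis at each node: writing $\delta$ for the letter in $\nf(t)$ and $\gamma$ for the corresponding letter in $\nf(\pi\cdot t)$, it distinguishes the cases $\delta\in\names$ (free occurrence), $\delta=k$ with no prior $k$-node on the branch (bar name), and $\delta=k$ with a prior $k$-node (bound plain name), and in each case verifies $\gamma=\pi\cdot\delta$ from first principles. This essentially re-derives the content of \Cref{lem:nfstbPP} in the tree setting. You instead exploit the fact that the tree normal form is \emph{defined} via the string normal form along root-to-node paths, observe that these path strings transform letterwise under $\pi$, and invoke \Cref{lem:nfstbPP} directly. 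Your reduction is more modular and economical: it makes explicit that the tree result is a corollary of the string result and avoids repeating the case analysis. The paper's approach is self-contained but redundant given that \Cref{lem:nfstbPP} is already available.
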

    \begin{proof}
        We show the desired equality node by node: Let $\gamma.f(t_1, \dots, t_n)$ ($\gamma \in \names + \Nat$,
        $\nicefrac{f}{n} \in \Sigma$) be a node in $\nf(\pi \cdot t)$ corresponding to nodes
        $\pi(\alpha).f(\pi\cdot t_1', \dots, \pi \cdot t_n')$ ($\alpha \in \barNames$) in $\pi \cdot t$,
        $\alpha.f(t_1',\dots,t_n')$ in $t$ and $\delta.f(t_1'',\dots,t_n'')$ ($\delta \in \names + \Nat$) in $\nf(t)$.
        We show $\gamma = \pi \cdot \delta$ by case distinction:

        If $\delta \in \names$, then $\alpha$ is a free occurrence in $t$, thereby making
        $\pi(\alpha)$ a free occurrence in $\pi \cdot t$, which implies that $\gamma = \pi(\delta)$.

        If $\delta = k$ and there is no prior node $k.g(s_1, \dots, s_m)$ for
        $\nicefrac{g}{m} \in \Sigma$ in the branch from the root of $\nf(t)$, then
        $\alpha = \newletter{a} \in \barNames$ and there are $k - 1$ bar names before the node
        in the branch from the root of $t$. Since the same number of bar names also occur in
        $\pi \cdot t$, we have $\gamma = k$.

        Lastly, we consider the case where $\delta = k$ and there is a prior node
        $k.g(s_1, \dots, s_m)$ for $\nicefrac{g}{m} \in \Sigma$ in the brach from the
        root of $\nf(t)$. Let $k.g(s_1, \dots, s_m)$ be the earliest of those nodes. The
        corresponding node $\beta.g(s_1', \dots, s_m')$ in $t$ has
        $\beta = \newletter{a} \in \barNames$, making $\alpha = a$. In $\pi \cdot t$, the
        corresponding node is $\newletter{\pi(a)}.g(s_1'', \dots, s_m'')$, and this is the last
        node with $\newletter{\pi(a)}$ in the branch from the root to $\pi(\alpha).f$.  As in
        the argument for the previous case, the node corresponding to $k.g$ in
        $\nf(\pi\cdot t)$ is also $k.g$, hence $\gamma = k$.
    \end{proof}
    \begin{rem}\label{R:pi.t}
      Similarly to \Cref{R:pi.w}, we see that for a permutation $\pi$ that does not change any
      free letters of the tree $t$, we have $\nf(t) = \nf(\pi\cdot t)$.
    \end{rem}
    \begin{lemma}\label{lem:TnfstCC}
        Let $C$ be a context with $k$ holes and $s_1, \dots, s_k$ as well as $t_1, \dots, t_k$ be bar $\Sigma$-trees
        satisfying $\nf(s_i) = \nf(t_i)$ for $1 \leqslant i \leqslant k$. Then, $\nf(C[s_1, \dots, s_k]) =
        \nf(C[t_1, \dots, t_k])$.
    \end{lemma}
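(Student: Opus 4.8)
The plan is to reduce the tree statement to the string-level stability lemma \Cref{lem:nfstbCC} by exploiting that the De Bruijn normal form of a bar tree is computed branchwise. Recall that the label of a node $x$ in $\nf(t)$ is the last letter of the bar-string normal form $\nf(\alpha_1\cdots\alpha_n\alpha)$, where $\alpha_1.f_1,\dots,\alpha_n.f_n,\alpha.f$ are the labels on the root-to-$x$ path in $t$. Consequently, $\nf(s_i)=\nf(t_i)$ holds precisely when $s_i$ and $t_i$ have the same shape and, for every node, the bar strings read along the corresponding root-to-node paths have equal normal forms. This branchwise characterisation is the bridge I would use to pass from the tree level to \Cref{lem:nfstbCC}.

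First I would fix, for each hole $\ell\in\{1,\dots,k\}$ of the context $C$, the bar string $x_\ell\in\barAs$ consisting of the plain and bar names labelling the nodes of $C$ strictly above the $\ell$-th hole. Since $\nf(s_\ell)=\nf(t_\ell)$ forces $s_\ell$ and $t_\ell$ to have the same shape, the trees $C[s_1,\dots,s_k]$ and $C[t_1,\dots,t_k]$ share the same shape, so it suffices to show that they carry the same De Bruijn label at every node. I would split the nodes into two groups: those lying in the context part $C$, and those lying inside one of the substituted subtrees.

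For a node in the context part, its root-to-node path lies entirely within $C$ and is therefore identical in both trees, so the corresponding bar strings, and hence their last normal-form letters, coincide. For a node lying inside the copy of $s_\ell$ (respectively $t_\ell$) substituted at hole $\ell$, the root-to-node path factors as $x_\ell$ followed by a root-to-node path $p$ within $s_\ell$ (respectively the corresponding path $p'$ within $t_\ell$). By the branchwise characterisation above, $\nf(s_\ell)=\nf(t_\ell)$ gives $\nf(p)=\nf(p')$, and \Cref{lem:nfstbCC} then yields $\nf(x_\ell p)=\nf(x_\ell p')$; in particular their last letters agree, and these are exactly the De Bruijn labels of the node in the two trees. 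As every node receives the same label, the normal forms $\nf(C[s_1,\dots,s_k])$ and $\nf(C[t_1,\dots,t_k])$ coincide.

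The only mildly delicate point is the branchwise reformulation of tree normal-form equality, together with the observation that a node's label depends solely on its root-to-node path, so that substitution below a node never affects the labels strictly above it (this is what keeps the context part unchanged and lets $x_\ell$ act as a fixed left factor). Once this is in place, the statement is an immediate branchwise application of the string lemma \Cref{lem:nfstbCC}, one instance per hole, with $x_\ell$ playing the role of the common left factor $x$.
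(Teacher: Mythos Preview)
Your proof is correct, and it takes a genuinely different route from the paper's own argument. The paper proceeds by a direct node-by-node case analysis: for a node inside the substituted subtree at hole~$i$ with label $\delta$ in $\nf(s_i)$, it explicitly computes the resulting label $\gamma$ in $\nf(C[s_1,\dots,s_k])$ (namely $\gamma = \delta + \ell_i$ when $\delta\in\Nat$, where $\ell_i$ counts bar names on the path to hole~$i$; and a further two-way split when $\delta\in\names$), concluding that $\gamma$ depends only on $C$ and $\nf(s_i)$. Your argument instead observes that the tree normal form is branchwise the string normal form, so that equality of $\nf(s_\ell)$ and $\nf(t_\ell)$ yields equality of $\nf(p)$ and $\nf(p')$ for all corresponding root-to-node paths, and then invokes the already-proved string lemma \Cref{lem:nfstbCC} with left factor $x_\ell$. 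This is more economical and conceptually cleaner, since it reuses the case analysis already packaged in \Cref{lem:nfstbCC} rather than redoing it at the tree level; the paper's direct computation, on the other hand, makes the explicit shift formula $\gamma = \delta + \ell_i$ visible, which could be handy elsewhere but is not needed here.
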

    \begin{proof}
      We show the desired equality of normal forms node by node: Let $\ell_i$ for
      $1 \leqslant i \leqslant k$ denote the number of bar names in $C$ in the branch from the
      root to the $i$-th hole. It is evident that nodes $\gamma.f(u_1, \dots, u_m)$, for
      $\gamma \in \names + \Nat$, $\nicefrac{f}{m} \in \Sigma$, corresponding to nodes in the
      context are equal in both normal forms.  Let $\gamma.f(u_1, \dots, u_m)$ for
      $\gamma \in \names + \Nat$, $\nicefrac{f}{m} \in \Sigma$ in $\nf(C[s_1, \dots, s_k])$ be
      a node corresponding to $\alpha.f(u_1', \dots, u_m')$ in $s_i$ for
      $1 \leqslant i \leqslant k$ and to $\delta.f(u_1'', \dots, u_m'')$ in $\nf(s_i) =
      \nf(t_i)$. We shall now prove that $\gamma$ only depends on $\nf(s_i)$ and $C$.  
      We proceed by case distinction:
        
      Firstly, if $\delta \in \Nat$, then we show $\gamma = \delta + \ell_i$. Indeed, if
      $\alpha = \newletter{a} \in \barNames$, then there are $\delta - 1$ bar names in $s_i$ on
      the branch from the root to the node $\alpha.f$. Thus, we have $\delta - 1 + \ell_i$ bar
      names in $C[s_1, \dots, s_k]$ on the branch from the root to this node $\alpha.f$. Hence,
      by definition $\gamma = \delta + \ell_i$. If $\alpha$ is a plain name bound by a bar name
      in $s_i$ in the branch from the root to the node, then that bar name is replaced with the number
      $\delta + \ell_i$ in $\nf(C[s_i, \ldots, s_k])$; in symbols: $\gamma = \delta + \ell_i$.

      Secondly, if $\delta \in \names$ and there is no node $\newletter{\delta}.g$ in the
      branch from the root to hole $i$ in the context $C$, then $\delta$ remains free in the
      branch from the root to the node in $C[s_1, \ldots, s_k]$. Thus, we have
      $\gamma = \delta$.

      Lastly, if $\delta \in \names$ and there is a node $\newletter{\delta}.g$ in the branch
      from the root to hole $i$ in the context $C$, then take the last of these. In fact,
      this (last) node corresponds to $j.g$ in $\nf(C[s_1, \dots, s_k])$. Thus, we have $\gamma
      = j$.

      This finishes our argument that $\gamma$ only depends on $\nf(s_i)$ and $C$. Since
      $\nf(s_i) = \nf(t_i)$ for all $i = 1, \ldots, k$, we can thus conclude that
      $\nf(C[s_1, \ldots, s_k])$ and $\nf(C[t_1, \ldots, t_k])$, as desired.
    \end{proof}

\mypar{Proof of \Cref{lem:aeTrees}}
      \noindent $(\Rightarrow)$: Let $s \alphaequiv t$ in one step at the root, that is,
      $s = \newletter{a}.f(s_1, \dots, s_n)$, $\newletter{b}.f(t_1, \dots, t_n) = t$ for
      $a, b \in \names$, $\nicefrac{f}{n} \in \Sigma$ and $s_i, t_i \in \tree_\names(\Sigma)$
      with $\braket{a}(s_1, \dots, s_n) = \braket{b}(t_1, \dots, t_n)$. This is without loss of
      generality, since contexts do not change the equality of normal forms
      (\Cref{lem:TnfstCC}). Therefore, we have a fresh $c \in \names$ such that
      $\makecycle{a,c} \cdot s_i = \makecycle{b,c} \cdot t_i$ for all
      $1 \leqslant i \leqslant n$ and thus also
      $\makecycle{a,c} \cdot s = \makecycle{b,c} \cdot t$. By~\Cref{lem:TnfstPP}
      and~\Cref{R:pi.t}, we see that
      \[
        \nf(s) = \nf(\makecycle{a,c} \cdot s) = \nf(\makecycle{b,c} \cdot t) = \nf(t).
      \]

      \noindent $(\Leftarrow)$: Let $\nf(s) = \nf(t)$. We show $s \alphaequiv t$ using an
      iterative process as follows. If $s = t$, then we are done. Otherwise, let $C$ be the
      maximal context such that $s = C[s_1, \dots, s_k]$ and $t = C[t_1, \dots, t_k]$. By the
      maximality of $C$ and since $\nf(s) = \nf(t)$, we know that there exist $s_j$ and $t_j$
      such that $s_j = \newletter{a}.f(s_1',\dots, s_n')$ and
      $t_j = \newletter{b}.f(t_1',\dots, t_n')$ for $a\neq b \in \names$ and
      $\nicefrac{f}{n} \in \Sigma$. Now choose a $c \in \names$ fresh for both $s_j$ and
      $t_j$. Then, we have $s_j \alphaequiv \makecycle{a,c} \cdot s_j$ and
      $t_j \alphaequiv \makecycle{b,c} \cdot t_j$ and, moreover, $\makecycle{a,c} \cdot s_j$
      and $\makecycle{b,c} \cdot t_j$ coincide in the root node $\newletter{c}.f$.  Thus, we
      have a non-trivial context $C'$, say with $i$ holes, such that
      $s_j \alphaequiv C'[s_1^j, \dots, s_i^j]$ and $t_j \alphaequiv C'[t_1^j, \dots,
      t_i^j]$. Therefore we have a non-trivial context $C''$ which is strictly bigger than $C$
      and such that
      \[
        s \alphaequiv C''[s_1, \dots, s_{j-1}, s_1^j, \dots, s_i^j, s_{j+1}\dots, s_k]
        \ \ \text{and}\ \
        t \alphaequiv C''[t_1, \dots, t_{j-1}, t_1^j, \dots, t_i^j, t_{j+1}\dots, t_k].
      \]
      If the two right-hand sides are equal, then we are done. Otherwise, we continue to pick 
      Now either the two right-hand sides are equal or we continue as before. By the finiteness
      of $s$ and $t$, this eventually yields $s \alphaequiv t$.

      \section{Details for~\Cref{sec:learningRNNA}}
      \mypar{Proof of~\Cref{lem:correctEverything}}
      Clearly, if the learner $\learn_{\textsf{bar}}$ terminates, then it has inferred a correct bar automaton for the unknown bar language $L_\teach$. Thus, we only need to establish $\learn_{\textsf{bar}}$'s query complexity. As indicated above, the key observation is that all answers the internal learner \learn receives from the \tass correspond to answers of a teacher for the regular language $L_\teach\restriction \barNames_0$; that is,
\begin{enumerate}
\item\label{claim-1} If \learn asks a membership query $w\in
  \barAs_0/\barAw_0/\bartree[\barNamess_0](\Sigma)$, 
  the \tass answers `Yes' iff $w\in L_\teach\restriction \barNames_0$.
\item\label{claim-2} If \learn asks an equivalence query with hypothesis $\H$ (a bar automaton
  over $\barNames_0$), then the answer of the \tass (if any) is an element of the symmetric difference $L(\H) \oplus (L_\teach\restriction \barNames_0)$. 
\end{enumerate}
Note that the \tass might not answer an equivalence query by \learn at all: if $\H$ satisfies
$L_\alpha(\H)=L_\teach$, then $\learn_{\textsf{bar}}$ successfully terminates without running $\learn$
to completion. The claimed complexity bound for $\learn_{\textsf{bar}}$ is then immediate: Since
$\learn_{\textsf{bar}}$ simply forwards each of $\learn$'s membership and equivalence queries to
\teach, the total number of $\learn_{\textsf{bar}}$'s queries is at most $M(L_\teach\restriction
\barNames_0)$ and $E(L_\teach\restriction \barNames_0)$, respectively. It remains to prove the
statements in \labelcref{claim-1,claim-2}.

\smallskip\noindent
\emph{Proof of \cref{claim-1}.} Since a membership query $w$ by \learn is a bar input over $\barNames_0$, we have $w\in L_\teach\restriction \barNames_0$ iff $w\in L_\teach$.
Therefore \teach's answer to the query `$w\in L_\teach$?', which the \tass forwards to \learn, is also a correct answer for \learn's query `$w\in L_\teach\restriction \barNames_0$?'.      

\smallskip\noindent \emph{Proof of \cref{claim-2}}. Suppose that $\H$ is an incorrect hypothesis for $L_\teach$, that is, $L_\alpha(\H)\neq L_\teach$, so that \teach returns an element $w_\teach$ of the symmetric difference $L_\alpha(\H) \oplus L_\teach$ to $\learn_{\textsf{bar}}$.
Let $w\alphaequiv w_\teach$ be the $\alpha$-equivalent bar input over $\barNames_0$ chosen by the~\tass in step $\mathsf{1}$. We consider two cases:
\begin{itemize}
\item Case 1: $w_\teach\in L_\teach\setminus L_\alpha(\H)$. Then $w\in L_\teach\setminus L_\alpha(\H)$ because both $L_\teach$ and $L_\alpha(\H)$ and hence $L_\teach\setminus L_\alpha(\H)$ are closed under $\alpha$-equivalence. Since $w\not\in L_\alpha(\H)$, there exists no $w'\in L(\H)$ such that $w'\alphaequiv w$, so the \tass returns $w$ to \learn after step $\mathsf{2}$. Then $w\in (L_\teach\restriction \barNames_0)\setminus L(\H)$, in particular $w\in  L(\H)\oplus (L_\teach\restriction \barNames_0)$ as claimed.
\item Case 2: $w_\teach\in L_\alpha(\H)\setminus L_\teach$. Then $w\in L_\alpha(\H)\setminus L_\teach$, analogous to Case 1. Since $w\in L_\alpha(\H)$, there exists $w'\in L(\H)$ such that $w'\alphaequiv w$. The \tass picks one such $w'$ in step $\mathsf{2}$ and returns it to \learn. Note that $w'\not\in L_\teach$ because $w\not\in L_\teach$ and $L_\teach$ is closed under $\alpha$-equivalence. Therefore $w'\in L(\H)\setminus (L_\teach\restriction \barNames_0)$, whence $w'\in L(\H)\oplus (L_\teach\restriction \barNames_0)$ as claimed.
\end{itemize}   

      \mypar{Proof of~\Cref{prop:complTAS1}}
      We describe an algorithm to find a bar input $w$ over $\barNames_0$ that is $\alpha$-equivalent to the given counterexample $w_\teach$.

      For bar word and tree automata, turn $w_\teach$ into its De Bruijn normal form and then replace all indices by letters from $\barNames_0$ to get an $\alpha$-equivalent bar input over $\barNames_0$. Suitable replacements can be found by using a list of \emph{constraints},
      that is, for each index in the normal form a list of letters and indices to which the corresponding name must not be equal.
      The constraints are computed incrementally from the end of the bar input in polynomial time.

      For bar Büchi automata, first build a bar automaton over $\barNames_1=\setw{\alpha\in \barNames}{\text{$\alpha$ occurs in $w_\teach$}}$
      accepting just the counterexample $w_\teach=uv^\omega$ with $\mathcal{O}(\card{uv})$ states; note that $\card{\barNames_1} \leq \card{uv}$.
      Then construct the closure of that automaton w.r.t.\ the alphabet $\barNames_0\cup\barNames_1$. It accepts all $w\in (\barNames_0 \cup \barNames_1)^\omega$ such that $w \alphaequiv w_\teach$ and
      needs $\mathcal{O}(\card{uv} \cdot 2^{(\card{\barNamess_0}+\card{\barNamess_1})\cdot (\log(\card{\barNamess_0}+\card{\barNamess_1}) + 1)})$ states (\Cref{prop:barautclosure}).
      Next, restrict the input alphabet of the closure to $\barNames_0$ by removing all $\barNames_1\setminus \barNames_0$-transitions. 
      The resulting bar automaton $\A$ accepts the language $L(\A)=\setw{w \in \barAw_0}{w \alphaequiv w_\teach}$. Thus, it remains to look for some
      bar input $w$ in $L(\A)$. This, however, is standard: Perform a depth-first search to detect a cycle in $\A$ (with label $v_1$) that contains a final state
      and is reachable from the initial state (via some path with label $u_1$). Then $w=u_1v_1^\omega$ is an ultimately periodic bar string contained in $L(\A)$.
      Overall, this algorithm uses space linear in $\card{uv}$ and exponential in $\card{\barNames_0}$ and $\card{\barNames_1}$. 
      \mypar{Proof of~\Cref{prop:complTA}}
      We describe an algorithm that computes, for a given bar input $w$ over $\barNames_0$ and hypothesis $\H$, an $\alpha$-equivalent $w'\alphaequiv w$ accepted by $\H$ (if it exists). 
      Take a bar automaton~$\A$ for the closure of $\{w\}$ with
      $\mathcal{O}(\card{w} \cdot 2^{\card{\barNamess_0}\cdot (\log(\card{\barNamess_0}) + 1)})$ states (\Cref{prop:barautclosure}); thus $L(\A)=\setw{w'}{w' \alphaequiv w}$. Then search for a bar input contained in the intersection
      $L(\A) \cap L(\H)$, where a bar automaton for the intersection is formed via a standard product construction. 
      This requires space linear in $\card{w}$ and the number of states of $\H$, and exponential in 
      $\card{\barNames_0}$.
\end{document}